%% file: arxiv.tex
\documentclass[runningheads]{llncs}
 
\usepackage[mobile]{eccv}

\usepackage{eccvabbrv}
\usepackage{lmodern}
\usepackage{graphicx}
\usepackage{wrapfig}
\usepackage{microtype}
\usepackage[svgnames,table]{xcolor}
\usepackage[T1]{fontenc}
\usepackage[utf8]{inputenc}
\usepackage{pifont}
\usepackage{booktabs,makecell,multicol,multirow}
\usepackage{amssymb,amsfonts,amsmath}
\usepackage[percent]{overpic}
\usepackage[accsupp]{axessibility}  
\usepackage[svgnames]{xcolor}
\usepackage{tikz}
\usetikzlibrary{positioning, calc, fit, arrows.meta, decorations.pathreplacing, backgrounds, shapes.geometric}
\tikzset{
    img/.style={inner sep=0,anchor=center},
    mylabel/.style={anchor=north,yshift=.75cm,align=center,text depth=.75cm,font=\small},
}
\usetikzlibrary{spy}
\usepackage{pgfplots}
\pgfplotsset{compat=1.17}
\usepackage{algorithmic}
\usepackage{algorithm}
\usepackage{svg}
\usepackage{caption}
\usepackage{rotating}

\usepackage[breaklinks,colorlinks,citecolor=eccvblue]{hyperref}
\usepackage{orcidlink}

\definecolor{lightgray}{gray}{0.85}

\newcommand{\bbR}{\mathbb{R}}
\renewcommand{\Re}{\mathbb{R}}

\newcommand{\cD}{\mathcal{D}}

\newcommand{\cI}{\mathcal{I}}
\newcommand{\cN}{\mathcal{N}}

\newcommand{\cR}{\mathcal{R}}
\newcommand{\cS}{\mathcal{S}}
\renewcommand{\cS}{{S}}
\newcommand{\cT}{\mathcal{T}}
\renewcommand{\cT}{{T}}
\newcommand{\cU}{\mathcal{U}}

\renewcommand{\le}{\leqslant}
\renewcommand{\leq}{\leqslant}
\renewcommand{\ge}{\geqslant}

\def\A{\mat A}
\def\I{\mat I}

\def\R{\mat R}

\def\W{\mat W}

\def\b{\bs b}

\def\n{\bs n}
\def\p{\bs p}
\def\q{\bs q}
\def\s{\bs s}
\def\u{\bs u}
\def\v{\bs v}

\def\x{\bs x}
\def\y{\bs y}
\def\z{\bs z}

\newcommand{\Tpgd}{\cT_{\gamma}}
\newcommand{\Thqs}{\cT_{\rho}}
\newcommand{\Tadmm}{\cT_{\alpha}}
\newcommand{\Tred}{\cT_{\gamma,\lambda}}

\newcommand{\Dctr}{\cD}
\newcommand{\Dsig}{\Dctr_{\sigma}}
\newcommand{\Sctr}{\cS_{\mathrm{ctr}}}
\renewcommand{\Sctr}{\cS}
\newcommand{\Rctr}{\cR_{\mathrm{ctr}}}
\renewcommand{\Rctr}{\cR}

\newcommand{\etap}{\eta_{\p}}
\newcommand{\ttheta}{\tilde{\theta}}

\newcommand{\mat}[1]{\mathbf{#1}}
\newcommand{\bs}[1]{\boldsymbol{#1}}

\newcommand{\Lspline}{\bs \psi}
\newcommand{\Qspline}{\bs \phi}
\newcommand{\lspline}{\psi}
\newcommand{\qspline}{\phi}
\newcommand{\LsplineScal}{\bs{\s_{\nu}}}
\newcommand{\lsplineScal}[1][]{s_{\nu_{#1}}}
\newcommand{\ScalFunc}{\bs \nu}
\newcommand{\Poten}{\varphi}
\newcommand{\PotenScal}{\varphi_\sigma}

\newcommand{\norm}[2][]{\left\lVert #2\right\rVert_{#1}}

\newcommand{\BestCol}[1]{\textcolor{Teal}{#1}}
\newcommand{\runupCol}[1]{\textcolor{orange!90!black}{#1}}

\newcommand{\cmark}{\textcolor{Green}{\ding{51}}} 
\newcommand{\xmark}{\textcolor{Red}{\ding{55}}} 

\newcommand{\lenSR}{0.04\textwidth}
\newcommand{\lenBlur}{0.06\textwidth}
\newcommand{\Rtwo}{0.47\columnwidth}
\newcommand{\Rthree}{0.28\columnwidth}
\newcommand{\RfourC}{0.25\columnwidth}
\newcommand{\Rfour}{0.24\columnwidth}

\DeclareMathOperator{\prox}{\mathrm{prox}}
\makeatletter
\def\@fnsymbol#1{\ensuremath{\ifcase#1\or \dagger\or \ddagger\or
        \mathsection\or \mathparagraph\or \|\or **\or \dagger\dagger
        \or \ddagger\ddagger \else\@ctrerr\fi}}
\makeatother
\begin{document}

\title{Stabilizing Deep Reconstruction Operators with Contractive Anchoring}


\author{Arghya Sinha\inst{1}\orcidlink{0009-0005-7745-1082}\thanks{Corresponding Author} \and
Trishit Mukherjee\inst{1}\orcidlink{0009-0000-1849-9290} \and
Kunal N. Chaudhury\inst{1}\orcidlink{0000-0002-8136-605X}}

\authorrunning{A.~Sinha et al.}

\institute{Indian Institute of Science, Bengalore, India \\
\email{\{arghyasinha,trishitm,kunal\}@iisc.ac.in}
}

\maketitle

\begin{abstract}
Pretrained deep denoisers can be used to solve a wide range of model-based image reconstruction tasks via Plug-and-Play (PnP) and Regularization-by-Denoising (RED) algorithms, without retraining per task. These denoisers are trained only for single-step denoising. Using them as Image Reconstruction (IR) regularizers in an iterative process can destabilize reconstruction. A common failure mode is the peak-and-collapse behaviour: metrics such as PSNR improve for early iterations and then abruptly degrade, making these algorithms unreliable in practice. We propose a data-driven stabilization framework that (i) formalizes this instability of any IR operator through a local quantity and (ii) prevents collapse by regularizing this quantity adaptively, requiring no retraining or modification of the given pretrained network. Our key idea is to control the potentially unstable IR operator with a contractive operator whose stable iterates act as an anchor and prevent collapse. We further introduce an efficient family of trainable contractive operators that serve as strong anchors while remaining lightweight. Extensive experiments across proximal algorithms, denoiser architectures, noise levels, and imaging tasks show consistent, collapse-free performance and improved reliability of PnP and RED reconstruction.
  \keywords{Model-Based Reconstruction \and Plug-and-Play \and Regularization by Denoising \and Deep Denoisers \and Stability \and Contractive Operator.}
\end{abstract}

\section{Introduction}
\label{sec:intro}

The problem of image reconstruction (IR) from noisy linear measurements arises in tasks such as deblurring, super-resolution, magnetic resonance imaging, and computed tomography~\cite{bouman2022foundations}. 
Image reconstruction is commonly formulated as the regularized least-squares problem
\begin{equation}
\label{eq:VarProb}
\min_{\x \in \bbR^n} f(\x) + g(\x),\qquad f(\x) = \frac{1}{2}\!\norm[]{\A\x-\b}^2,
\end{equation}
where $\b$ is the measurement of the ground-truth image $\bar{\x}\in\bbR^n$ under the degradation model $\b=\A\bar{\x}+\n$. The forward operator $\A\in\bbR^{m\times n}$ is assumed known, and $\n$ denotes measurement noise. The data-fidelity term $f$ enforces consistency with the measurement model, while the regularizer $g$ incorporates prior information about the image.
In practice, however, designing an explicit regularizer $g$ that yields high-quality reconstructions is often difficult. 
Plug-and-Play (PnP)~\cite{venkatakrishnan2013pnp} and Regularization-by-Denoising (RED)~\cite{romano2017little,reehorst2018regularization} avoid explicit regularizer design by incorporating a pretrained Gaussian denoiser into standard proximal solvers such as Proximal Gradient Descent (PGD)~\cite{beck2017book} and Half-Quadratic Splitting (HQS)~\cite{geman1995nonlinear,zhang2021plug}. The denoiser acts as an implicit prior and has shown strong performance across diverse IR problems~\cite{zhang2021plug,hurault2022gradient,cohen2021has}.

In PnP, the denoiser directly parametrizes the fixed-point iteration associated with the chosen proximal algorithm. For example, the PnP-PGD iteration is
\begin{equation}
\label{eq:pnp}
\x_{k+1} = \cD\!\left(\x_k - \gamma \nabla f(\x_k)\right),
\end{equation}
In contrast, RED uses the denoiser $\cD$ to define a Laplacian-type regularizer $g$ with gradient $\nabla g=\cI-\cD$. The corresponding gradient-descent iteration is
\begin{equation}
\label{eq:red-i}
\x_{k+1} = \x_k - \gamma\big(\nabla f(\x_k) + \lambda\nabla g(\x_k)\big),
\end{equation}
where $\lambda>0$ is a regularization weight and $\gamma>0$ is a step size.

Since the denoiser is decoupled from the measurement operator $\A$, the same pretrained denoiser $\cD$ can be reused across different inverse problems without retraining. This modularity is one of the main practical advantages of the PnP framework: the forward model enters only through the data-fidelity step, while the image prior is provided by the denoiser.
\begin{figure}[t]
    \centering
        \subfloat{
        \includegraphics[width=\columnwidth]{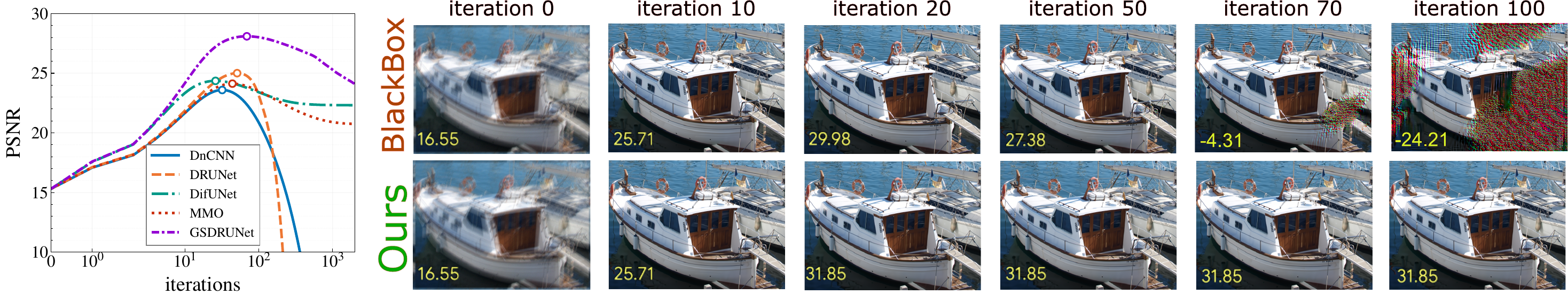}}
\caption{Illustration of peak-and-collapse instability in vanilla PnP across modern deep denoisers. Reconstruction quality improves during the early iterations but can deteriorate abruptly after reaching its peak. A detailed analysis of this behavior is provided in~\Cref{fig:rot} and~\Cref{tab:cbsd10_deblurring_vista,tab:cbsd10_superresolution_vista}.}
\label{fig:inst_deep_den}
\end{figure}
A key limitation of this method is that standard denoising networks are trained for \emph{single-step} inference. When such networks are used repeatedly within a PnP algorithm, they often behave unreliably and produce unpredictable reconstructions. This has motivated a large body of work on \emph{convergent} PnP, where additional constraints are imposed to guarantee convergence of the iterates~\cite{pesquet_learning_2021,hertrich_convolutional_2021,cohen_regularization_2021,hurault2022gradient,hurault_proximal_2022,nair2024averaged,goujon2023neural,goujon_learning_2024,pourya2025dealing}. These methods typically aim to control the Lipschitz behavior of PnP system, often by retraining the denoiser or imposing architectural constraints.

Broadly, two main approaches have emerged. In the first approach, a neural network is used to define an explicit regularizer, which is then optimized together with the model-based loss in an iterative manner~\cite{romano2017little,cohen2021has,reehorst2018regularization,cohen_regularization_2021,hurault2022gradient,hurault_proximal_2022,tan2024provably,goujon2023neural,goujon_learning_2024,pourya2025dealing}. Typically, the network is optimized alongside the proximal algorithm to ensure they work well together. For example,  CNN-based Laplacian regularizers are used in~\cite{hurault2022gradient,cohen_regularization_2021}. Similarly,~\cite{pourya2025dealing,goujon_learning_2024} use quadratic or weakly convex regularizers and train denoisers tailored to their specific iterative schemes. However, the denoiser is closely tied to the base algorithm and cannot simply be swapped for another pretrained model while still guaranteeing convergence. 
In the second approach, PnP is viewed as a dynamical system~\cite{pesquet_learning_2021,wei2024learning}, and its convergence is studied using ideas from fixed-point theory~\cite{bauschke2011convex}. A common strategy is to train denoisers from restricted operator classes that satisfy properties such as nonexpansivity or proximability~\cite{pesquet_learning_2021,hurault_proximal_2022,wei2024learning}. 
This restriction limits the use of existing powerful denoisers that were not designed or trained to satisfy these properties.

\paragraph{\normalfont\bfseries Motivation.}
In this work, we aim to leverage existing powerful pretrained denoisers without retraining or modifying them. Since these networks are not trained to satisfy the conditions required for convergence, their repeated use within an iterative reconstruction scheme can lead to unstable behavior. Empirically, we observe a consistent \emph{peak-and-collapse (PC)} pattern (see \Cref{fig:inst_deep_den}): \emph{the reconstruction quality improves during the early iterations, reaches a peak, and then deteriorates abruptly.}
 
This behavior occurs across diverse denoiser families, including CNNs, diffusion models, and transformers. Although a PnP system exhibiting PC can produce strong reconstructions when stopped near the peak (see \Cref{tab:cbsd10_deblurring_vista}), identifying this point reliably is difficult. We therefore adopt a weaker but practically meaningful notion of stability based on reconstruction quality over iterations, with the goal of maintaining stable PSNR beyond the peak without modifying the pretrained network.

 While early stopping at the peak is one way to achieve this stability, it is difficult in practice because identifying the peak accurately is not easy (see Appendix H). A related heuristic in~\cite{terris2024equivariant} improves this stability by applying random group transformations to the iterates. We instead propose a robust \emph{drop-in} mechanism with provable guarantees. More precisely, let 
\begin{equation}
\label{eq:vanilla_pnp}
\x_{k+1} = \cT(\x_k)
\end{equation}
denote the iterates generated by a possibly unstable IR operator $\cT:\Re^n\to\Re^n$. 
A major challenge in our setting is that we treat $\cT$ as a black box, accessible only through input-output evaluations, without assuming knowledge of its internal structure. Our goal is to design a stabilizer that is agnostic to the proximal solver, pretrained denoiser, and measurement model, while imposing no additional assumptions on $\cT$.

\paragraph{\normalfont\bfseries Contributions.}
We propose a practical, data-driven stabilization framework based on \emph{sequential averaging}~\cite{sabach_first_2017}. Sequential averaging is classical in bilevel optimization~\cite{sabach_first_2017,sabach-nonsmooth} and viscosity-based fixed-point methods for selecting a particular fixed-point when multiple fixed points exist~\cite{xu_viscosity_2004,attouch_viscosity_1996}. In our setting, sequential averaging combines a high-performing but potentially unstable black-box operator $\cT$ with a stable contractive anchor. Our main contributions are as follows:
\begin{enumerate}
    \item  We introduce the \emph{stability index} $\eta$ (\Cref{def:eta-index}), which quantifies the local expansiveness of the black-box operator around an anchor point. Using this index together with a contractive anchor, we develop a mechanism that provably controls $\eta$ whenever required.

    \item We derive \Cref{algo:vista}, which adaptively adjusts the amount of anchoring required to control $\eta$ and automatically stabilize the iterations near the peak. A key advantage is that the algorithm operates as a complete \emph{drop-in} mechanism and requires no additional parameter tuning.

    \item We propose a lightweight, noise-aware denoiser that is \emph{contractive by construction} while retaining expressiveness through trainable filters and spline nonlinearities. This design allows unconstrained training while preserving provable contractivity, yielding a strong reconstruction anchor for the stabilization module without compromising denoising quality.
\end{enumerate}

We evaluate the proposed stabilization framework across multiple proximal solvers (PGD, HQS, and ADMM), denoiser families (CNNs, diffusion models, and transformers), and imaging tasks. The results show consistent stabilization while preserving near-peak reconstruction quality.

\section{Stabilization Mechanism}
\label{sec:method}

\paragraph{\normalfont\bfseries Denoiser-driven IR operators.}
Before presenting our stabilization method, we briefly review the denoiser-driven fixed-point operators used in PnP and RED. These operators combine a pretrained denoiser with standard proximal algorithms for image reconstruction.

We begin with PnP-PGD, in which the denoiser is applied after a gradient step on the data-fidelity term $f$:
\begin{equation}
\label{eq:pnppgd}
(\mathrm{PnP}\mbox{-}\mathrm{PGD})\quad 
\x_{k+1} = \Tpgd(\x_k),\qquad 
\Tpgd := \cD \circ \big(\cI - \gamma \nabla f \big),
\end{equation}
where $\cI$ is the identity operator on $\Re^n$ and $\circ$ denotes composition. The corresponding operators for PnP-HQS~\cite{zhang2021plug}, PnP-ADMM~\cite{sreehari2016plug,ryu2019plug}, and RED-GD are
\begin{align}
\hspace{-1em} (\mathrm{PnP}\mbox{-}\mathrm{HQS})\quad 
\x_{k+1} &= \Thqs(\x_k),\quad \Thqs := \cD \circ \prox_{\rho f}, \label{eq:pnphqs}\\
\hspace{-1em}  (\mathrm{PnP}\mbox{-}\mathrm{ADMM})\quad 
\x_{k+1} &= \Tadmm(\x_k),\quad
\Tadmm := \frac{1}{2}\big( \cI + (2\cD - \cI)\circ(2\,\prox_{\alpha f} - \cI) \big), \label{eq:pnpadmm}\\
\hspace{-1em} (\mathrm{RED}\mbox{-}\mathrm{GD})\quad 
 \x_{k+1} &= \Tred(\x_k), \quad 
\Tred := (1-\gamma\lambda)\cI - \gamma\big(\nabla f - \lambda\cD\big), \label{eq:redgd}
\end{align}
where $\rho,\alpha,\gamma,\lambda>0$ are tunable parameters. Together with the choice of denoiser and solver, these parameters strongly influence the behavior of the iterates; see \Cref{fig:rot}. We refer to the operators in~\eqref{eq:pnppgd}–\eqref{eq:redgd} as \textbf{vanilla} operators and denote them generically by $\cT$.

\paragraph{\normalfont\bfseries Local stability measure.}
Since $\cT$ is treated as a black box, its instability cannot be detected from its internal structure.
We therefore seek a surrogate quantity that can be evaluated during the iterations and used to interpret instability. Since a black box
provides only input-output evaluations, the surrogate should depend only on this information. This motivates a quantity that captures how $\cT$ acts around a reference point $\p$. Specifically, we define the following stability index.
\begin{definition}[stability index]
\label{def:eta-index}
Let $\cT:\Re^n\to\Re^n$ be an operator, and let $\p\in\Re^n$ be a reference point. For any $\x\in\Re^n$, the stability index of $\cT$ at $\x$ relative 
to $\p$ is defined as
\begin{equation}
\label{eq:eta-function}
\etap (\x,\cT)
= 
\begin{cases}
    \dfrac{\|\cT(\x)-\p\|}{\|\x-\p\|}, \quad &\x\neq\p\\
    0, \quad &\x = \p.
\end{cases}
\end{equation}
\end{definition}

This quantity measures the relative expansion or contraction induced by $\cT$ at $\x$ with respect to $\p$. Larger values of $\etap(\x,\cT)$ indicate greater local expansiveness and hence a higher risk of instability. The reference point $\p$ serves as a baseline reconstruction that the stabilized iterations seek to preserve.


We regard the system as locally unstable at $\x$ relative to $\p$ when $\etap(\x,\cT)$ becomes large. Accordingly, we view \emph{stabilization} as the task of controlling the stability index along the iterates. Thus we need to develop a tool that can reduce the $\etap$ whenever \emph{required}. Our approach achieves this by blending the black-box operator $\cT$ with a contractive anchor.

\paragraph{\normalfont\bfseries Adaptive anchoring.}
Let $\cS:\Re^n\to\Re^n$ denote the contractive anchor. Recall that $\cS$ is a $\kappa$-contraction with $\kappa<1$ if
\begin{equation}
\|\cS(\x)-\cS(\y)\|\le \kappa\|\x-\y\| \qquad (\x,\y \in \Re^n).
\end{equation}
A contraction has a unique fixed point, which we denote by $\p$. We then define the averaged operator
\begin{equation}
\label{eq:Ttheta-def}
\cT_\theta := (1-\theta)\cT + \theta\cS \qquad (\theta\in[0,1]).
\end{equation}
The following lemma shows that averaging $\cT$ with the contraction $\cS$ strictly reduces the stability index measured relative to $\p$.

\begin{lemma}
\label{lem:eta-reduces}
Let $\cS$ be a $\kappa$-contraction with unique fixed point $\p$, and define $\cT_\theta$  by~\eqref{eq:Ttheta-def}. For any $\x$ with $\etap\big(\x,\cT\big)>\kappa$ and any $\theta\in(0,1]$,
\begin{equation}
\label{eq:strict-eta-reduce}
\etap\big(\x,\cT_{\theta}\big)<\etap\big(\x,\cT\big).
\end{equation}
\end{lemma}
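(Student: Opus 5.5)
The plan is to show directly that $\cT_\theta(\x)-\p$ splits into a part coming from $\cT$ and a part coming from $\cS$, and then apply the triangle inequality. First, since $\etap(\x,\cT)>\kappa\ge 0$, the case $\x=\p$ is excluded, because there the index is defined to be $0$. So we may assume $\x\neq\p$ and work with the ratio form of \Cref{def:eta-index}.

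Next, use $\cS(\p)=\p$ to write
\begin{equation*}
\cT_\theta(\x)-\p=(1-\theta)\big(\cT(\x)-\p\big)+\theta\big(\cS(\x)-\cS(\p)\big).
\end{equation*}
Taking norms and using the triangle inequality together with the $\kappa$-contractivity of $\cS$ gives
\begin{equation*}
\|\cT_\theta(\x)-\p\|\le(1-\theta)\|\cT(\x)-\p\|+\theta\kappa\|\x-\p\|.
\end{equation*}
Dividing by $\|\x-\p\|>0$ then yields
\begin{equation*}
\etap(\x,\cT_\theta)\le(1-\theta)\,\etap(\x,\cT)+\theta\kappa.
\end{equation*}

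Finally, the right-hand side is a convex combination of $\etap(\x,\cT)$ and $\kappa$. Since $\kappa<\etap(\x,\cT)$ and $\theta>0$, this convex combination equals $\etap(\x,\cT)-\theta\big(\etap(\x,\cT)-\kappa\big)$, which is strictly less than $\etap(\x,\cT)$. This gives \eqref{eq:strict-eta-reduce}. The case $\theta=1$ needs no separate treatment: there $\etap(\x,\cT_1)\le\kappa<\etap(\x,\cT)$.

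There is no substantial obstacle. The only points needing care are excluding $\x=\p$ through the hypothesis, and observing that the strict inequality comes from $\theta>0$ combined with $\etap(\x,\cT)>\kappa$, not from the triangle inequality, which may hold with equality.
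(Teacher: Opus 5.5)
Your proof is correct and follows the paper's argument: you use $\cS(\p)=\p$ to decompose $\cT_\theta(\x)-\p$, apply the triangle inequality and contractivity to get $\etap(\x,\cT_\theta)\le(1-\theta)\,\etap(\x,\cT)+\theta\kappa$, and obtain strictness from $\theta>0$ together with $\etap(\x,\cT)>\kappa$. You also note explicitly that the hypothesis rules out $\x=\p$, which justifies dividing by $\|\x-\p\|$; the paper leaves this point implicit.
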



Lemma~\ref{lem:eta-reduces} provides a way of reducing $\etap$ of the iterates. The next question is how strongly we should reduce $\etap$, and how to choose the corresponding weight $\theta$. From~\eqref{eq:Ttheta-def}, taking $\theta$ too large diminishes the influence of the black-box operator $\cT$ and makes the iteration dominated by the contraction $\cS$, which can reduce reconstruction quality. In the extreme case $\theta=1$, the update reduces to $\cS$, and the iterates converge toward its fixed point $\p$, which often provides a weaker reconstruction than the peak attained by $\cT$.

To stabilize the iteration without compromising reconstruction quality, we seek a precise relation between the mixing weight $\theta$ and the resulting stability index $\etap(\x,\cT_\theta)$. The following theorem shows that, for any target level $\xi>\kappa$, the stability index can be reduced to at most $\xi$ by choosing $\theta$ above an explicit threshold. This threshold depends on the current point $\x$ and is therefore adaptive. It is also tight, since the stability index equals $\xi$ when $\theta$ is set at the threshold.
\begin{algorithm}[t]
\caption{Stabilization mechanism}
\label{algo:vista}
\begin{algorithmic}[1]
\REQUIRE Black-box operator $\cT$, contractive anchor $\cS$, initialization $\x_0$.
\STATE Obtain the fixed point $\p$ of $\cS$.
\FOR{$k=0,1,2,\ldots$}
    \STATE Compute $\eta_k \gets \etap(\x_k,\cT)$ using \Cref{def:eta-index}.
    \IF{$\eta_k>1$}
        \STATE Compute $\ttheta(\x_k,1)$ and set $\theta_k \gets \ttheta(\x_k,1)$ (\Cref{eq:quadratic})
    \ELSE
        \STATE Set $\theta_k \gets 0$.
    \ENDIF
    \STATE Update $\x_{k+1}\gets \cT_{\theta_k}(\x_k)$.
\ENDFOR
\end{algorithmic}
\end{algorithm}

\begin{theorem}
\label{thm:theta-threshold}
Let $\cT:\Re^n\to\Re^n$ be an arbitrary operator, and let $\cS:\Re^n\to\Re^n$ be a $\kappa$-contraction with fixed point $\p$. For any $\xi>\kappa$ and any $\x\in\Re^n$ with $\etap(\x,\cT)>\xi$, there exists a threshold $\ttheta=\ttheta(\x,\xi)$ with $0 < \ttheta < 1$ such that
\begin{equation*}
\etap(\x,\cT_\theta)\leqslant \xi
\end{equation*}
for every $\ttheta \leqslant \theta \leqslant 1$. Moreover, equality holds at $\theta=\ttheta$, i.e., $\etap (\x,\cT_{\ttheta})=\xi$.
\end{theorem}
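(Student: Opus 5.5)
Since $\etap(\x,\cT)>\xi>\kappa\ge 0$, we must have $\x\neq\p$; otherwise the index would be $0$. Set $r:=\|\x-\p\|>0$, $\u:=\cT(\x)-\p$ and $\v:=\cS(\x)-\p$. Because $\p=\cS(\p)$, the contraction property gives $\|\v\|=\|\cS(\x)-\cS(\p)\|\le\kappa r$, while the hypothesis gives $\|\u\|>\xi r$. The key observation is that the anchor point cancels in the averaging:
\begin{equation*}
\cT_\theta(\x)-\p=(1-\theta)\u+\theta\v .
\end{equation*}
So the problem reduces to the one-variable function
\begin{equation*}
q(\theta):=\|(1-\theta)\u+\theta\v\|^2=\theta^2\|\u-\v\|^2-2\theta\langle \u,\u-\v\rangle+\|\u\|^2 ,
\end{equation*}
and the statement becomes $q(\theta)\le\xi^2r^2$ for all $\theta\in[\ttheta,1]$, with equality at $\ttheta$. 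This quadratic is presumably the relation the paper labels \eqref{eq:quadratic}.

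The plan has three steps.
\begin{itemize}
\item[(i)] Record the endpoint values: $q(0)=\|\u\|^2>\xi^2r^2$ and $q(1)=\|\v\|^2\le\kappa^2r^2<\xi^2r^2$.
\item[(ii)] Note that $q$ is a convex quadratic. Its leading coefficient $\|\u-\v\|^2$ is strictly positive, since $\|\u\|>\|\v\|$ forces $\u\neq\v$.
\item[(iii)] Consider $h(\theta):=q(\theta)-\xi^2r^2$. It is a strictly convex quadratic with $h(0)>0$ and $h(1)<0$. By the intermediate value theorem it has a root in $(0,1)$. Since $h(1)<0$, the point $1$ lies strictly between the two real roots of $h$, so exactly one root lies in $(0,1)$. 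Define $\ttheta$ to be this root, which is the smaller root of $h$:
\begin{equation*}
\ttheta=\frac{\langle \u,\u-\v\rangle-\sqrt{\langle \u,\u-\v\rangle^2-\|\u-\v\|^2\,(\|\u\|^2-\xi^2r^2)}}{\|\u-\v\|^2}.
\end{equation*}
\end{itemize}

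To finish, use convexity: $h$ is nonpositive exactly on the closed interval between its two roots. That interval contains $[\ttheta,1]$, because $\ttheta$ is the smaller root and $h(1)<0$. Hence $\etap(\x,\cT_\theta)=\sqrt{q(\theta)}/r\le\xi$ for every $\theta\in[\ttheta,1]$. At $\theta=\ttheta$ we have $h(\ttheta)=0$, which gives exactly $\etap(\x,\cT_{\ttheta})=\xi$. The strict inequalities $0<\ttheta<1$ follow from $h(0)>0$ and $h(1)<0$.

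The main point needing care is the sign-interval argument: we need the whole interval $[\ttheta,1]$, not just the existence of some root. This is handled by convexity together with $h(1)<0$, and it rules out the possibility that $h$ becomes positive again before reaching $1$. No earlier result is needed beyond the contraction property evaluated at the fixed point. Lemma~\ref{lem:eta-reduces} is in fact a special case: by convexity, $q(\theta)<\max(q(0),q(1))=q(0)$ for every $\theta\in(0,1]$.
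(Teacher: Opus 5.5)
Your proof is correct, and it takes a different route from the paper's. The paper works directly with $\psi(\theta)=\etap(\x,\cT_\theta)$. It obtains some $\ttheta\in(0,1)$ with $\psi(\ttheta)=\xi$ from the intermediate value theorem. Then, for $\zeta\in[\ttheta,1]$, it rewrites $\cT_\zeta=(1-\rho)\cT_{\ttheta}+\rho\cS$ with $\rho=(\zeta-\ttheta)/(1-\ttheta)$. Finally, it applies the bound $\etap(\x,\cT_\theta)\le(1-\theta)\etap(\x,\cT)+\theta\kappa$ from the proof of Lemma~\ref{lem:eta-reduces}, with $\cT_{\ttheta}$ playing the role of $\cT$.

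That argument uses only the triangle inequality. It therefore holds in any norm and for whichever crossing the intermediate value theorem produces. However, it neither singles out a particular crossing nor gives a formula for it.

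You instead square the index, use the inner-product structure to reduce everything to the explicit convex quadratic $h$, and read off the root ordering $0<\ttheta<1<\theta_+$. This yields several things at once: the interval property, uniqueness of the crossing in $(0,1)$, and the closed form for $\ttheta$. It also shows $\etap(\x,\cT_\theta)>\xi$ for $\theta\in[0,\ttheta)$, so $\ttheta$ really is the smallest admissible weight. The paper asserts that minimality in the text after the theorem and relies on it in Algorithm~\ref{algo:vista} (``smallest root''), but its proof does not establish it. You also recover Lemma~\ref{lem:eta-reduces} as a by-product rather than using it as the key tool.

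The price is that your argument depends on the Euclidean norm, which is the paper's setting anyway.
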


The target level $\xi$ determines how strongly the stability index is controlled. A smaller value of $\xi$ generally requires a larger mixing weight. Since increasing $\theta$ in~\eqref{eq:Ttheta-def} reduces the contribution of the black-box operator $\cT$ and may lower reconstruction quality, the threshold $\ttheta(\x,\xi)$ is the natural choice: it is the smallest weight that guarantees
\begin{equation*}
\etap(\x,\cT_\theta)
\leqslant
\xi.
\end{equation*}

The question is whether controlling the stability index in this way is sufficient to prevent divergence. This is not immediate because the black-box operator $\cT$ is not assumed to be nonexpansive or even globally Lipschitz, so standard boundedness arguments do not apply. Nevertheless, the following corollary shows that eventually enforcing the target level $\xi=1$ guarantees bounded iterates and thereby prevents the divergent behavior observed in~\Cref{fig:inst_deep_den}.

\begin{corollary}
\label{cor:bdd-threshold}
Fix $N\geqslant 1$. Suppose that for $k\geqslant N$, the weight $\theta_k\in[0,1]$ satisfies
\begin{equation*}
\theta_k \in
\begin{cases}
[\ttheta(\x_k,1),1],
& \text{if } \etap(\x_k,\cT)>1,\\
0, & \text{otherwise}.
\end{cases}
\end{equation*}
Then the iterates $\{\x_k\}$ generated by $\x_{k+1} = \cT_{\theta_k}(\x_k)$ are bounded.
\end{corollary}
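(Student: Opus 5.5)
The plan is to show that, from step $N$ onward, the distance to the anchor's fixed point never increases: $\|\x_{k+1}-\p\|\le\|\x_k-\p\|$ for every $k\ge N$. Boundedness then follows at once. Indeed, $\|\x_k-\p\|\le\|\x_N-\p\|$ for all $k\ge N$, and the finitely many iterates $\x_0,\dots,\x_N$ are trivially bounded. Note that no regularity of $\cT$ is needed, because every $\x_k$ is a well-defined point of $\Re^n$.

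Fix $k\ge N$ and split into cases according to the rule for $\theta_k$.

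\emph{Case 1: $\x_k=\p$, or $\etap(\x_k,\cT)\le 1$.} Then $\theta_k=0$ and $\x_{k+1}=\cT(\x_k)$. If $\x_k\neq\p$, the definition of $\etap$ gives $\|\cT(\x_k)-\p\|=\etap(\x_k,\cT)\|\x_k-\p\|\le\|\x_k-\p\|$. The degenerate point $\x_k=\p$ needs separate care, since $\etap(\p,\cT)=0$ by convention, yet $\cT(\p)$ may differ from $\p$. In that event $\|\x_{k+1}-\p\|=\|\cT(\p)-\p\|$ is a fixed finite number, but the monotone bound above could fail. I would handle this by replacing the monotone claim with the bound
\[
\|\x_k-\p\|\le R:=\max\{\|\x_N-\p\|,\ \|\cT(\p)-\p\|\}\qquad(k\ge N),
\]
proved by induction. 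If $\x_k\neq\p$, the distance does not increase. If $\x_k=\p$, the next distance equals $\|\cT(\p)-\p\|\le R$. The remaining case below also never increases the distance, so the induction closes.

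\emph{Case 2: $\etap(\x_k,\cT)>1$.} Then $\x_k\neq\p$ and $\theta_k\in[\ttheta(\x_k,1),1]$. Apply \Cref{thm:theta-threshold} with $\xi=1>\kappa$; its hypothesis $\etap(\x_k,\cT)>\xi$ holds. It yields $\etap(\x_k,\cT_{\theta_k})\le 1$, that is, $\|\cT_{\theta_k}(\x_k)-\p\|\le\|\x_k-\p\|$.

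The only real subtlety is the $\x=\p$ convention in \Cref{def:eta-index}. Handling it with the constant $R$ as above resolves it, so I expect no genuine obstacle. The substantive work is already contained in \Cref{thm:theta-threshold}, which I take as given.
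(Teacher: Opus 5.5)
Your argument is correct and matches the route the paper intends; the appendix gives no separate proof of this corollary, and the intended argument is the direct consequence of Theorem~\ref{thm:theta-threshold} with $\xi=1>\kappa$, namely $\|\x_{k+1}-\p\|\le\|\x_k-\p\|$ for $k\ge N$. Your extra care at the degenerate point $\x_k=\p$ is a genuine improvement: there the convention $\etap(\p,\cT)=0$ forces $\theta_k=0$, so the naive monotonicity fails whenever $\cT(\p)\neq\p$, and your bound $R=\max\{\|\x_N-\p\|,\|\cT(\p)-\p\|\}$ closes the induction correctly (state that bound from the outset rather than the monotone claim you open with).
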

Thus, the target level $\xi=1$ provides a sufficient condition. 
Motivated by \Cref{cor:bdd-threshold}, we construct the stabilization mechanism in \Cref{algo:vista} by setting $\xi=1$ and choosing the smallest admissible weight. At iteration $k$, we compute $\ttheta(\x_k,1)$ by solving
\begin{equation}
\label{eq:quadratic}
\etap(\x_k,\cT_\theta)^2=1
\end{equation}
and selecting the smallest root in $[0,1]$. This yields a closed-form expression for $\ttheta$ and avoids iterative tuning of $\theta_k$. In practice, as shown in~\Cref{fig:theta-evolution}, the resulting weights remain small, so the updates are still driven primarily by $\cT$ while the stability index remains controlled.

\begin{figure}[t]
\captionsetup[subfloat]{labelformat=empty,labelsep=none}
\centering
\subfloat{\includegraphics[width=\RfourC]{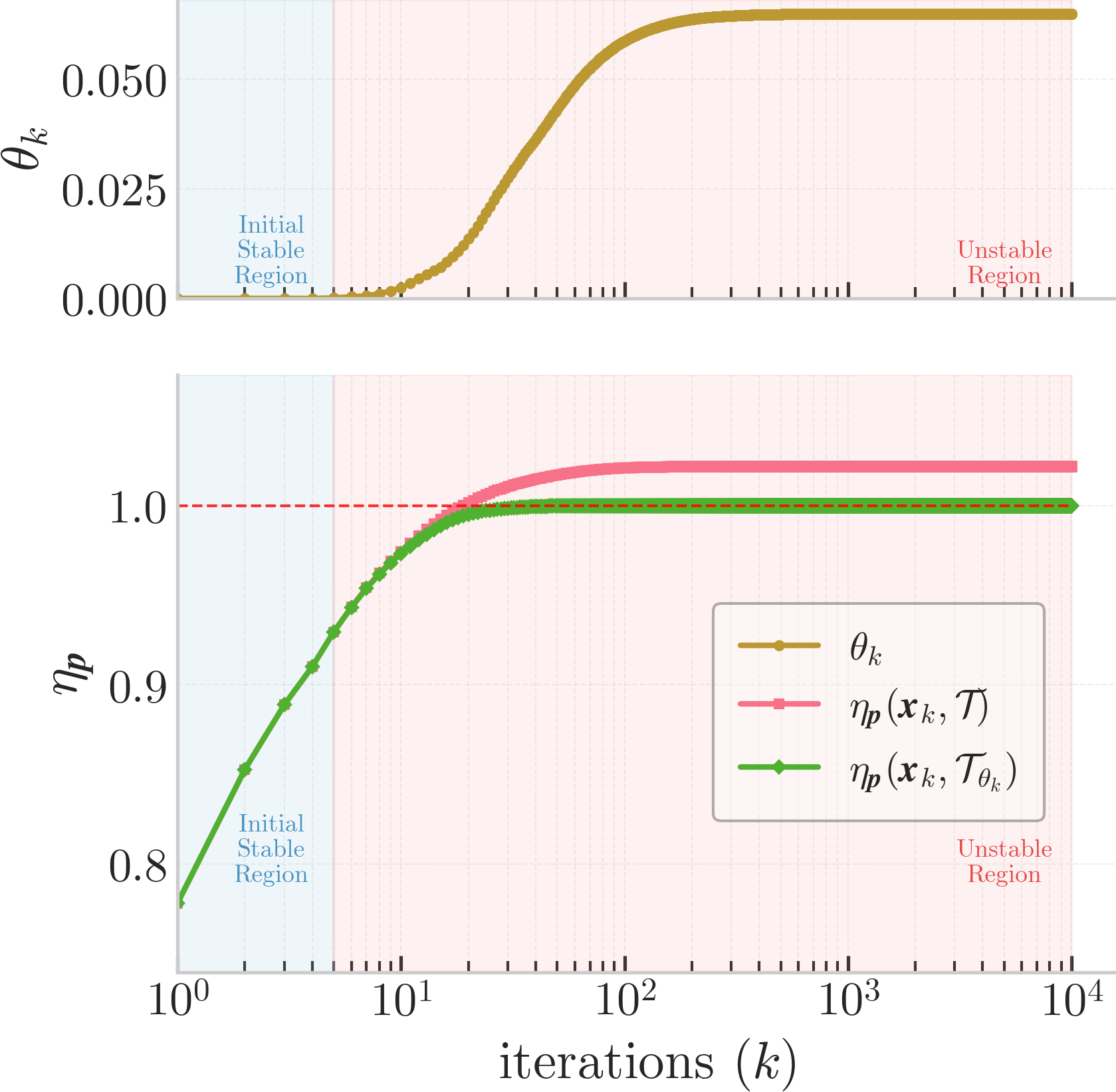}}\hfill
\subfloat{\includegraphics[width=\RfourC]{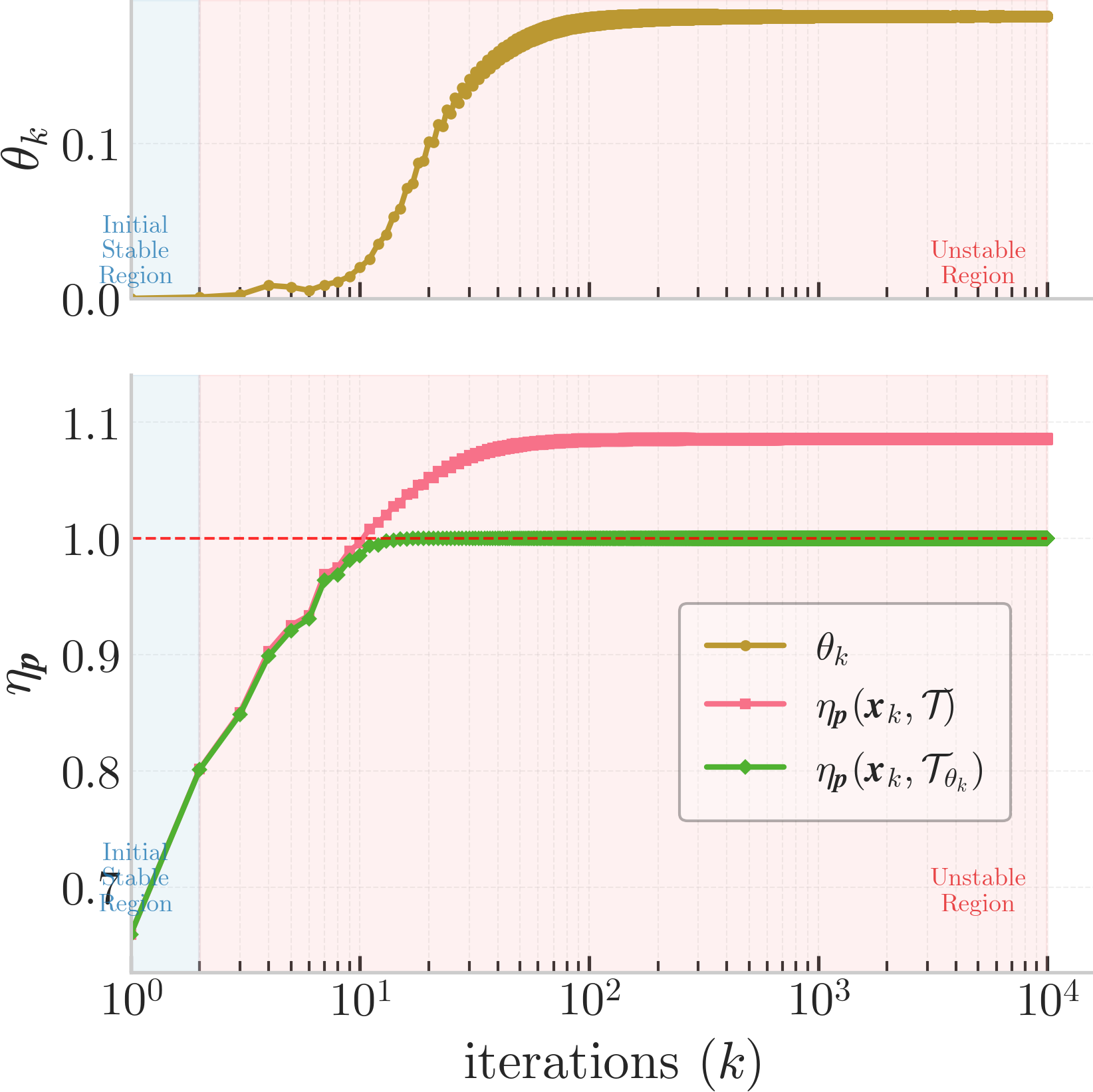}}\hfill
\subfloat{\includegraphics[width=\RfourC]{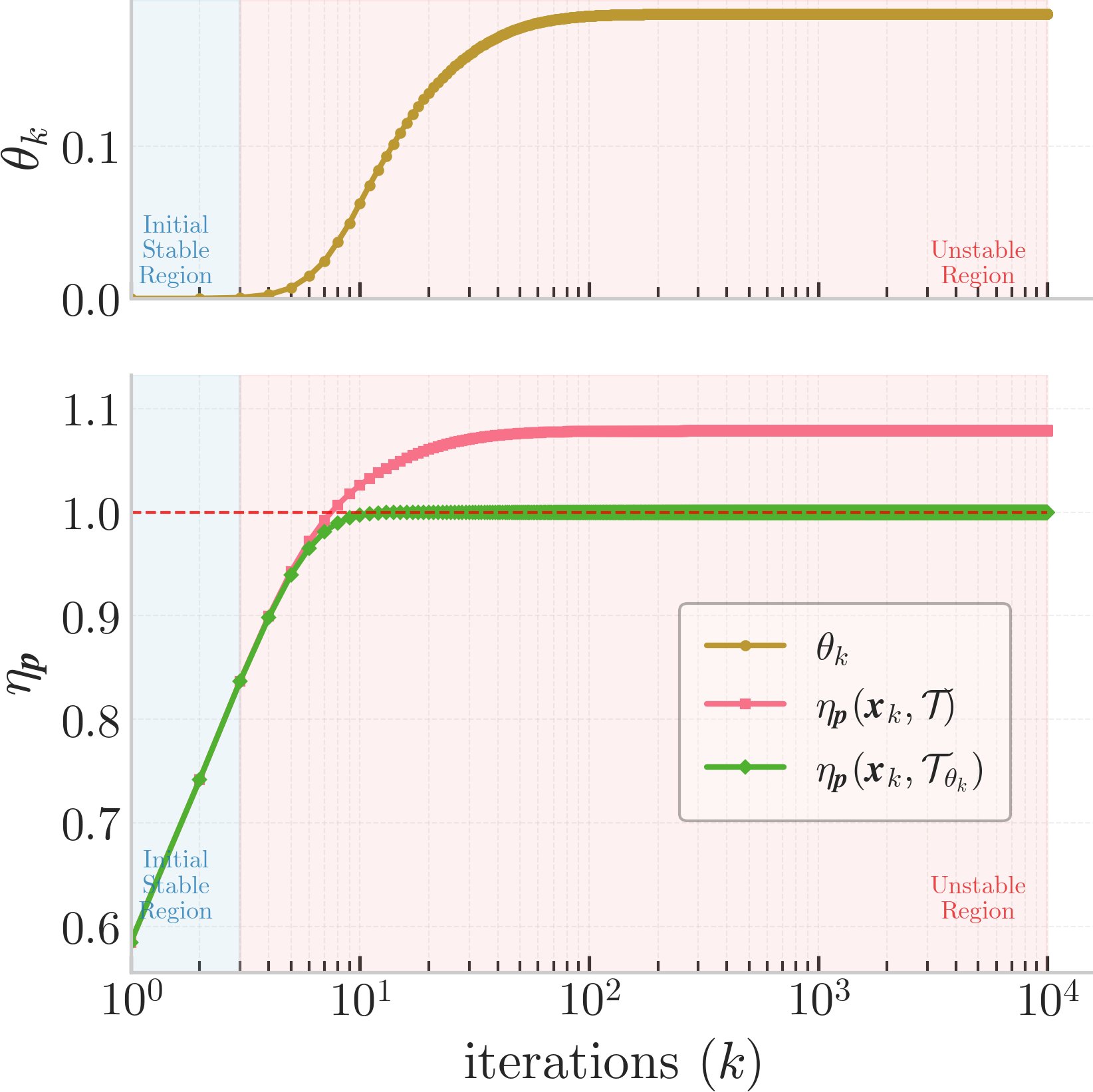}}\hfill
\subfloat{\includegraphics[width=\RfourC]{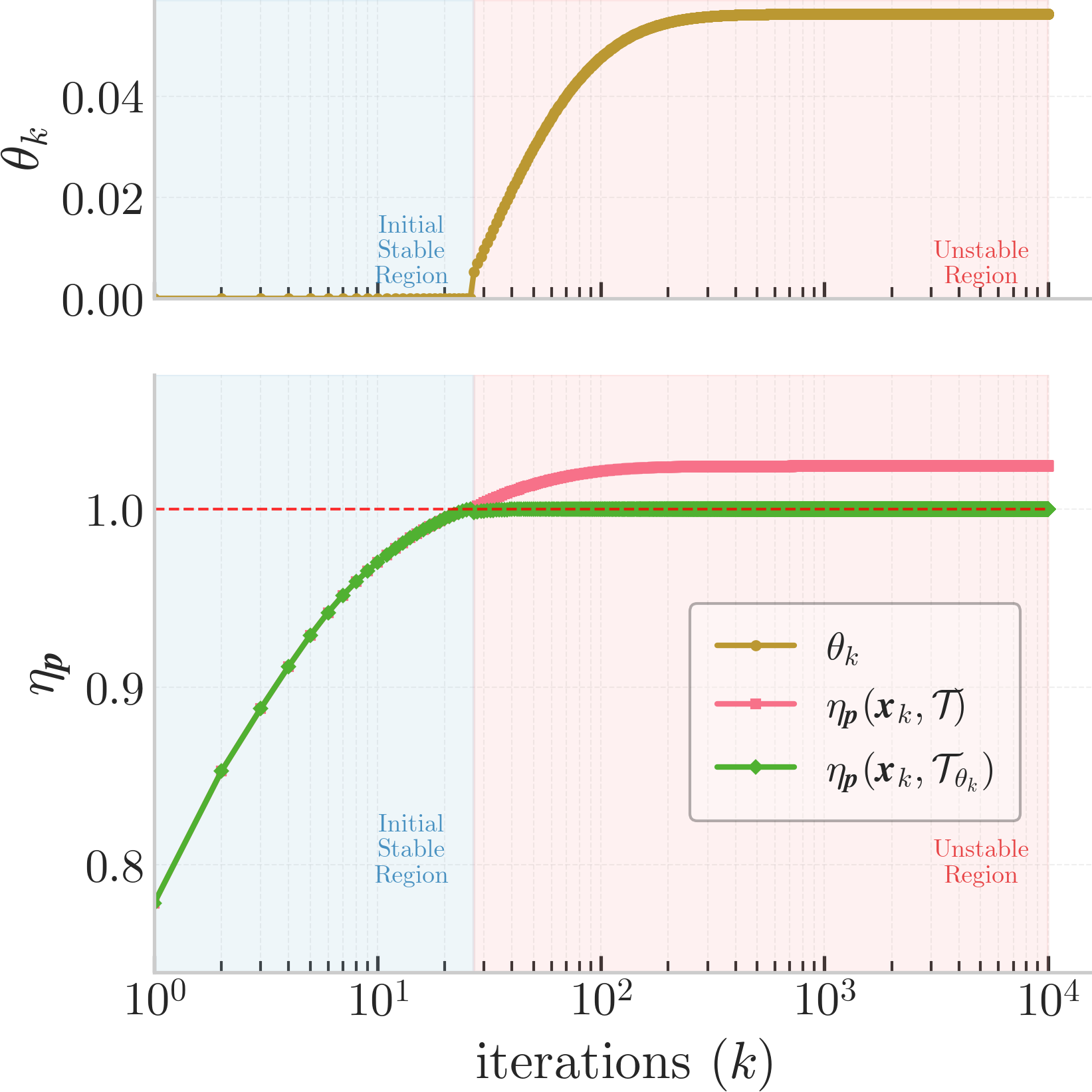}}\hfill
\subfloat{\includegraphics[width=\RfourC]{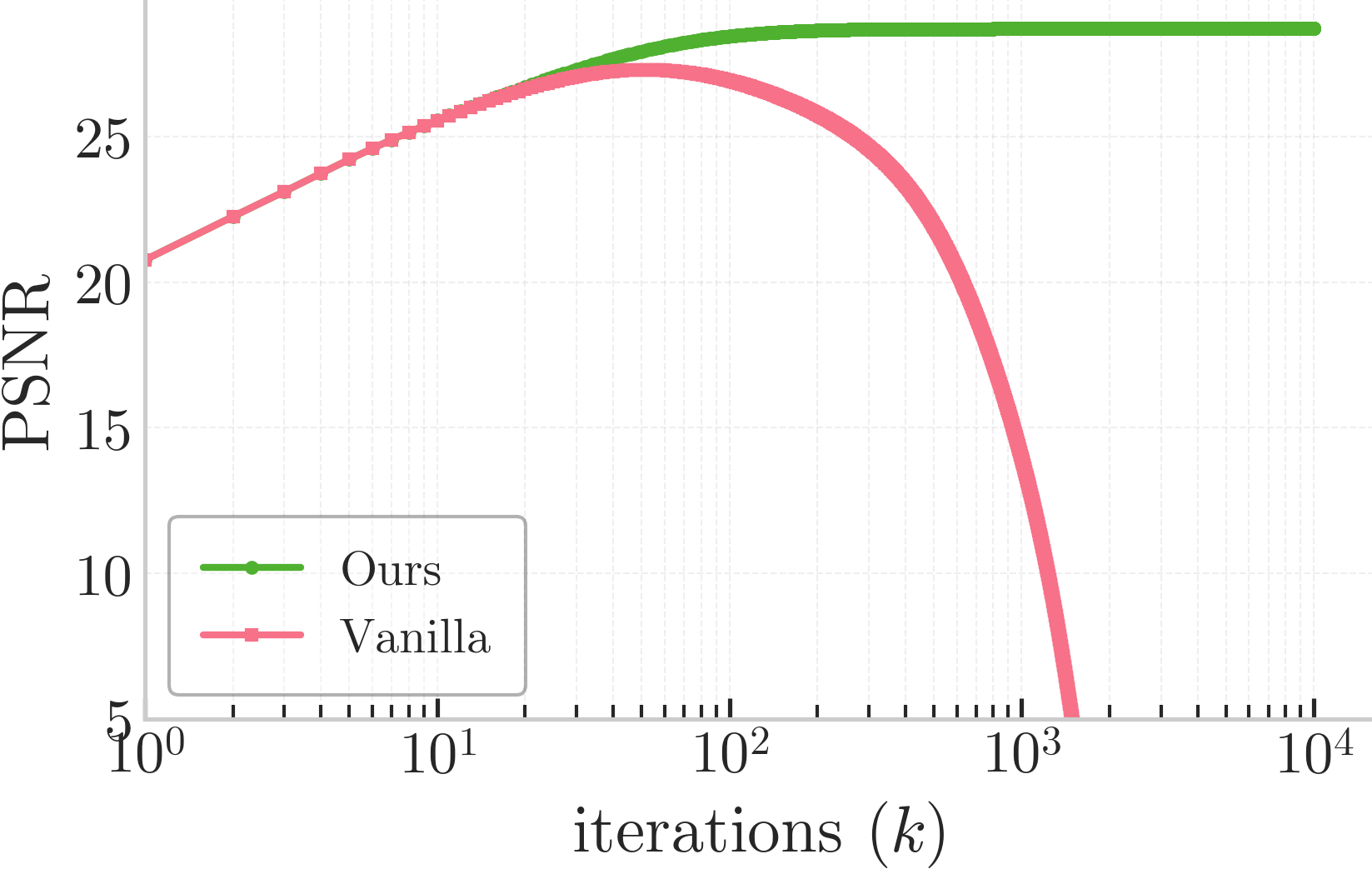}}\hfill
\subfloat{\includegraphics[width=\RfourC]{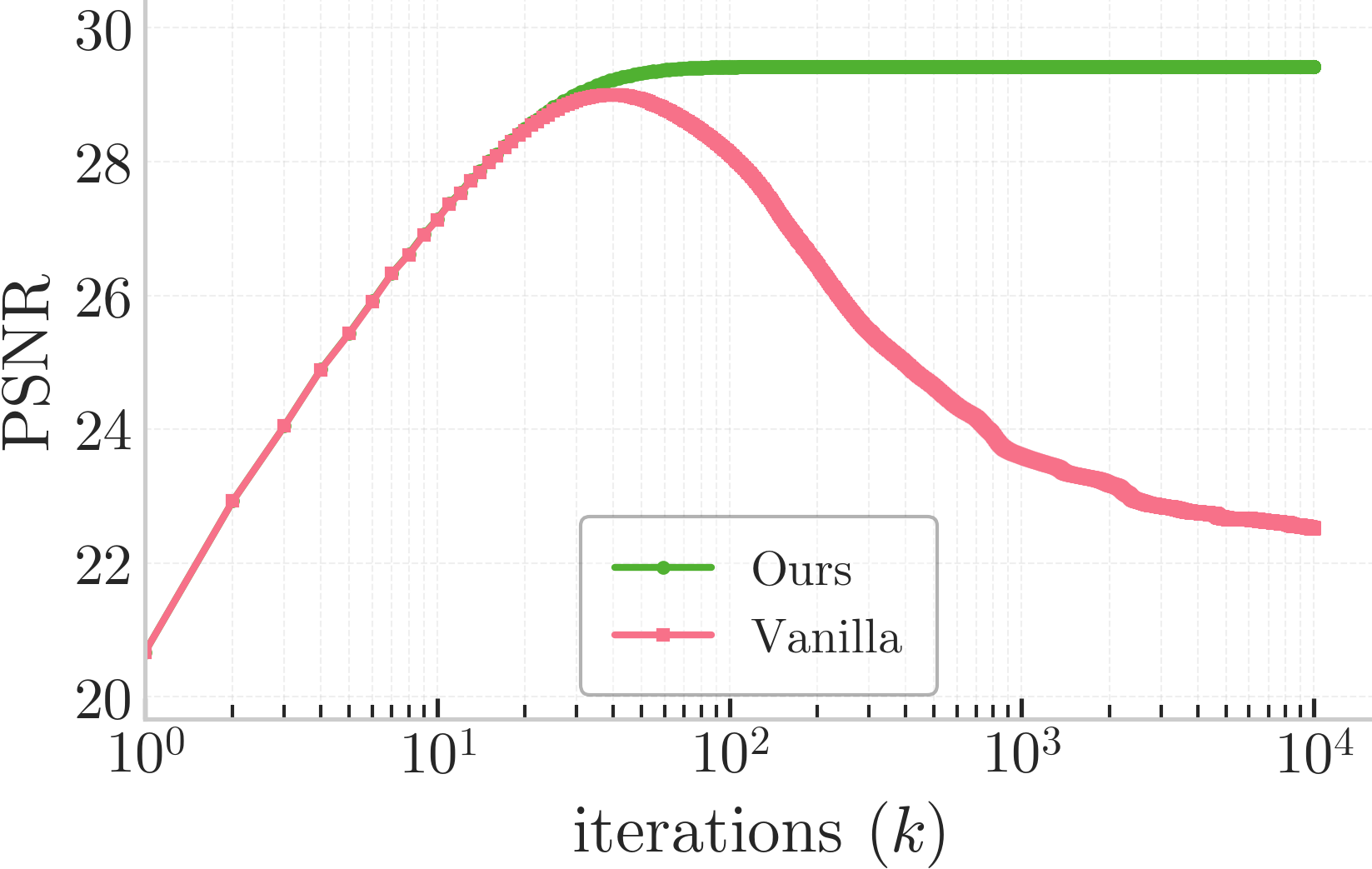}}\hfill
\subfloat{\includegraphics[width=\RfourC]{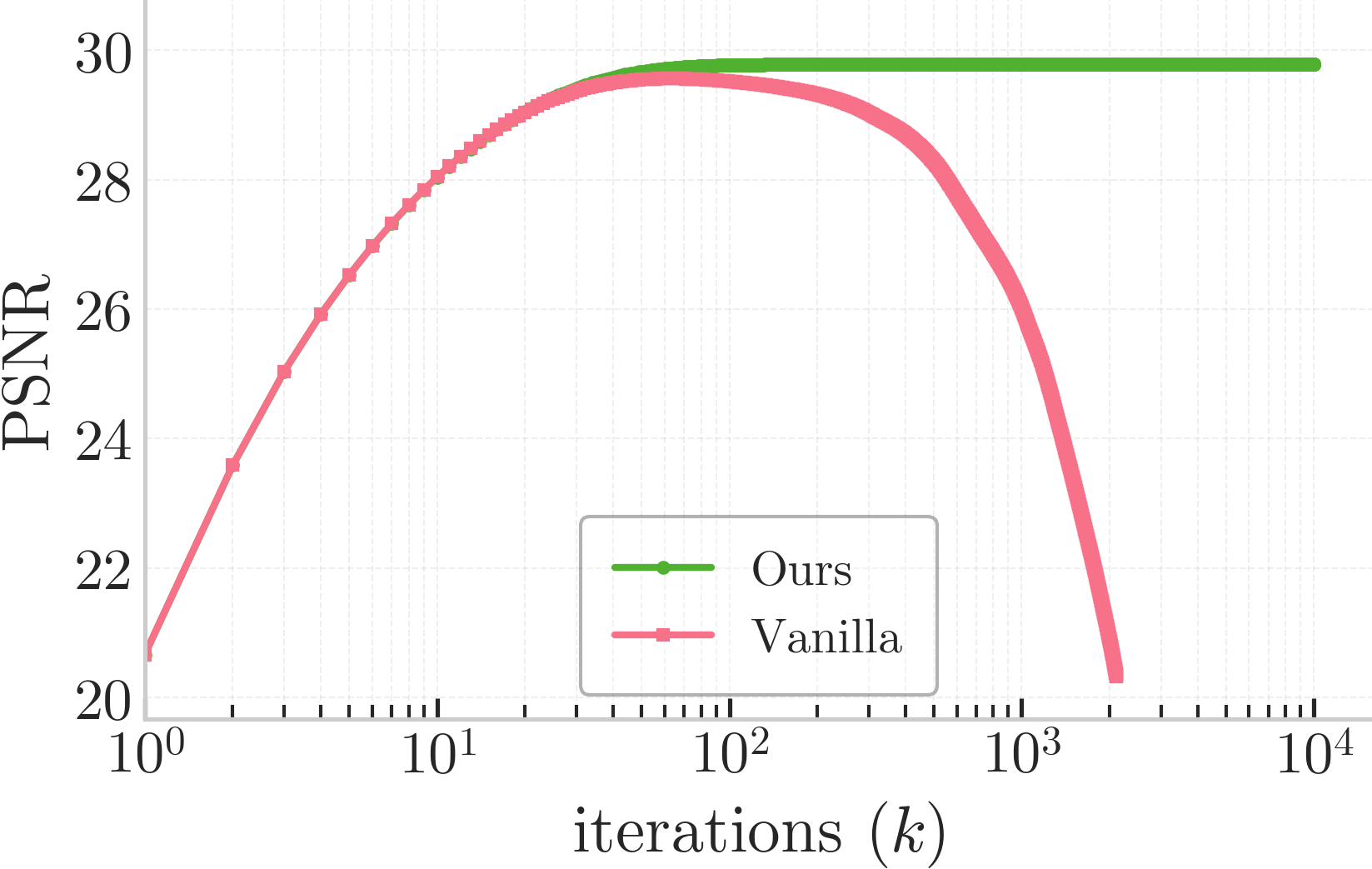}}\hfill
\subfloat{\includegraphics[width=\RfourC]{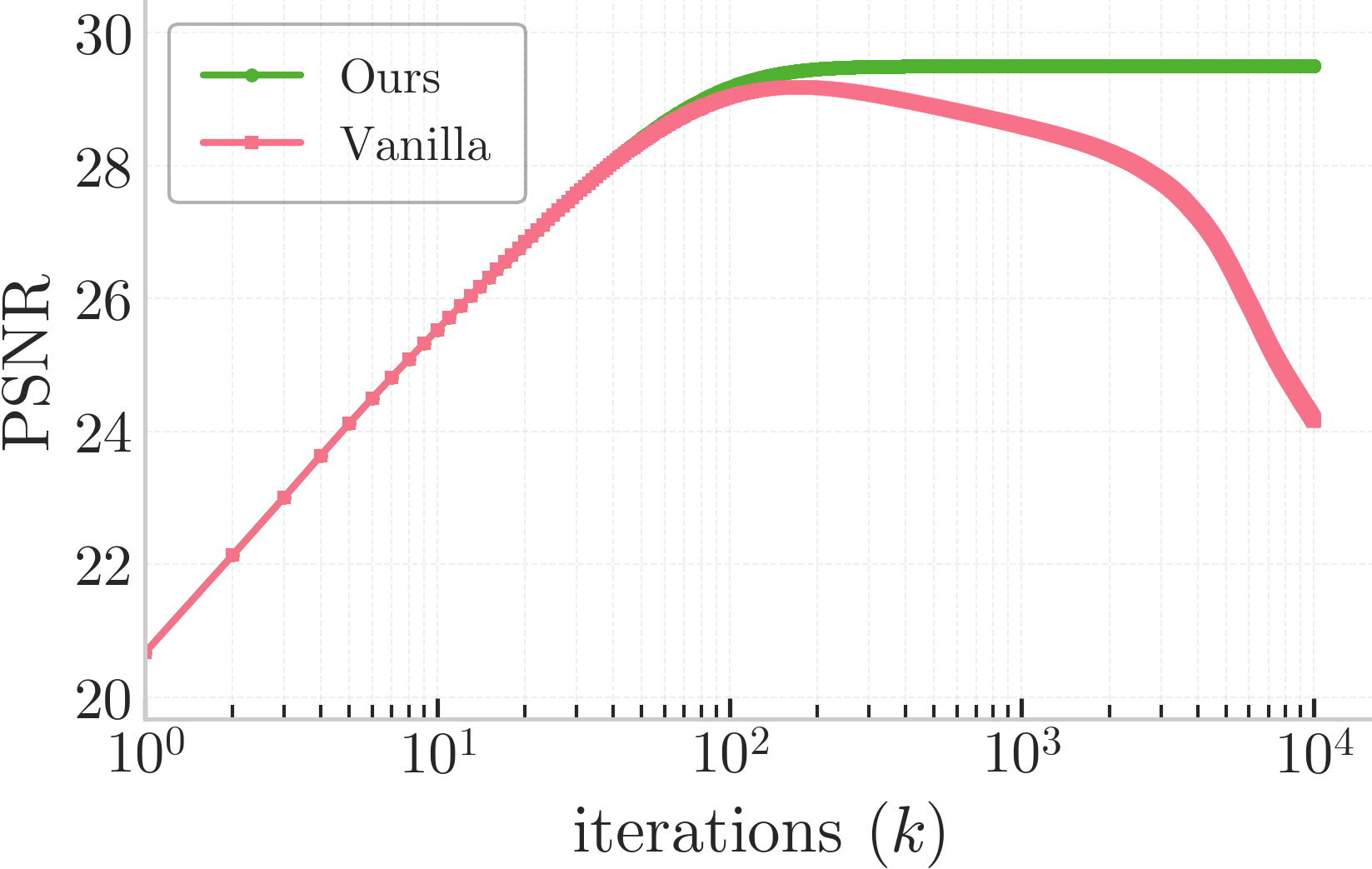}}\hfill
\subfloat[FBS+DnCNN]{\includegraphics[width=\RfourC]{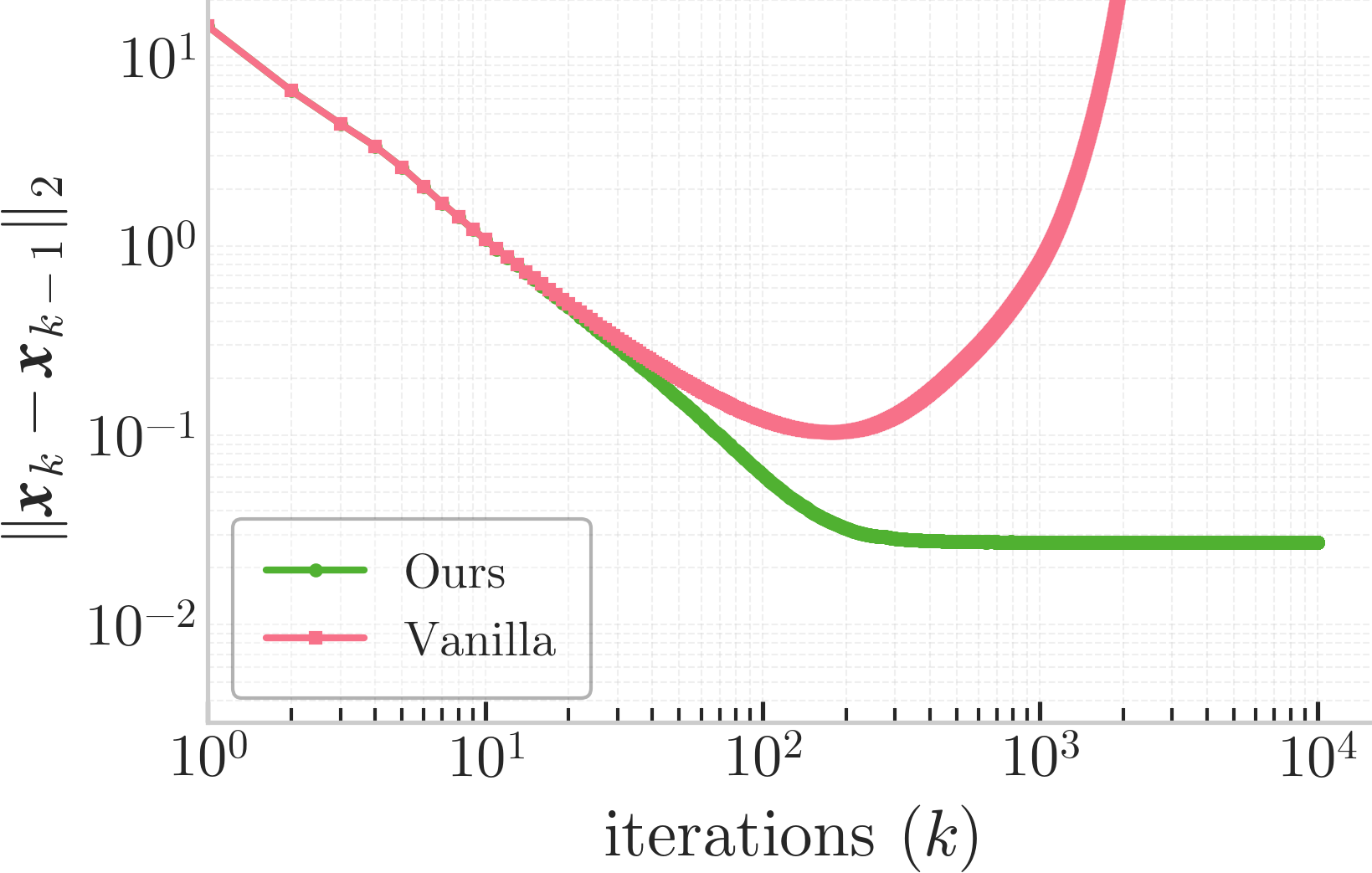}}\hfill
\subfloat[HQS+DiffUNet]{\includegraphics[width=\RfourC]{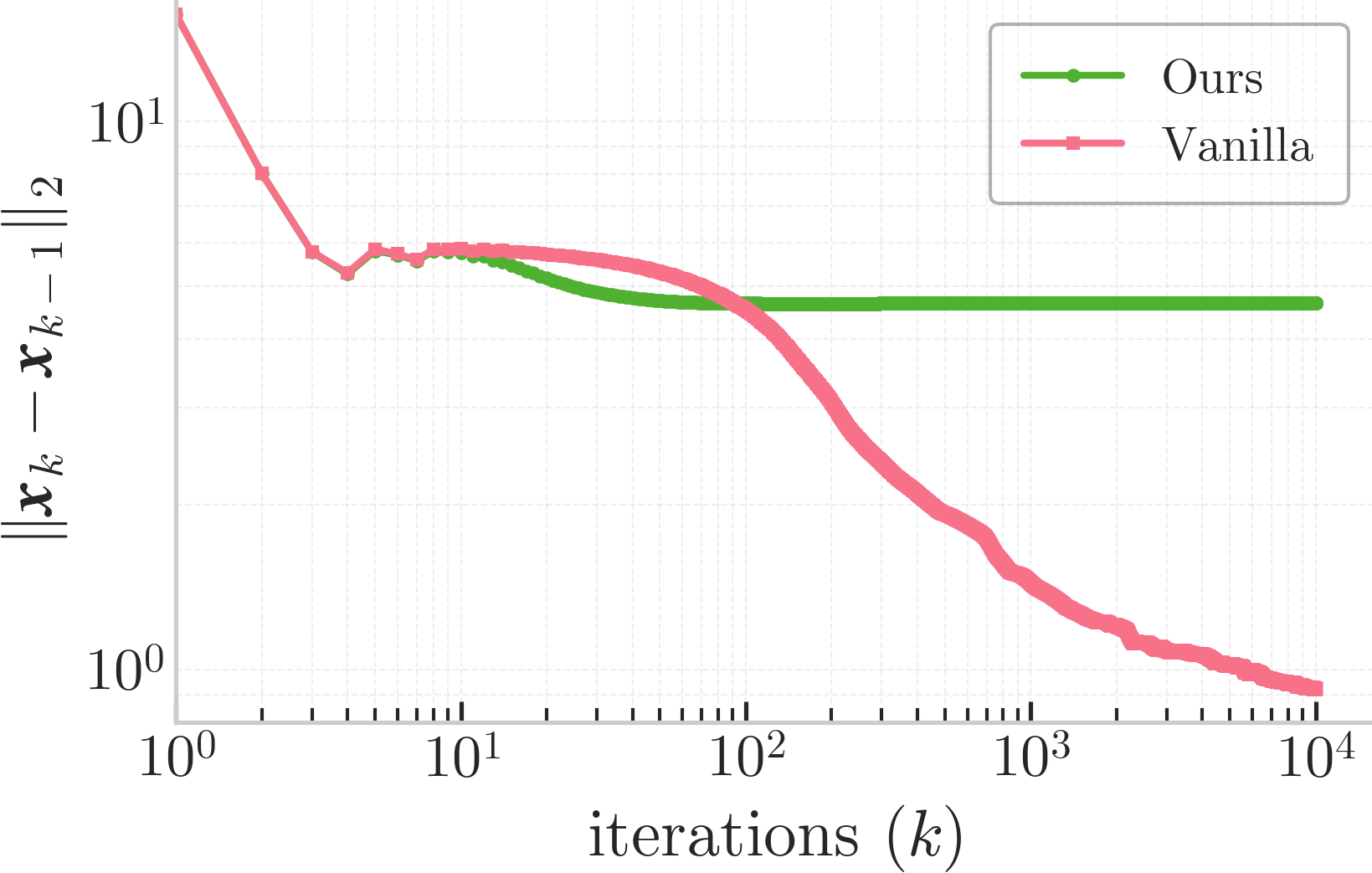}}\hfill
\subfloat[HQS+DRUNet]{\includegraphics[width=\RfourC]{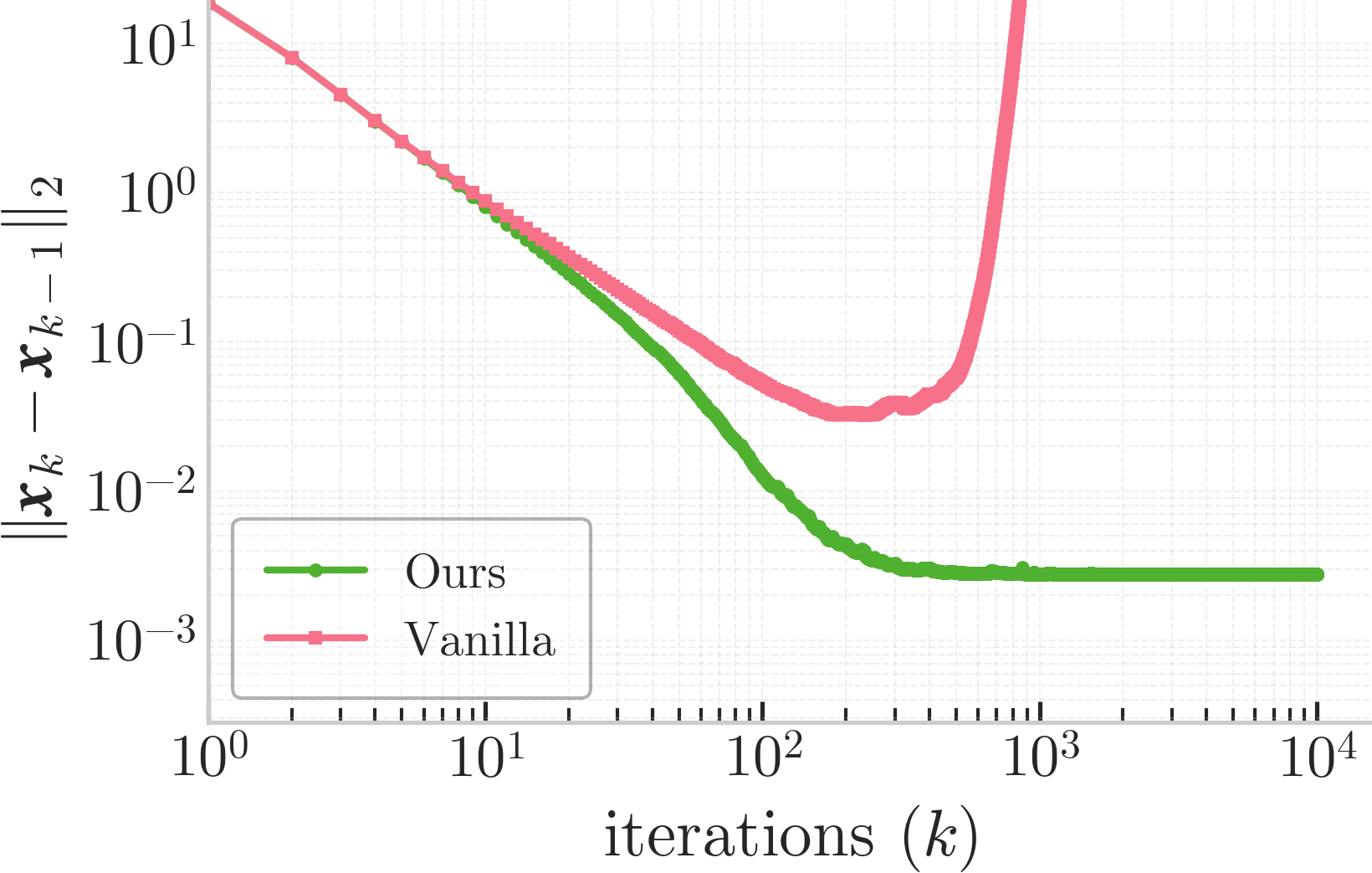}}\hfill
\subfloat[{RED-GD+GSDRUNet}]{\includegraphics[width=\RfourC]{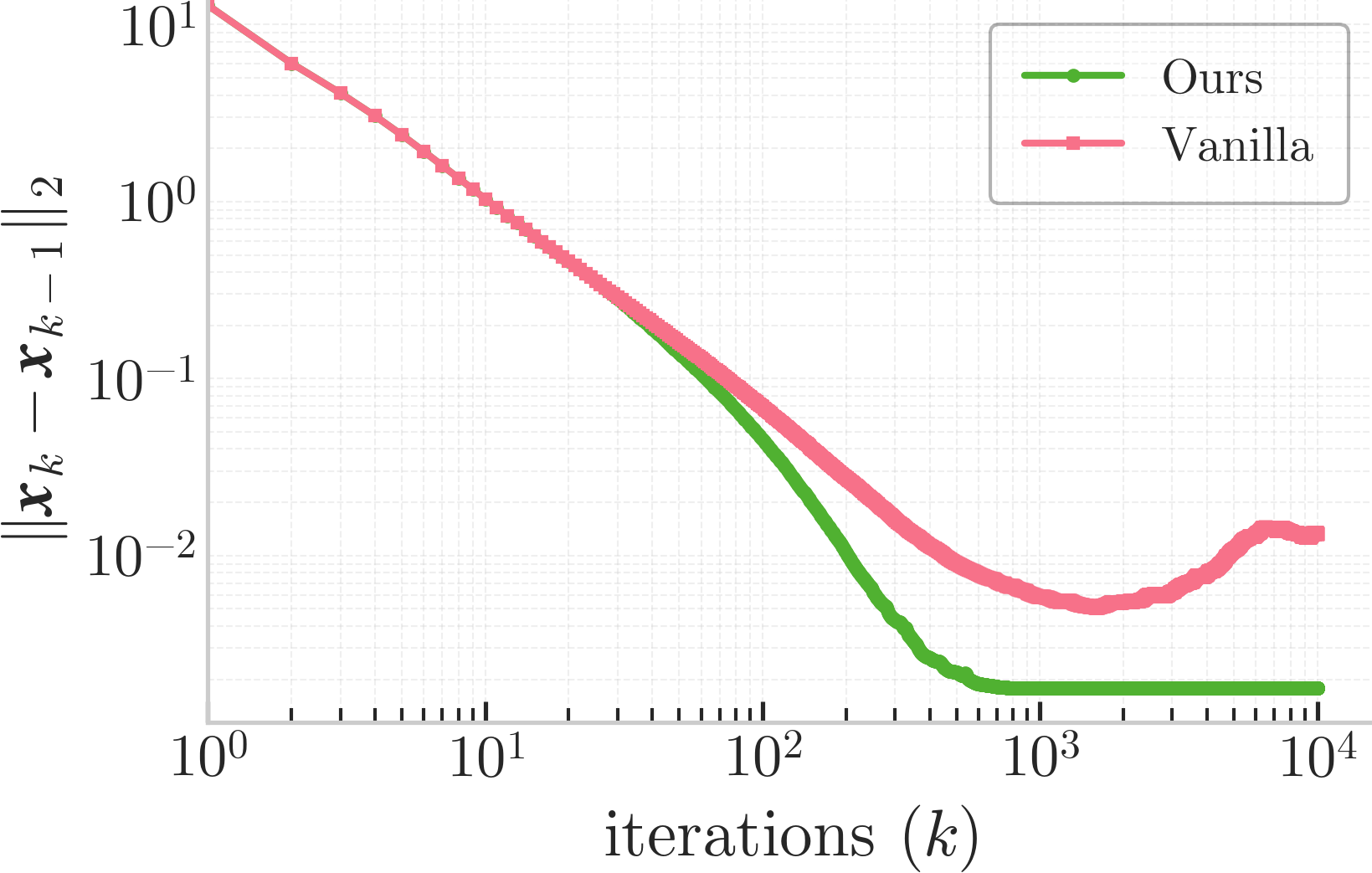}}\hfill
\caption{Evolution of the stability index $\etap$ and the weight $\theta_k$ obtained from \Cref{algo:vista} during deblurring. The first two rows show $\etap$ for the black-box operator $\cT$, $\etap$ for the stabilized operator $\cT_{\theta_k}$, and the corresponding weight $\theta_k$ over the iterations. The third and fourth rows report the PSNR and the difference norms, respectively. The curves are averaged over $9$ blur kernels. During the initial iterations, the stability index remains mostly below $1$, indicating stable behavior. When it exceeds $1$, $\theta_k$ becomes active and reduces $\etap$ of $\cT_{\theta_k}$ to maintain stability. Throughout the process, $\theta_k$ remains small, so the updates are still largely driven by the black-box operator $\cT$, thereby preserving reconstruction quality.}
\label{fig:theta-evolution}
\end{figure}

\section{Contractive Anchor}
\label{sec:conden}

In this section, we construct a contractive IR operator in the Euclidean norm and use it as the anchor $\cS$ in~\Cref{algo:vista}. The design is based on the following observation.
\begin{proposition}
\label{prop:ctr}
Let $f$ be convex and let $\cD$ be contractive. Then, for every $\rho>0$, the PnP-HQS operator $\Thqs$ defined in~\eqref{eq:pnphqs} is contractive.
\end{proposition}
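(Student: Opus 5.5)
The plan is to prove contractivity of $\Thqs = \cD \circ \prox_{\rho f}$ by splitting it into its two factors and controlling the Lipschitz constant of each separately. Since the Lipschitz constant of a composition is at most the product of the Lipschitz constants of its factors, it suffices to establish two facts: that $\prox_{\rho f}$ is nonexpansive, and that $\cD$ is a $\kappa$-contraction for some $\kappa<1$. Together these give $\operatorname{Lip}(\Thqs) \le \kappa \cdot 1 < 1$.

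\textbf{Step 1: nonexpansivity of the proximal map.} Since $f(\x) = \tfrac12\norm{\A\x-\b}^2$ is convex, proper, and continuous, $\prox_{\rho f}$ is well defined and single-valued for every $\rho>0$. It is moreover firmly nonexpansive~\cite{bauschke2011convex}, and hence nonexpansive:
\begin{equation*}
\norm{\prox_{\rho f}(\x)-\prox_{\rho f}(\y)} \le \norm{\x-\y}.
\end{equation*}
This is the standard fact and could simply be cited. For completeness, it also follows from the optimality conditions $\u-\x+\rho\nabla f(\u)=0$ and $\v-\y+\rho\nabla f(\v)=0$ together with monotonicity of $\nabla f$. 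For the quadratic $f$ one can be fully explicit instead: $\prox_{\rho f} = (\I+\rho\A^\top\A)^{-1}(\cdot+\rho\A^\top\b)$. This is affine with linear part $(\I+\rho\A^\top\A)^{-1}$, a symmetric matrix whose eigenvalues $1/(1+\rho\mu_i)$, with $\mu_i \ge 0$ the eigenvalues of $\A^\top\A$, all lie in $(0,1]$. Its operator norm is therefore at most $1$.

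\textbf{Step 2: composition.} Let $\cD$ be a $\kappa$-contraction. Then, for all $\x,\y$,
\begin{equation*}
\norm{\Thqs(\x)-\Thqs(\y)} \le \kappa \norm{\prox_{\rho f}(\x)-\prox_{\rho f}(\y)} \le \kappa\norm{\x-\y}.
\end{equation*}
Hence $\Thqs$ is a $\kappa$-contraction with the same constant as $\cD$, uniformly in $\rho$.

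There is no real obstacle here. The only point needing care is the well-definedness and single-valuedness of the prox, which requires convexity together with properness and lower semicontinuity. These hold automatically for the quadratic $f$, and for a general convex $f$ I would state them explicitly.
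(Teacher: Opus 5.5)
Your proof is correct and takes the same route as the paper: $\prox_{\rho f}$ is nonexpansive for convex $f$ (the paper cites Bauschke--Combettes, Prop.~12.28), so composing it with the $\kappa$-contraction $\cD$ gives a $\kappa$-contraction for every $\rho>0$. Your explicit resolvent computation for the quadratic $f$ and your remark that a general convex $f$ must also be proper and lower semicontinuous are correct additions that the paper does not spell out.
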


This result shows that a contractive denoiser is sufficient to obtain a contractive reconstruction operator. Existing contractive denoisers often rely on patchwise processing or fixed transforms such as wavelets~\cite{nair2024averaged,nair_convergent_2024}. We instead develop a direct image-to-image model with trainable convolutional filters, avoiding costly patch aggregation and accommodating multiple noise levels. Our nonlinear parameterization is inspired by the spline-based convex regularizers in~\cite{goujon2023neural,goujon_learning_2024}. However, those models are designed for iterative denoising rather than as single-step contractive denoisers.

We model the denoiser as an operator $\cD:\bbR^n\to\bbR^n$ that remains contractive for any choice of convolutional filters. This allows the filters to be trained without explicit constraints while preserving contractivity both during training and when the denoiser is used within the reconstruction algorithm.

\paragraph{\normalfont\bfseries Architecture.}
The design is based on Lipschitz continuity and nonexpansiveness. An operator $\cT:\bbR^n\to\bbR^n$ is $\beta$-Lipschitz if
$\|\cT(\x_1)-\cT(\x_2)\| \leqslant \beta \|\x_1-\x_2\|$
for all $\x_1,\x_2\in\bbR^n$. It is said to be \emph{nonexpansive} when $\beta=1$.

Building on~\cite{cohen2021has,hurault2022gradient,goujon2023neural} and a classical result on the contractivity of gradient-step operators~\cite{nesterov_lectures_2018}, we define the denoiser as
\begin{equation}
\label{eq:gradstep_model}
\cD(\x) = \x - \gamma \nabla \! \Poten(\x),
\end{equation}
where $\Poten: \bbR^n \to \bbR$ is a trainable potential function and $\gamma >0$ is the step size. 

A standard result states that the gradient-step operator $\cD$ is contractive when $\Poten$ is smooth and strongly convex, provided the step size $\gamma$ is chosen sufficiently small~\cite{nesterov_lectures_2018}. Here, smoothness means that $\Poten$ is differentiable and $\nabla\Poten$ is Lipschitz continuous. We therefore seek a smooth, strongly convex potential whose gradient can be implemented by a one-layer convolutional network. Specifically, let $\W: \bbR^n \to \bbR^p$ be a convolution operator with $p$ filters, and define
\begin{equation}
\label{eq:potential}
    \Poten(\x)= \sum_{j=1}^{p} \qspline_j \big( (\W \x)_j \big)+ \frac{\tau}{2}\Vert\x\Vert^2,
\end{equation}
where $\tau > 0$ is a trainable parameter and each $\qspline_j$ is differentiable with derivative $\lspline_j = \qspline_j'$. The gradient of~\eqref{eq:potential} is 
\begin{equation}
\label{eq:gradmodel}
    \nabla \! \Poten(\x) = \W^\top \! \Lspline(\W \x) + \tau \, \x,
\end{equation}
where the activation $\Lspline: \bbR^p \to \bbR^p$ acts componentwise as
\begin{equation*}
\Lspline(\z) = \nabla\Qspline(z) = \big(\lspline_1(z_1), \ldots,\lspline_p(z_p) \big)  \qquad (\z \in \bbR^p).
\end{equation*}
Substituting~\eqref{eq:gradmodel} into~\eqref{eq:gradstep_model} gives
\begin{equation}
\label{eq:D}
\cD(\x) = (1- \gamma \tau)\, \x -  \gamma \W^\top \! \Lspline(\W \x).
\end{equation}

Since both $\W$ and its adjoint $\W^\top$ are convolutional operators, $\cD$ has the desired one-layer convolutional structure. It remains to impose conditions that guarantee smoothness, strong convexity, and hence contractivity. We use the following structural requirements:
\begin{enumerate}
\item We enforce nonexpansiveness of $\W$ using the reparameterization proposed in~\cite[Equation~4]{araujo_unified_2023}:
\begin{equation}
\W=\widetilde{\W}\R^{-1/2},
\qquad
\R
=
\operatorname{Diag} 
(\operatorname{Diag}(\q)^{-1}
|\widetilde{\W}^{\top}\widetilde{\W} |
\q),
\end{equation}
where $\q\in\Re^n$ is trainable and constrained to have strictly positive entries through through element wise exponentiation. For any vector $\v$, $\operatorname{Diag}(\v)$ denotes the diagonal matrix with diagonal $\v$ and zeros elsewhere. This construction guarantees that $\norm{\W}\leqslant 1$. We parameterize $\widetilde{\W}$ as a trainable $3\times3$ convolution with $64$ output channels.

\item  We parameterize each nonlinearity $\lspline_j$ as a trainable linear spline with slopes constrained to $[0,1]$, using $101$ equally spaced knots with spacing $0.02$. Following~\cite{goujon2023neural}, we construct $\lspline_j$ as the derivative of a convex function $\qspline_j$, that is, $\lspline_j=\qspline_j'$. This spline parameterization was found to be more expressive than a standard ReLU activation.
\end{enumerate}
With the above construction in place, we have the following result.
\begin{proposition}
\label{prop:ccd}
Let $\tau>0$ and $0 < \gamma \leqslant 2/(1+2\tau)$. Then the operator $\cD$ defined in~\eqref{eq:D} is a $\kappa$-contraction, where
\begin{equation*}
\kappa =
\left(
1-\frac{2\gamma\tau(1+\tau)}{1+2\tau}
\right)^{1/2}.
\end{equation*}
\end{proposition}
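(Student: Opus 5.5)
The plan is to recognize $\cD$ as a gradient step on a potential that is both strongly convex and smooth, and then use the classical co-coercivity inequality for such functions (Nesterov, Thm.~2.1.12 in~\cite{nesterov_lectures_2018}) to get the contraction factor directly. Throughout, write $\Poten$ as in~\eqref{eq:potential}, so that $\cD=\cI-\gamma\nabla\Poten$ by~\eqref{eq:gradstep_model}--\eqref{eq:D}.

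\emph{Step 1: regularity of $\Poten$.} Set $h(\z)=\sum_j\qspline_j(z_j)$. Each $\lspline_j=\qspline_j'$ is a linear spline with slopes in $[0,1]$. Hence $\lspline_j$ is nondecreasing and $1$-Lipschitz, so $\qspline_j$ is convex and $C^{1,1}$. It follows that $h$ is convex with $1$-Lipschitz gradient $\Lspline$.

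\emph{Step 2: composition with $\W$.} For $\x\mapsto h(\W\x)$, convexity is preserved under composition with a linear map. For smoothness, $\norm{\W^\top\Lspline(\W\x)-\W^\top\Lspline(\W\y)}\le\norm{\W}^2\norm{\x-\y}\le\norm{\x-\y}$, using the guarantee $\norm{\W}\le1$ from the reparameterization. Adding $\frac{\tau}{2}\norm{\x}^2$ then gives that $\Poten$ is $\mu$-strongly convex and $L$-smooth with $\mu=\tau$ and $L=1+\tau$.

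I will work only with first-order characterizations (monotonicity and Lipschitz bounds of the gradient) rather than Hessians, because $\Poten$ is only $C^{1,1}$: the splines are piecewise linear. The one point needing care is that Nesterov's inequality holds for this class without assuming twice differentiability. It does, since its proof uses only the first-order conditions.

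\emph{Step 3: the contraction estimate.} Let $\bs d=\x-\y$ and $\bs g=\nabla\Poten(\x)-\nabla\Poten(\y)$. Co-coercivity for $\mu$-strongly convex, $L$-smooth functions gives
\begin{equation*}
\langle \bs g,\bs d\rangle\ \ge\ \frac{\mu L}{\mu+L}\norm{\bs d}^2+\frac{1}{\mu+L}\norm{\bs g}^2 .
\end{equation*}
Expanding and inserting this bound,
\begin{equation*}
\norm{\cD(\x)-\cD(\y)}^2=\norm{\bs d}^2-2\gamma\langle\bs g,\bs d\rangle+\gamma^2\norm{\bs g}^2
\le\Big(1-\frac{2\gamma\mu L}{\mu+L}\Big)\norm{\bs d}^2+\gamma\Big(\gamma-\frac{2}{\mu+L}\Big)\norm{\bs g}^2 .
\end{equation*}
Since $\mu+L=1+2\tau$, the hypothesis $\gamma\le 2/(1+2\tau)$ makes the last term nonpositive, so it can be dropped. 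Substituting $\mu L/(\mu+L)=\tau(1+\tau)/(1+2\tau)$ yields exactly $\kappa^2$ as stated.

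\emph{Step 4: $\kappa$ is well defined and less than one.} The bracket $1-\frac{2\gamma\tau(1+\tau)}{1+2\tau}$ dominates the nonnegative ratio $\norm{\cD(\x)-\cD(\y)}^2/\norm{\bs d}^2$, so it is nonnegative. It is strictly below $1$ because $\gamma\tau>0$. Hence $\kappa<1$ and $\cD$ is a $\kappa$-contraction.
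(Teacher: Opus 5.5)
Your proposal is correct and follows the paper's proof essentially step for step. The paper also shows that $\Poten$ is $\tau$-strongly convex and $(1+\tau)$-smooth, using spline slopes in $[0,1]$ and $\norm{\W}\le 1$. It then proves a gradient-step contraction lemma from Nesterov's co-coercivity inequality (Theorem~2.1.12), dropping the squared gradient-difference term once $\gamma\le 2/(\beta+\mu)$.

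Your extra remarks are valid clarifications of points the paper passes over quickly. These are: convexity of $\qspline_j$ follows from monotonicity of $\lspline_j$; only $C^{1,1}$ regularity is needed; and the bracket defining $\kappa$ is nonnegative.
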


We fix the step size as $\gamma=1/(1+2\tau)$, for which the contraction factor becomes 
\begin{equation*}
\kappa
=
\frac{\sqrt{1+2\tau+2\tau^2}}{1+2\tau}.
\end{equation*}
We constrain $\tau\in[0.0102,0.135]$, corresponding to $\kappa\in[0.9,0.99]$. This range prevents the contraction from being either too strong or too weak, while retaining sufficient model expressivity.

\input{arch}

To handle multiple noise levels $\sigma_{\n}$ within a single model~\cite{zhang2021plug}, we incorporate noise conditioning through a trainable scaling function $\ScalFunc(\sigma_{\n})$~\cite{goujon_learning_2024}; details are provided in the Appendix E. The resulting noise-aware denoiser $\Dsig:\bbR^n\to\bbR^n$ is defined as
\begin{equation}
    \label{eq:Dsig}
    \Dsig(\x) = (1- \gamma \tau)\, \x -  \gamma \W^\top\! \left( \ScalFunc(\sigma)^{-1}\Lspline \left( \ScalFunc(\sigma)\W \x \right) \right), 
\end{equation}
where $\x=\bar{\x}+\n$ denotes the noisy input and $\sigma_{\n}$ is the corresponding noise level. The resulting architecture is shown in~\Cref{fig:arch}.

\begin{proposition}
\label{prop:dctr}
Let $\tau>0$ and $0 < \gamma \leqslant 2/(1+2\tau)$. Then, for every fixed $\sigma>0$, the noise-aware denoiser $\Dsig$ defined in~\eqref{eq:Dsig} is a $\kappa$-contraction, with the same contraction factor $\kappa$ as in~\Cref{prop:ccd}.
\end{proposition}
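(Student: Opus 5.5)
We would reduce \Cref{prop:dctr} to \Cref{prop:ccd} by showing that, for fixed $\sigma$, the rescaled activation $\z\mapsto \ScalFunc(\sigma)^{-1}\Lspline(\ScalFunc(\sigma)\z)$ has exactly the structural properties of $\Lspline$ that the proof of \Cref{prop:ccd} uses. Write $\nu=\ScalFunc(\sigma)>0$ and define componentwise $\tilde\lspline_j(t)=\nu^{-1}\lspline_j(\nu t)$. Then
\begin{equation*}
\Dsig(\x)=(1-\gamma\tau)\x-\gamma\W^\top\tilde{\Lspline}(\W\x),
\end{equation*}
which has the same form as \eqref{eq:D}, with $\Lspline$ replaced by $\tilde{\Lspline}$. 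If $\ScalFunc(\sigma)$ is a vector of per-channel scales, the same argument applies channel by channel.

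The key properties are these. Each $\lspline_j$ is a linear spline with slopes in $[0,1]$, so it is nondecreasing and $1$-Lipschitz. It is also the derivative of the convex function $\qspline_j$. Then $\tilde\lspline_j$ is again piecewise linear with the same slopes: by the chain rule, $\tilde\lspline_j'(t)=\lspline_j'(\nu t)\in[0,1]$. Moreover, $\tilde\lspline_j=\tilde\qspline_j'$ with $\tilde\qspline_j(t)=\nu^{-2}\qspline_j(\nu t)$, which is convex. So $\Dsig=\cI-\gamma\nabla\tilde\Poten$, where
\begin{equation*}
\tilde\Poten(\x)=\sum_j\tilde\qspline_j((\W\x)_j)+\tfrac{\tau}{2}\|\x\|^2 .
\end{equation*}
This $\tilde\Poten$ is $\tau$-strongly convex. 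Its gradient is $(1+\tau)$-Lipschitz, because $\|\W\|\le 1$ and $\tilde{\Lspline}$ is $1$-Lipschitz and monotone, so $\W^\top\tilde{\Lspline}\W$ is the gradient of a convex $1$-smooth function.

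With $L=1+\tau$ and $\mu=\tau$, we would invoke the same gradient-step contraction bound used for \Cref{prop:ccd}: for $0<\gamma\le 2/(\mu+L)=2/(1+2\tau)$,
\begin{equation*}
\|\x-\gamma\nabla\tilde\Poten(\x)-\y+\gamma\nabla\tilde\Poten(\y)\|^2\le\Big(1-\tfrac{2\gamma\mu L}{\mu+L}\Big)\|\x-\y\|^2 .
\end{equation*}
This inequality comes from the coercivity estimate for smooth strongly convex functions. It gives exactly the stated $\kappa$. Equivalently, the proof of \Cref{prop:ccd} can be cited verbatim, since it uses only these properties of the activation.

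The main point to check is that the rescaling does not break any hypothesis used in \Cref{prop:ccd}. In particular, the Lipschitz constant of $\tilde{\Lspline}$ must stay at most $1$ rather than picking up a factor like $\nu$ or $\nu^{-1}$. It does, because the two scalings cancel in the derivative. Since $\kappa$ depends only on $\tau$ and $\gamma$, the contraction factor is uniform in $\sigma$.
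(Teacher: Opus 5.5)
Your proposal is correct and follows essentially the same route as the paper. The paper also reduces to \Cref{prop:ccd} by showing that each rescaled profile $\nu_j(\sigma)^{-2}\qspline_j(\nu_j(\sigma)\,\cdot)$ stays convex and each rescaled activation $\nu_j(\sigma)^{-1}\lspline_j(\nu_j(\sigma)\,\cdot)$ stays $1$-Lipschitz (using two small scaling lemmas where you use a chain-rule slope argument), so that $\PotenScal$ is $\tau$-strongly convex and $(1+\tau)$-smooth, and then applies the same gradient-step contraction lemma; your explicit retention of $\W^\top$ and $\W$ in the gradient is, if anything, more careful than the paper's displayed formula for $\nabla\PotenScal$.
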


Based on the preceding construction, we define the IR anchor as
\begin{equation}
\label{eq:sctr}
\Sctr := \Dsig\circ \prox_{\rho f}.
\end{equation}
By \Cref{prop:dctr}, $\Dsig$ is contractive. Hence, \Cref{prop:ctr} implies that the IR operator $\Sctr$ is also contractive.

\paragraph{\normalfont\bfseries Training.}
We train $\Dsig$ as a Gaussian denoiser at noise levels $\sigma_{\n} \sim \cU[0,25/255]$ using standard training procedures~\cite{zhang2021plug,goujon_learning_2024}. The training set from~\cite{zhang2021plug} comprises BSD400~\cite{bsd500}, DIV2K~\cite{radu2017ntire}, the Waterloo Exploration Database~\cite{ma2017waterloo}, and Flickr2K~\cite{zhang2023ntire}. We use BSD32 for validation and BSD68 for testing.

We minimize the mean squared error using Adam~\cite{kingma2017adam} for $100{,}000$ epochs, with a batch size of $16$ randomly sampled $128\times128$ patches. The learning rate follows a cosine schedule and decays to $5\times10^{-3}$ of its initial value. Following~\cite{goujon_learning_2024}, we assign separate initial learning rates to different parameter groups. We use $5\times10^{-3}$ for $(\widetilde{\W},\q)$, $5\times10^{-4}$ for $\Lspline$, $5\times10^{-3}$ for $\ScalFunc$, and $5\times10^{-2}$ for $\tau$.

For completeness, the Appendix G reports the denoising performance of $\Dsig$ and the reconstruction performance of the corresponding IR operator $\Sctr$ in~\eqref{eq:sctr}. In~\Cref{fig:p_ablation}, we study how the quality of the anchor fixed point $\p$ affects stabilization by varying the number of epochs used to train $\Dsig$.

\begin{figure}[t]
    \centering
    \includegraphics[width=1.0\linewidth]{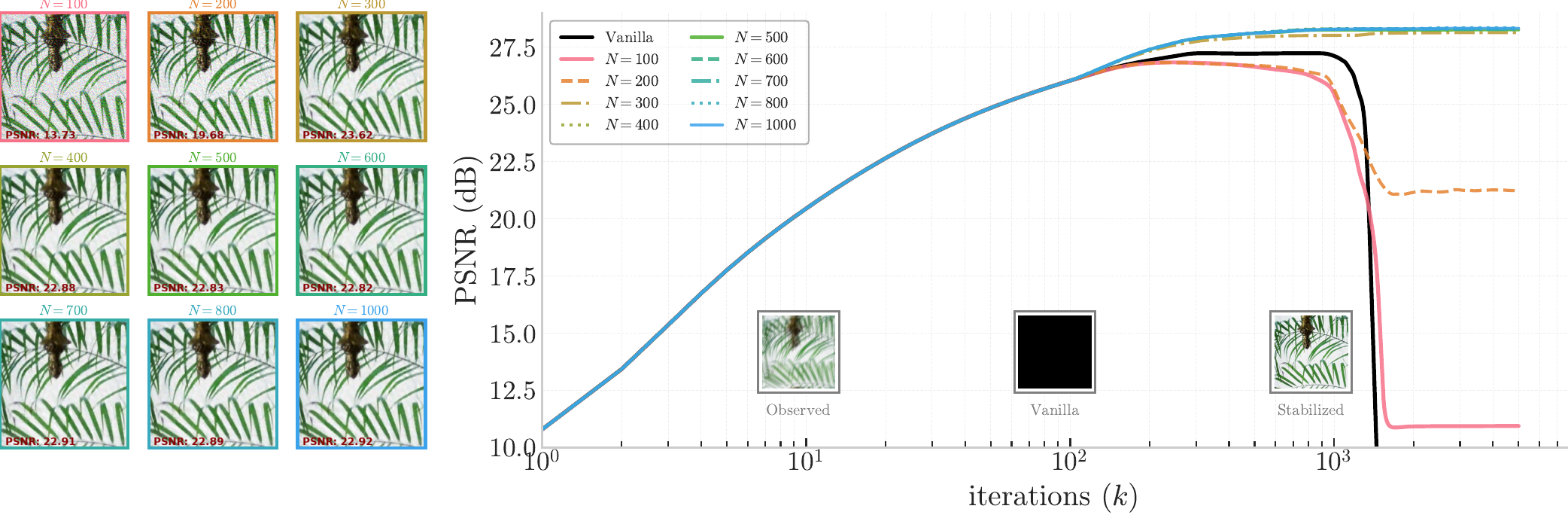}
    \caption{Experiment showing the effect of the reference point $\p$ on stabilization. We stabilize a divergent IR process with peak-and-collapse using the anchor $\cS$ in~\eqref{eq:sctr}. Different fixed points $\p$ are obtained from anchor weights at different training stages of the denoiser $\Dsig$ in~\eqref{eq:Dsig} (left). Early checkpoints ($N=100,200$) produce poor baselines, and the corresponding $\p$ does not prevent collapse. Once $\p$ reaches a reasonable baseline quality, already by $N=300$, stabilization succeeds. This shows that a reasonably accurate anchor is sufficient for reliable stabilization.}
    \label{fig:p_ablation}
\end{figure}

\section{Experiments}
\label{sec:exp}
\begin{figure}[t]
\centering
\captionsetup[subfloat]{labelformat=empty,labelsep=none}
\subfloat[ADMM+DiffUNet]{\includegraphics[width=\Rfour]{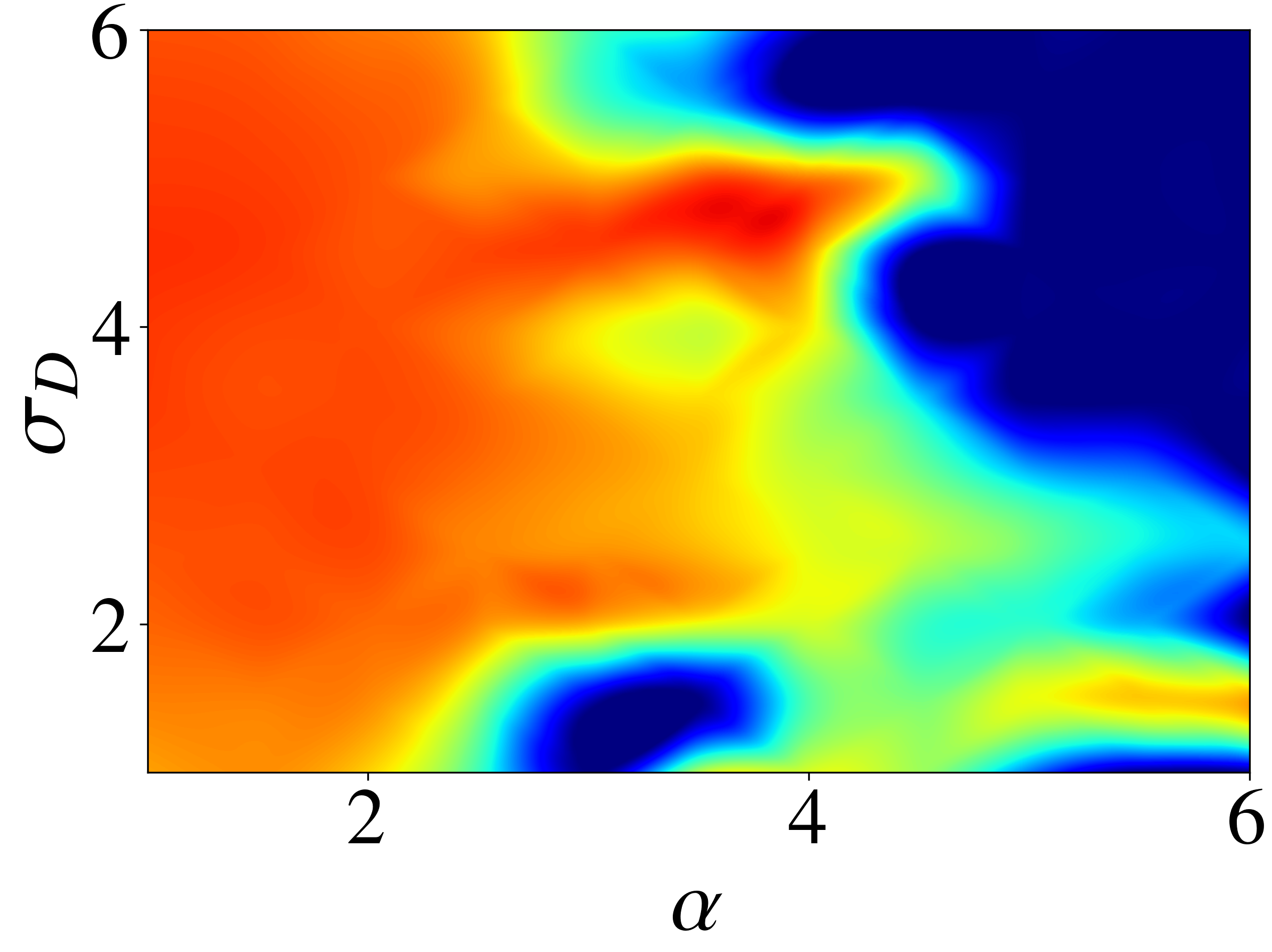}}\hfill
\subfloat[PGD+DRUNet]{\includegraphics[width=\Rfour]{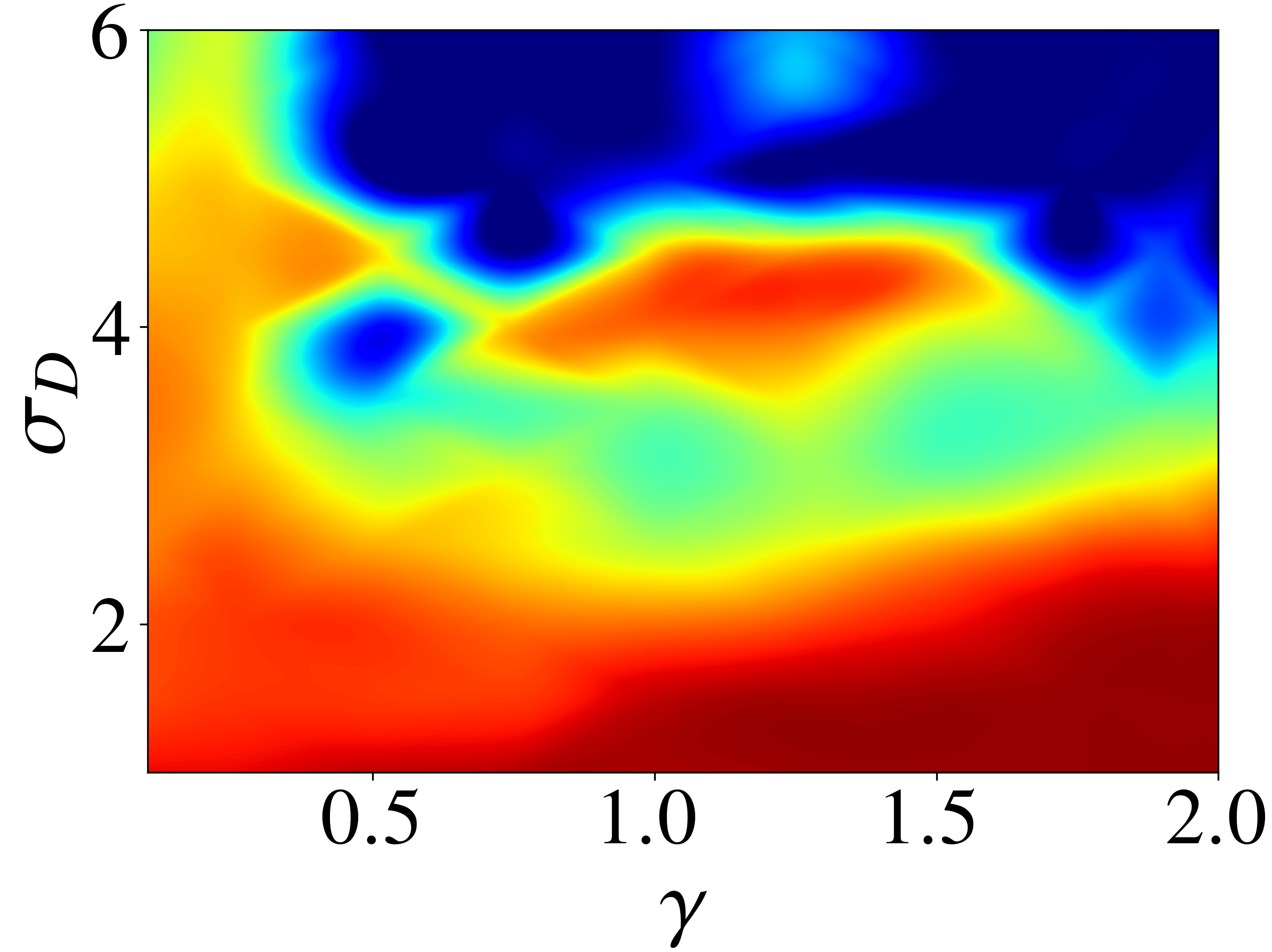}}\hfill
\subfloat[HQS+DRUNet]{\includegraphics[width=\Rfour]{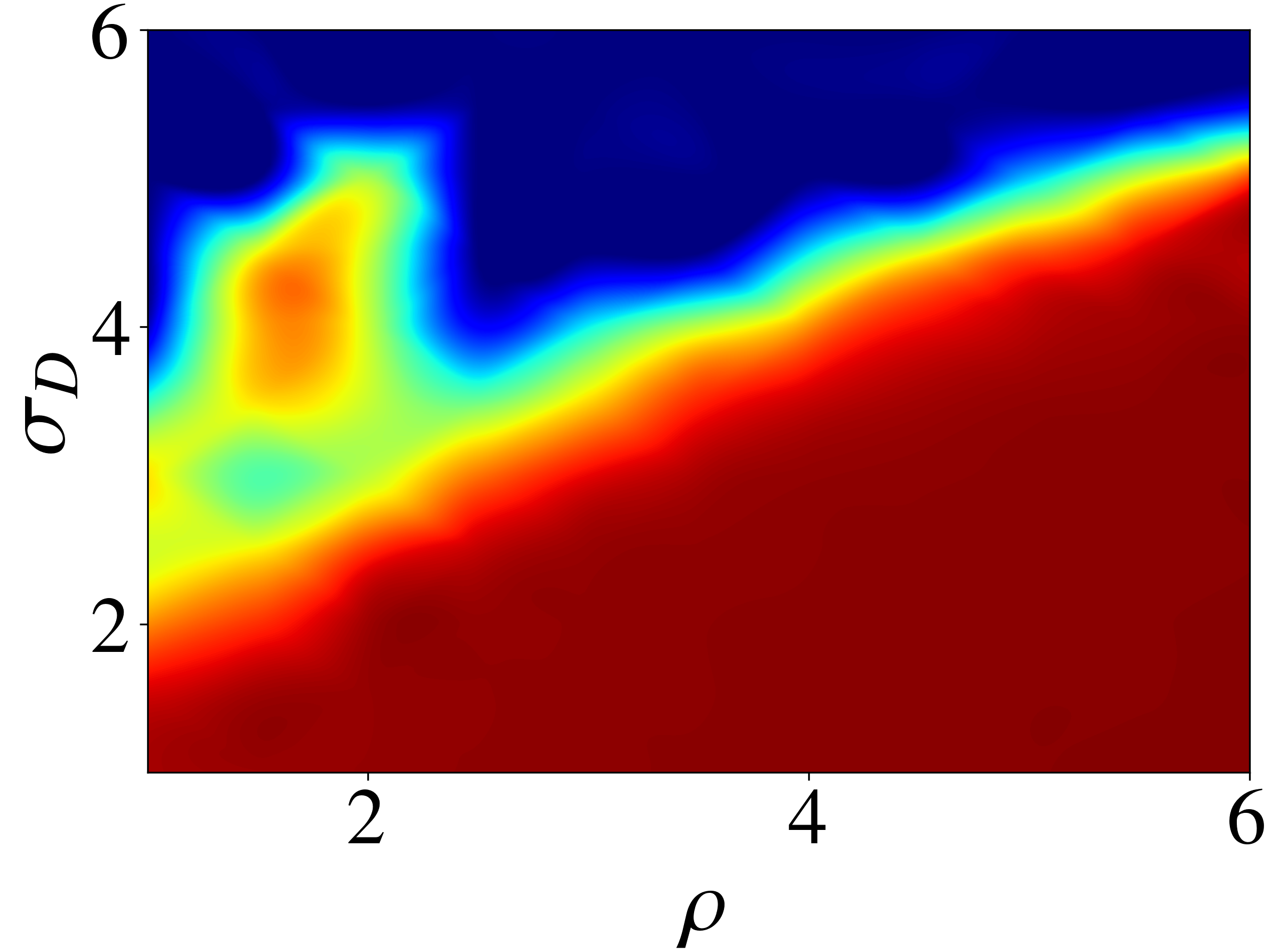}}
\subfloat[HQS+DnCNN]{\includegraphics[width=\Rfour]{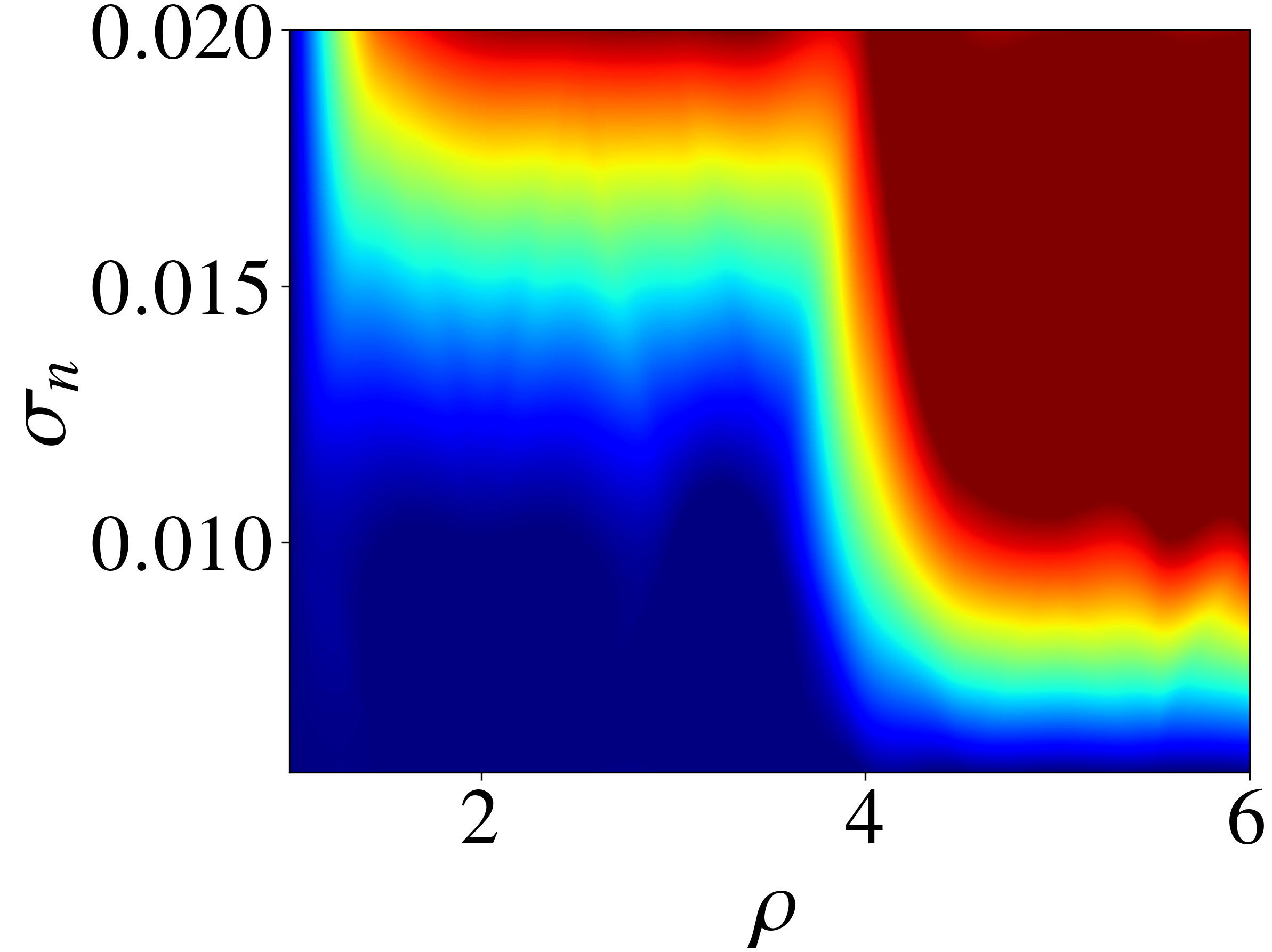
}}\hfill
\subfloat[ADMM+GSDRUNet]{\includegraphics[width=\Rfour]{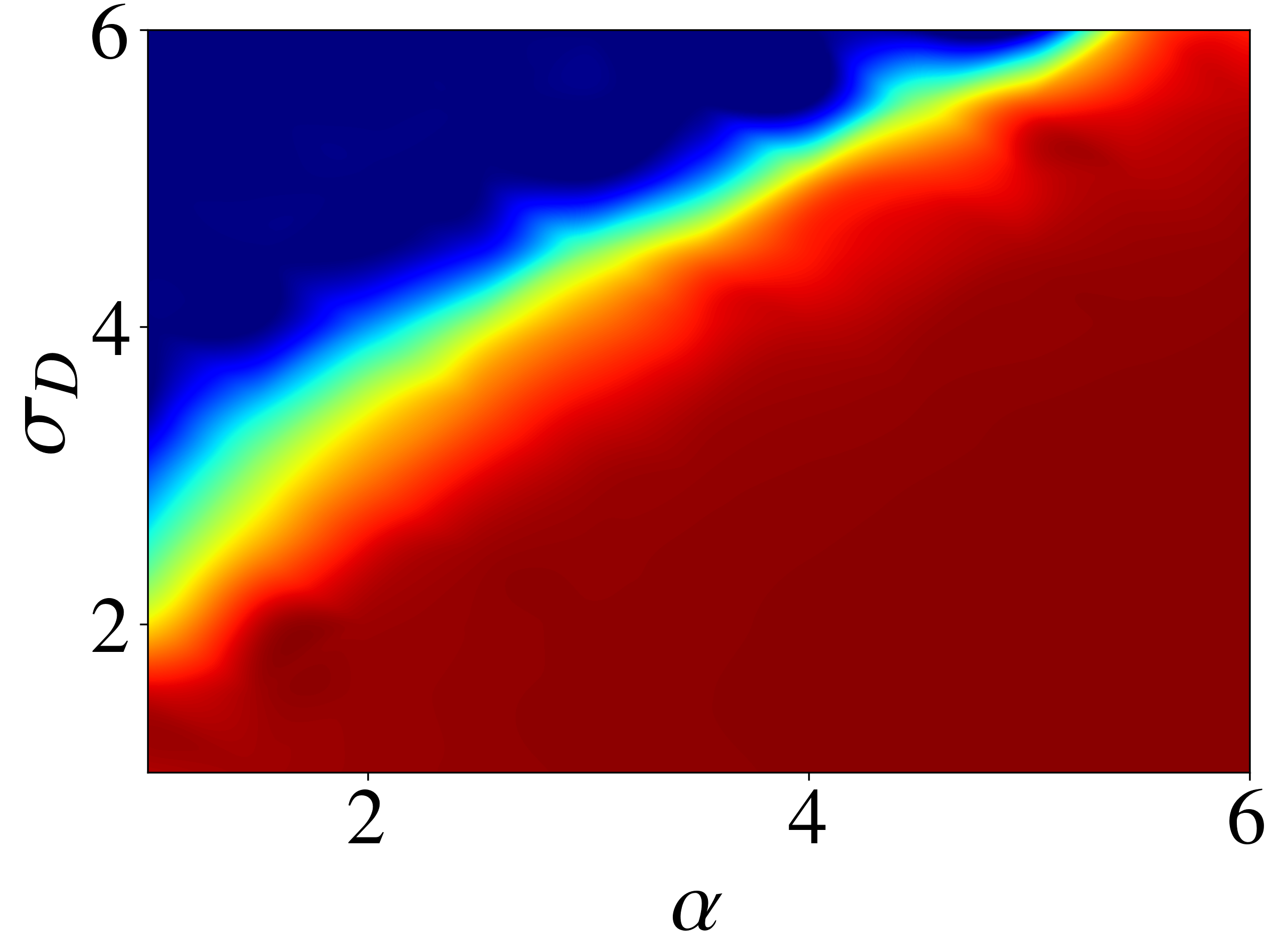}}
\subfloat[RED+GSDRUNet]{\includegraphics[width=\Rfour]{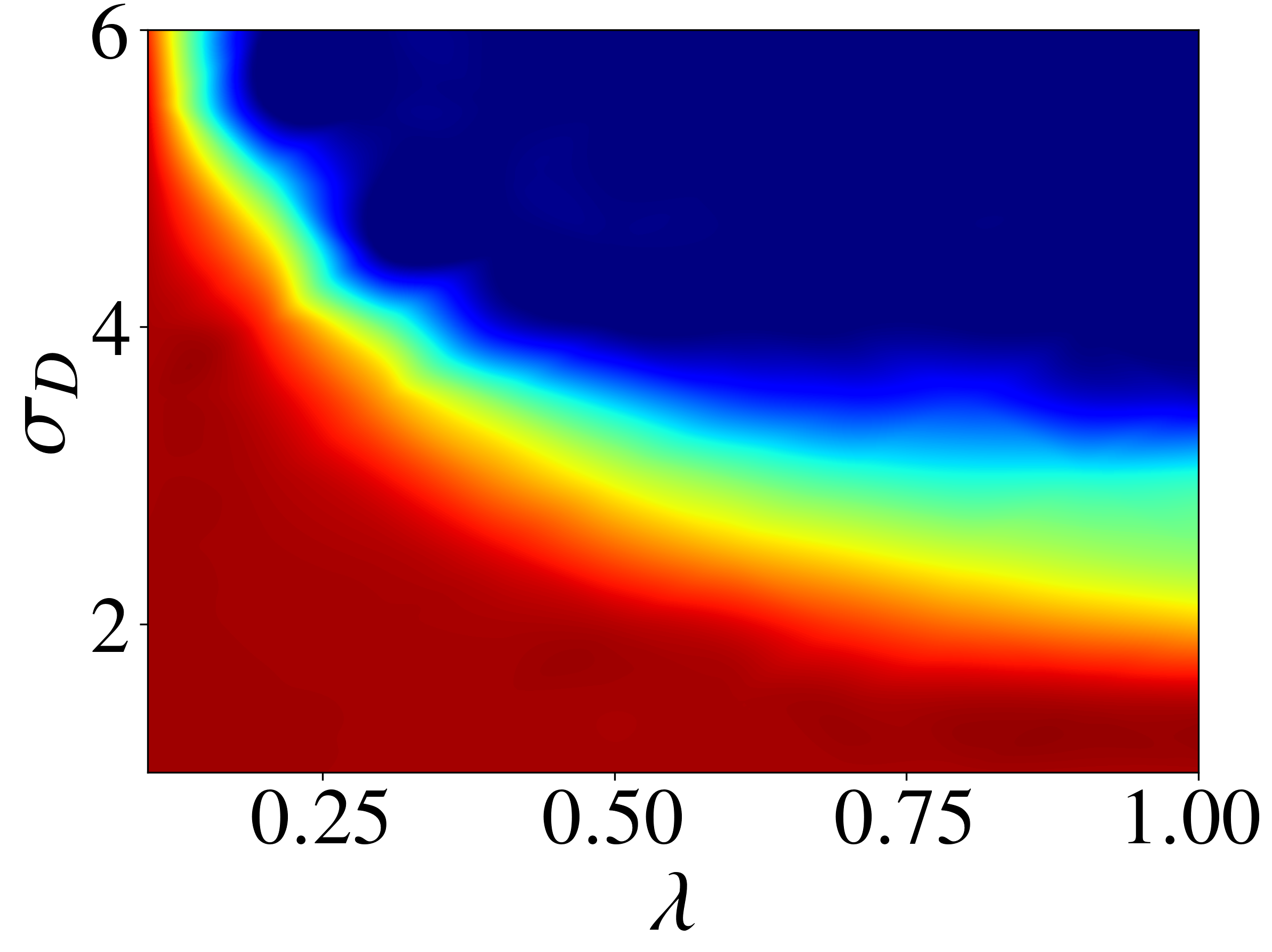}}
\subfloat[RED+DRUNet]{\includegraphics[width=\Rfour]{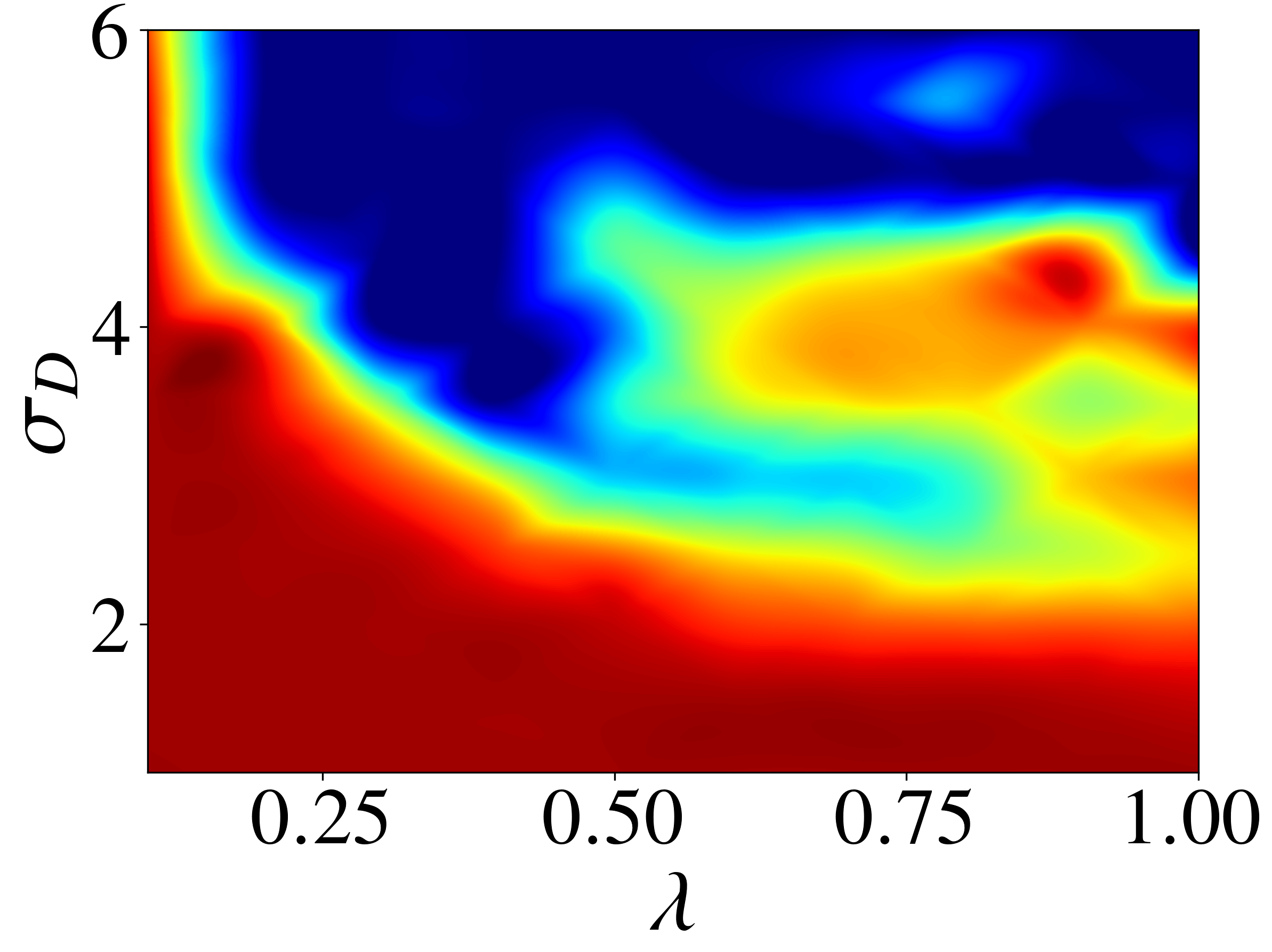}}
\subfloat[PGD+MMO]{\includegraphics[width=\Rfour]{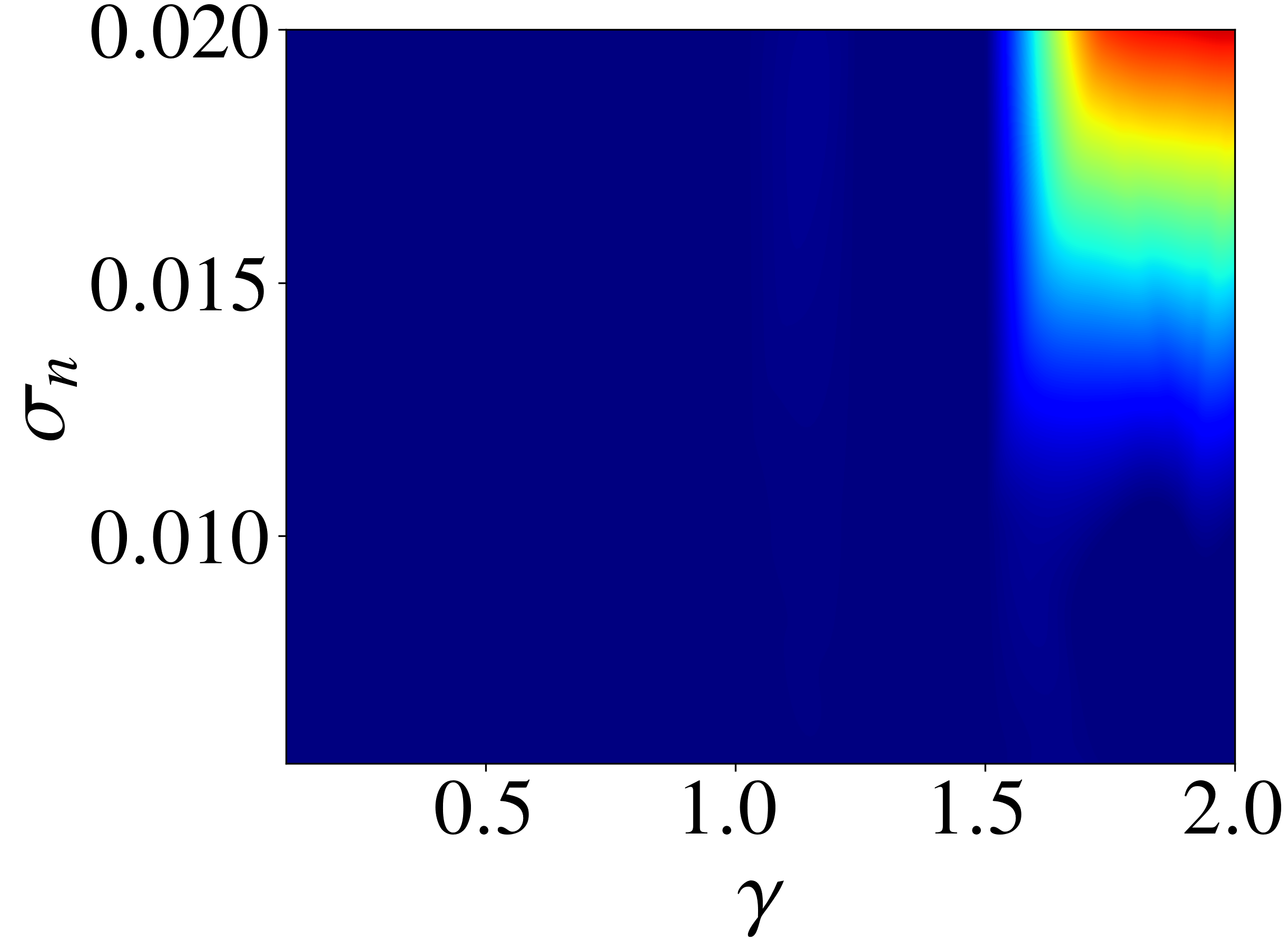
}}\hfill

\caption{(In)Stability regions for different denoiser-driven reconstruction operators. Each panel shows a 2D sweep over a pair of algorithm parameters. At each coordinate, the color encodes the ratio $r=T'/T$, where $T=1000$ is the total number of iterations and $T'$ is the first iteration at which the PSNR drops by $1$\,dB from its peak value. Larger $r$ indicates a more stable region (the peak is maintained for longer), while smaller $r$ indicates rapid post-peak degradation.}
\label{fig:rot}
\end{figure}

\begin{figure}[ht]
    \centering
    \captionsetup[subfloat]{labelformat=empty,labelsep=none,justification=centering}
    \subfloat[blurry]{
    \includegraphics[width=\Rthree]{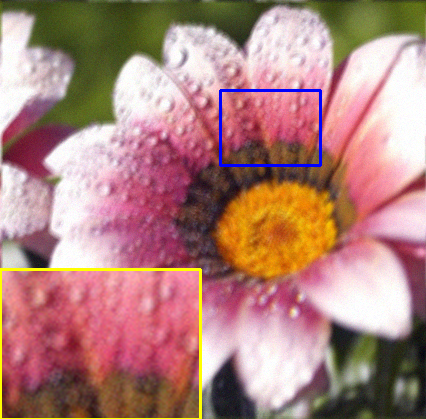}}\hfill
    \subfloat[Vanilla \\ (iter $=1000$)]{
    \includegraphics[width=\Rthree]{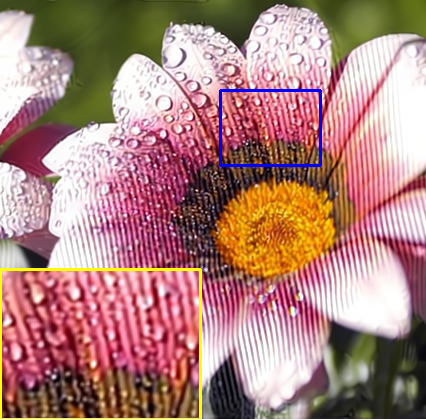}}\hfill
    \subfloat[Ours]{
    \includegraphics[width=\Rthree]{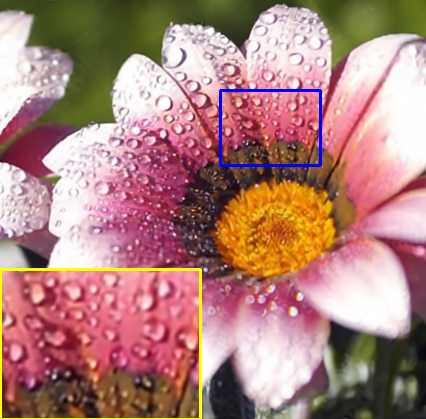}}\hfill
    \subfloat[Vanilla (peak)]{
    \includegraphics[width=\Rthree]{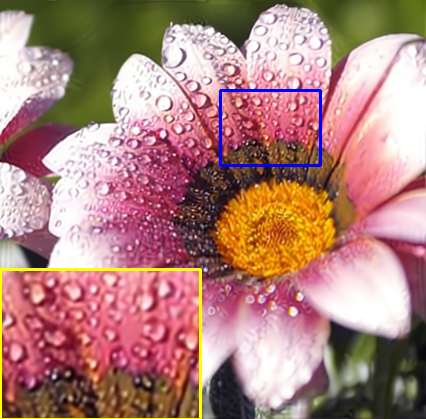}}\hfill
    \subfloat[$\p$]{
    \includegraphics[width=\Rthree]{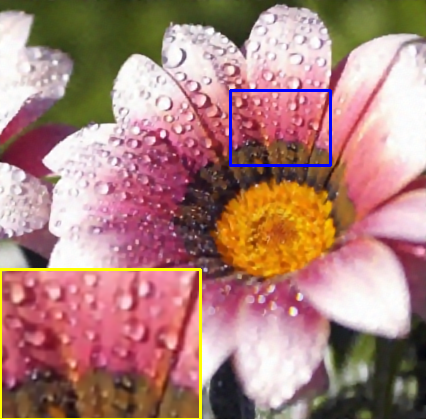}}\hfill
     \subfloat[clean]{
    \includegraphics[width=\Rthree]{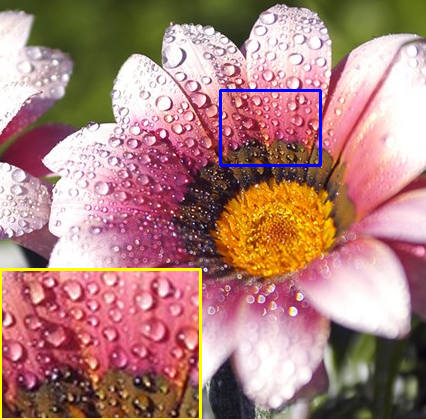}}\hfill
    \caption{PnP-HQS reconstruction results for motion deblurring on \emph{flower}\cite{general100} with kernel3~\cite{levin_kernel_2009} and noise level $\sigma_{\n}=0.02$. Vanilla-PnP introduces artifacts, while our algorithm provides the best reconstruction. The DRUNet denoiser is used in all cases. The PSNR(dB) values are: (a) $21.80$, (b) $20.83$, (c) $29.25$, (d) $29.10$ and (e) $27.58$.}
    \label{fig:lily_deblur_3}
\end{figure}

\begin{figure}[ht]
    \centering
    \captionsetup[subfloat]{labelformat=empty,labelsep=none,justification=centering}
    \subfloat[bicubic]{
    \includegraphics[width=\Rthree]{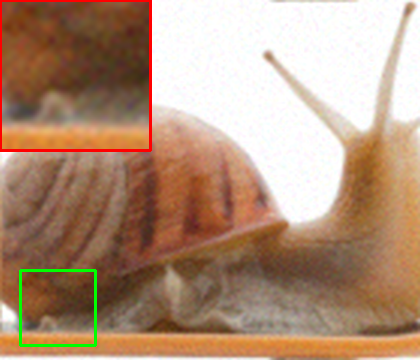}}\hfill
    \subfloat[Vanilla \\ (iter $=3000$)]{
    \includegraphics[width=\Rthree]{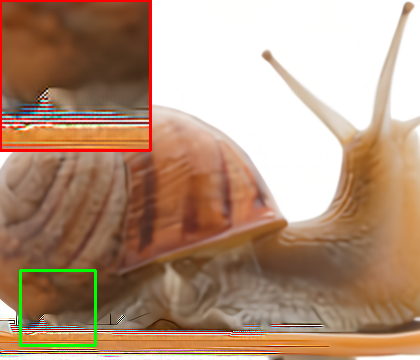}}\hfill
    \subfloat[Ours]{
    \includegraphics[width=\Rthree]{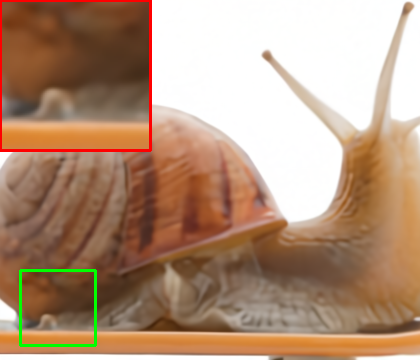}}\hfill
    \subfloat[Vanilla (peak)]{
    \includegraphics[width=\Rthree]{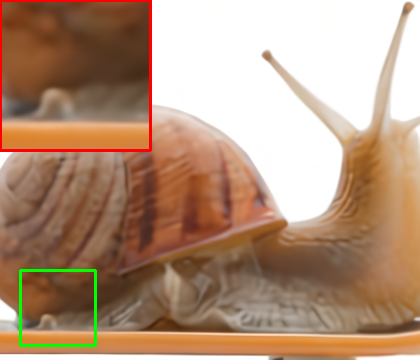}}\hfill
    \subfloat[$\p$]{
    \includegraphics[width=\Rthree]{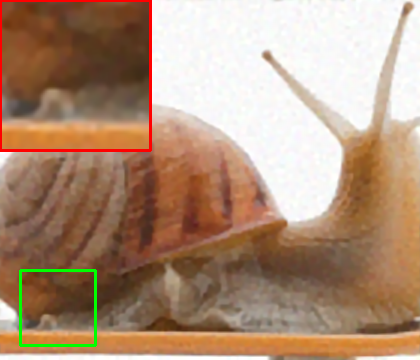}}\hfill
     \subfloat[clean]{
    \includegraphics[width=\Rthree]{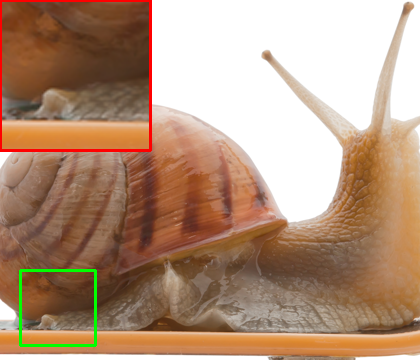}}\hfill
    \caption{PnP-HQS reconstruction results for $3\times$ superresolution on \emph{snail}\cite{general100} with kernel3 and noise level $\sigma_{\n}=0.02$; see \Cref{tab:cbsd10_superresolution_vista}. Vanilla-PnP introduces artifacts, while our algorithm provides a stable reconstruction. The GSDRUNet denoiser is used in all cases. The PSNR(dB) values are: (a) $21.16$, (b) $21.01$, (c) $33.06$, (d) $33.08$ and (e) $29.86$.}
    \label{fig:snail_3xSR}
\end{figure}

\begin{figure}[ht]
    \centering
    \captionsetup[subfloat]{labelformat=empty,labelsep=none,justification=centering}
    \subfloat[bicubic]{
    \includegraphics[width=\Rthree]{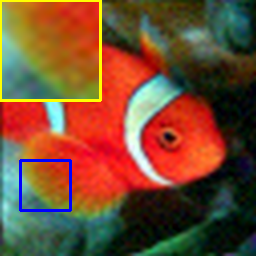}}\hfill
    \subfloat[Vanilla (peak)]{
    \includegraphics[width=\Rthree]{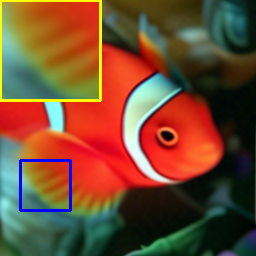}}\hfill
    \subfloat[Ours]{
    \includegraphics[width=\Rthree]{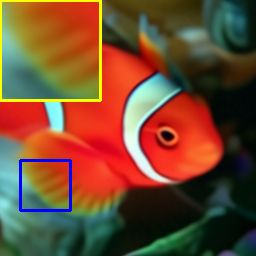}}\hfill
    \subfloat[Equiv (peak)]{
    \includegraphics[width=\Rthree]{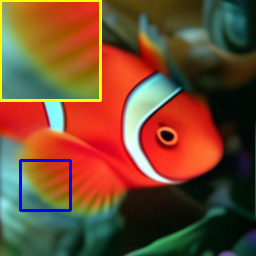}}\hfill
    \subfloat[$\p$]{
    \includegraphics[width=\Rthree]{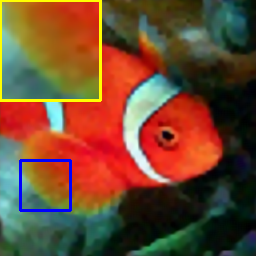}}\hfill
    \subfloat[clean]{
    \includegraphics[width=\Rthree]{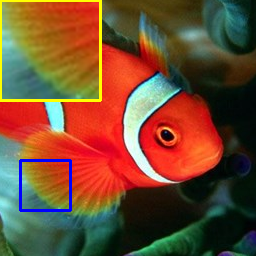}}\hfill
    \caption{PnP-ADMM reconstruction results for $4\times$ superresolution on \textit{fish}~\cite{general100} with noise level $\sigma_{\n}=0.03$. Our method with the DRUNet denoiser suppresses the artifacts and post-peak degradation observed in Vanilla-PnP while maintaining competitive reconstruction quality. The PSNR(dB) values are: (a) $22.70$, (b) $28.93$, (c) $29.34$, (d) $28.67$ and (e) $27.11$.}
    \label{fig:fish_4xSR}
\end{figure}

\begin{figure}[ht]
    \centering
    \captionsetup[subfloat]{labelformat=empty,labelsep=none,justification=centering}
    \subfloat[PSNR plots for \Cref{fig:lily_deblur_3}.]{
    \label{fig:lily-psnr}
    \includegraphics[width=0.31\columnwidth]{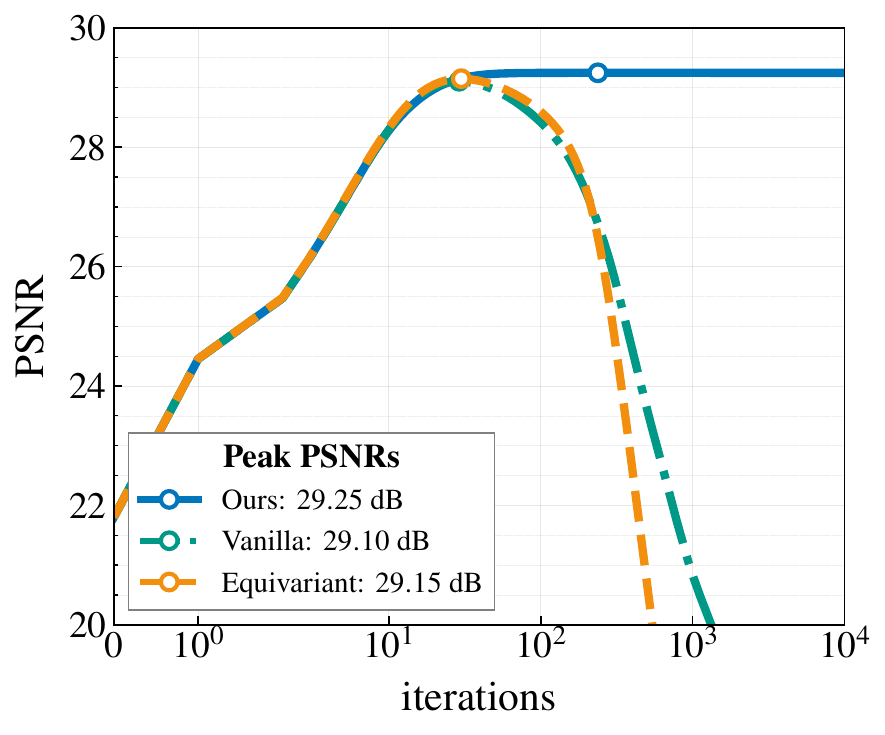}}\hfill
    \subfloat[PSNR plots for \Cref{fig:snail_3xSR}.]{
    \label{fig:snail-psnr}
    \includegraphics[width=0.31\columnwidth]{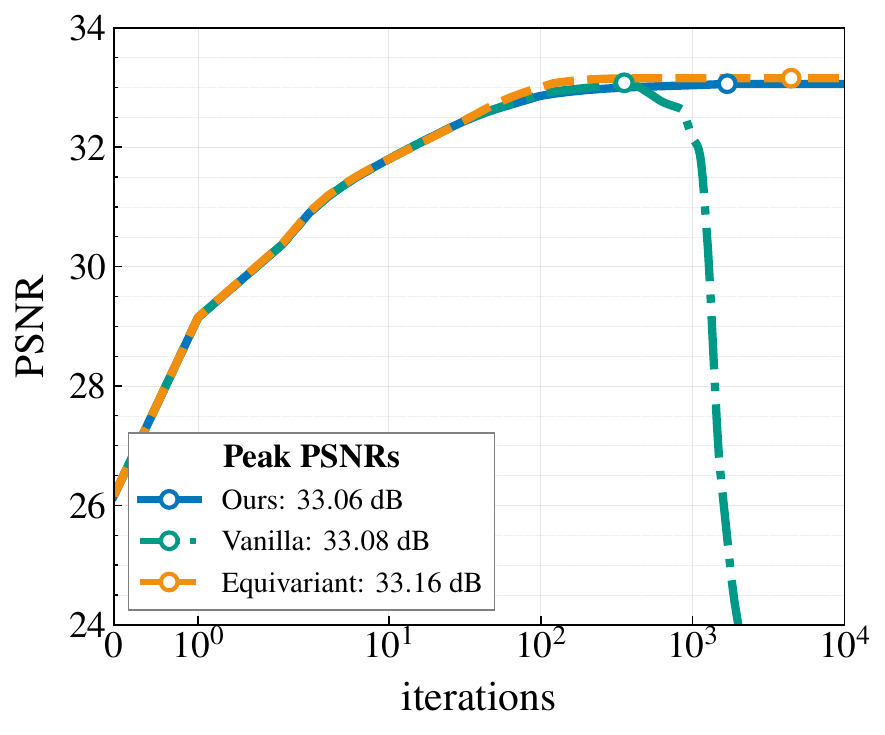}}
    \subfloat[PSNR plots for \Cref{fig:fish_4xSR}.]{
    \label{fig:fish-psnr}
    \includegraphics[width=0.31\columnwidth]{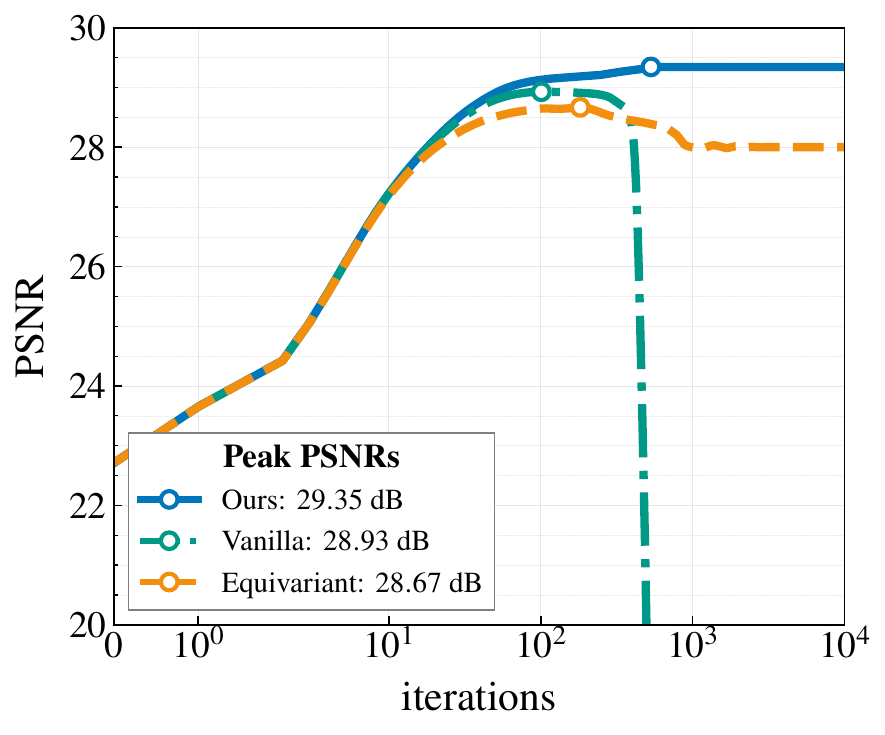}}
    \caption{Experiments showing that Vanilla-PnP and Equivariant-PnP can attain high peak PSNR but may collapse at later iterations, whereas our algorithm remains stable while maintaining high reconstruction quality.}
    \label{fig:psnr_plots_exps}
\end{figure}

We demonstrate the effectiveness of our stabilization framework on two standard inverse problems: \emph{deblurring} and \emph{superresolution}~\cite{zhang2021plug,terris2024equivariant}. The purpose of these experiments\footnote{code available at \url{https://github.com/trishitmg/costa}} is to demonstrate that our method prevents post-peak collapse and yields \emph{reliable} reconstructions across PnP frameworks, denoiser backbones, noise levels, and inverse problems.

\paragraph{\normalfont\bfseries Experimental setup.}
We use the kernels from~\cite{levin_kernel_2009} for motion deblurring, while a $25\times25$ Gaussian kernel with standard deviation $1.6$ is used for Gaussian deblurring. For superresolution, we apply Gaussian blur with standard deviation $0.7$, $1.2$, $1.6$, or $2.0$, followed by downsampling by a factor of $2$, $3$, or $4$.
In all experiments, the noise is additive Gaussian with standard deviation in $(0,0.03]$.
The test images are taken from set3c, CBSD68~\cite{bsd500}, and General100~\cite{general100}.
We initialize deblurring with the observed blurry image and superresolution with bicubic upsampling.

All pretrained models are obtained through DeepInverse~\cite{tachella2023deepinverse}.
We evaluate PnP-HQS~\eqref{eq:pnphqs} with pretrained denoisers DnCNN~\cite{zhang2017beyond}, DRUNet~\cite{zhang2021plug}, DiffUNet~\cite{choi2021conditioning}, GSDRUNet~\cite{hurault2022gradient}, and MMO~\cite{pesquet_learning_2021}.
We use $\rho=7$ and $\sigma=\sigma_{\n}$ for the anchor $\Sctr$ in~\eqref{eq:sctr}. 

To illustrate the stability landscape of denoiser-driven reconstruction operators, we visualize \emph{stability regions} obtained by sweeping pairs of algorithm parameters and measuring how long the PSNR remains close to its peak (see \Cref{fig:rot}). The resulting stability regions show that many PnP settings operate near instability. The stable regions are often narrow and depend strongly on the solver, while small parameter changes can lead to sharp post-peak degradation. This sensitivity provides further motivation for our stabilization method. The computational overhead of our method is around $20\%$ compared to Vanilla-PnP; see Appendix J.2 for details.

\paragraph{\normalfont\bfseries Quantitative results.}
Since stabilization is our primary objective, we use Vanilla PnP and Equivariant PnP~\cite{terris2024equivariant} as the main baselines and compare how reliably they avoid post-peak collapse. For completeness, we also compare against the high-performing reconstruction pipeline DPIR\cite{zhang2021plug} and the convergent methods GSPnP~\cite{hurault2022gradient} and DEAL~\cite{pourya2025dealing}, using the recommended settings from their respective codebases.
We report both \textbf{peak PSNR} (maximum over iterations) and \textbf{final PSNR} 
(PSNR after $10,\!000$ iterations) to represent stability and long-term behavior.

Tables~\ref{tab:cbsd10_deblurring_vista} and~\ref{tab:cbsd10_superresolution_vista} report \emph{peak} and \emph{final} PSNR (mean $\pm$ std.\ dev.) on CBSD10~\cite{hurault2022gradient} at $\sigma_{\n}=0.02$ across multiple PnP frameworks and denoisers. 
Within each column, \BestCol{teal} and \runupCol{orange} indicate the best and second-best results, while \textbf{bold} and \underline{underlined} entries indicate the overall best and second-best values.

Across settings, \textbf{Vanilla-PnP} and often \textbf{Equivariant-PnP} achieve strong \emph{peak} PSNR but frequently diverge (\xmark) or degrade substantially afterward. In contrast, our stabilization algorithm consistently keeps the \emph{final} PSNR close to the \emph{peak}, demonstrating stable performance without early stopping.


Our results show that stabilization through \Cref{algo:vista} alone can recover much of the performance typically attributed to carefully tuned reconstruction pipelines such as \textbf{DPIR}, \textbf{GSPnP}, and \textbf{DEAL}, while simultaneously ensuring stable PnP trajectories across different frameworks, denoisers, and inverse problems. For example, in the deblurring experiments, our method with HQS+GS-DRUNet achieves \textbf{30.70} dB for the motion blur case, equivalent to DPIR (\underline{30.69} dB). 
Similarly, for $2\times$ superresolution, our method attains \textbf{27.99} dB, whereas DPIR attains \underline{27.93} dB. 
Across the remaining settings, including Gaussian deblurring and $3\times$ and $4\times$ superresolution, the stabilized variants remain competitive and closely match the best-performing methods.

\paragraph{\normalfont\bfseries Qualitative results.}
\Cref{fig:lily_deblur_3,fig:snail_3xSR,fig:fish_4xSR} show the same effect visually. For motion deblurring (\Cref{fig:lily_deblur_3}) and $3\times$ superresolution (\Cref{fig:snail_3xSR}), Vanilla-PnP introduces artifacts and deteriorates at later iterations, while our method produces cleaner reconstructions and avoids post-peak collapse. For $4\times$ superresolution with PnP-ADMM (\Cref{fig:fish_4xSR}), our stabilized update suppresses the post-peak degradation of Vanilla-PnP while maintaining competitive perceptual quality. Finally, \Cref{fig:psnr_plots_exps} shows representative PSNR trajectories. Vanilla-PnP and Equivariant-PnP may attain slightly higher peaks but can collapse, whereas our method remains stable over long iterations.

\begin{table*}[p]
\centering
\caption{PSNR results (mean $\pm$ std.\ dev.) for deblurring using various PnP frameworks on CBSD10 ($\sigma_{\n}=0.02$). \xmark~indicates divergence of the iterates.}
\label{tab:cbsd10_deblurring_vista}
\setlength\tabcolsep{5pt}
\scriptsize
\begin{tabular}{llcccc}
\toprule
\multirow{3}{*}{\textbf{Framework}} & \multirow{3}{*}{\textbf{Method}} &  \multicolumn{2}{c}{\multirow{2}{*}{
\raisebox{\height}{\includegraphics[height=0.08\textwidth,keepaspectratio]{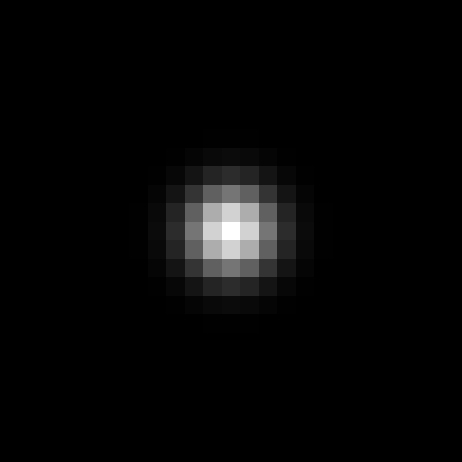}}}} &  
\multicolumn{2}{c}{
\includegraphics[width=\lenBlur]{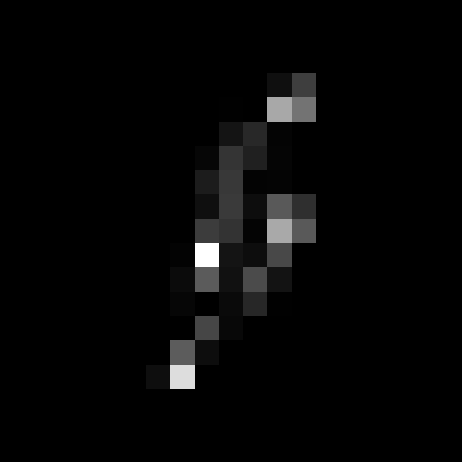}
\includegraphics[width=\lenBlur]{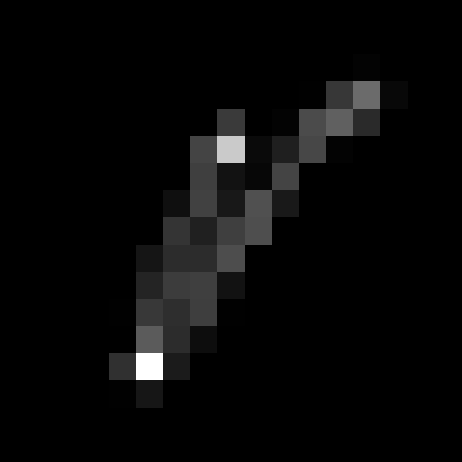}
\includegraphics[width=\lenBlur]{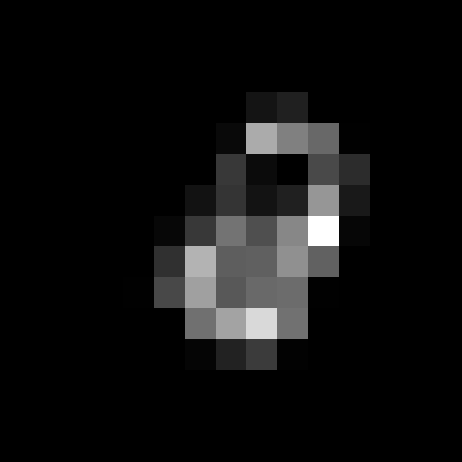}
\includegraphics[width=\lenBlur]
{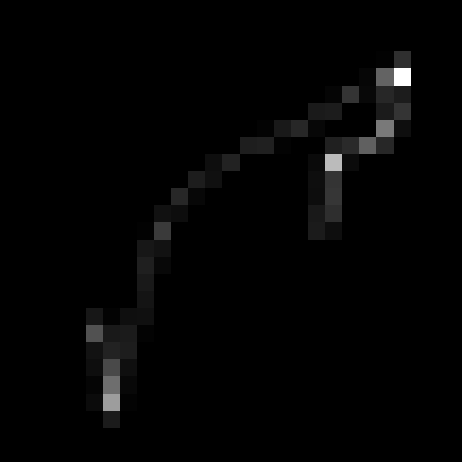}
}\\
& & & & 
\multicolumn{2}{c}{
\includegraphics[width=\lenBlur]{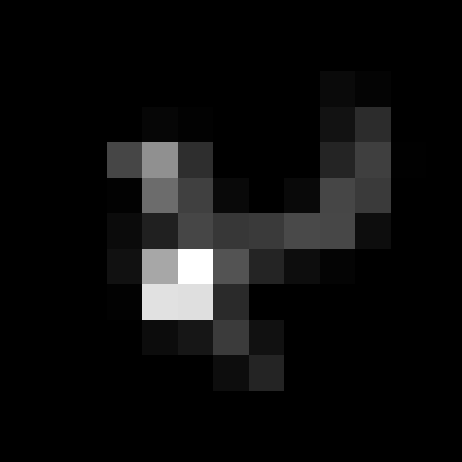}
\includegraphics[width=\lenBlur]{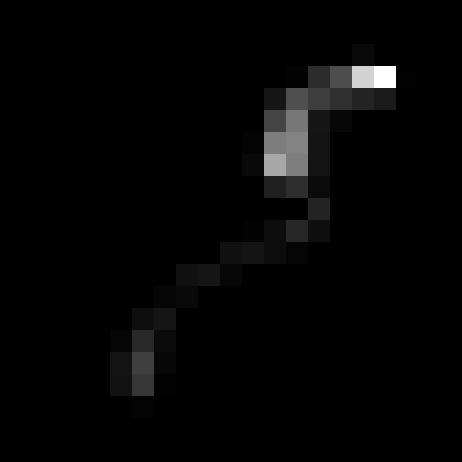}
\includegraphics[width=\lenBlur]{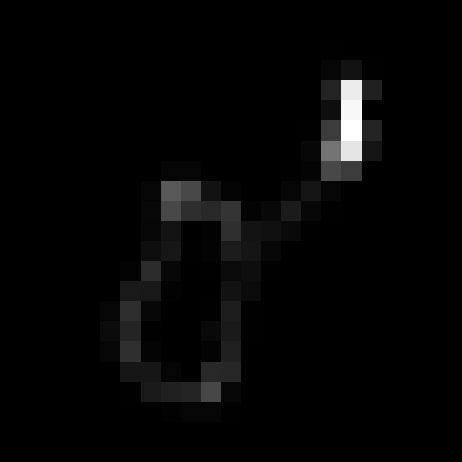}
\includegraphics[width=\lenBlur]{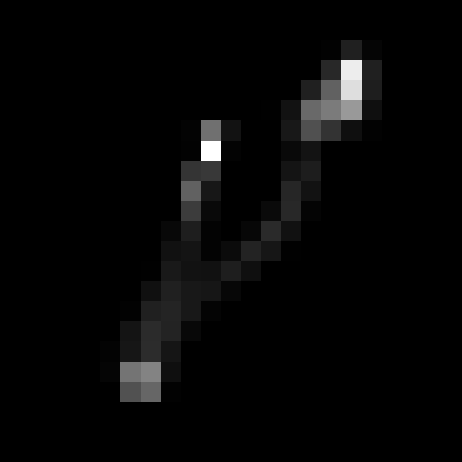}
}\\
\cmidrule(lr){3-4} \cmidrule(lr){5-6}
 &  & Peak & Final & Peak & Final \\
\midrule
\multicolumn{2}{c}{\textbf{Observed}} & \multicolumn{2}{c}{$24.15\pm3.19$} & \multicolumn{2}{c}{$20.38\pm3.16$} \\
\midrule
$\Sctr$ & & \multicolumn{2}{c}{$27.55\pm3.96$} & \multicolumn{2}{c}{$28.71\pm0.64$}\\
\addlinespace[1.5pt]
\hline
\addlinespace[1.5pt]
\multirow{3}{*}{\parbox{1.5cm}{\textit{PnP-PGD} +\newline \textbf{DnCNN}~\cite{zhang2017beyond}}} & \textbf{Ours} & \runupCol{$27.96\pm4.38$} & \BestCol{$27.95\pm4.38$} & \BestCol{$28.78\pm3.02$} & \BestCol{$28.77\pm3.02$} \\
 & Vanilla & $27.78\pm4.13$ & \xmark & $27.53\pm2.97$ & \xmark \\
 & Equiv. & \BestCol{$27.97\pm4.37$} & \xmark & \runupCol{$27.58\pm3.01$} & \xmark \\
\addlinespace[1.5pt]
\hline
\addlinespace[1.5pt]
\multirow{3}{*}{\parbox{1.5cm}{\textit{PnP-HQS} +\newline \textbf{DRUNet}~\cite{zhang2021plug}}} & \textbf{Ours} & \BestCol{$27.86\pm4.42$} & \BestCol{$27.86\pm4.42$} & \BestCol{$30.02\pm4.03$} & \BestCol{$30.02\pm4.03$} \\
 & Vanilla & $27.60\pm4.28$ & \runupCol{$26.90\pm4.36$} & $29.78\pm3.89$ & \xmark \\
 & Equiv. & \runupCol{$27.70\pm4.42$} & \xmark & \runupCol{$29.83\pm3.94$} & \xmark \\
\addlinespace[1.5pt] \hline \addlinespace[1.5pt]
\multirow{3}{*}{\parbox{2cm}{\textit{PnP-HQS} +\newline \textbf{DiffUNet}~\cite{choi2021conditioning}}} & \textbf{Ours} & \BestCol{$27.84\pm3.98$} & \BestCol{$27.82\pm3.96$} & \BestCol{$29.62\pm3.49$} & \BestCol{$29.61\pm3.48$} \\
 & Vanilla & $27.21\pm3.52$ & \runupCol{$18.90\pm0.92$} & $29.20\pm3.22$ & \runupCol{$23.12\pm3.17$} \\
 & Equiv. & \runupCol{$27.40\pm3.72$} & $18.71\pm0.72$ & \runupCol{$29.23\pm3.25$} & $23.00\pm3.14$ \\
\addlinespace[1.5pt] \hline \addlinespace[1.5pt]
\multirow{3}{*}{\parbox{2cm}{\textit{PnP-HQS} +\newline \textbf{GS-DRUNet}~\cite{hurault2022gradient}}} & \textbf{Ours} & \BestCol{$28.29\pm4.39$} & \BestCol{$28.28\pm4.39$} & \BestCol{$\mathbf{30.70\pm4.12}$} & \BestCol{$\mathbf{30.70\pm4.12}$} \\
 & Vanilla & $28.18\pm4.26$ & $23.86\pm8.18$ & $30.56\pm4.13$ & $25.72\pm8.23$ \\
 & Equiv. & \runupCol{$28.27\pm4.42$} & \runupCol{$27.86\pm4.05$} & \runupCol{$30.58\pm4.16$} & \runupCol{$26.10\pm8.37$} \\
\addlinespace[1.5pt] \hline \addlinespace[1.5pt]
\multirow{3}{*}{\parbox{2cm}{\textit{PnP-ADMM} +\newline \textbf{CoCo-DRUNet}~\cite{wei2025learning}}} & \textbf{Ours} & \BestCol{$27.85\pm4.30$} & \BestCol{$27.65\pm4.39$} & \BestCol{$29.26\pm3.01$} & \BestCol{$29.19\pm3.05$} \\
 & Vanilla & $27.59\pm4.14$ & \xmark & $28.28\pm2.92$ & \xmark \\
 & Equiv. & \runupCol{$27.65\pm4.20$} & \xmark & \runupCol{$28.31\pm2.93$} & \xmark \\
\addlinespace[1.5pt] \hline \addlinespace[1.5pt]
\multirow{3}{*}{\parbox{2cm}{\textit{PnP-ADMM} +\newline \textbf{MMO}~\cite{pesquet_learning_2021}}} & \textbf{Ours} & $27.90\pm4.06$ & $27.90\pm4.06$ & \BestCol{$29.09\pm2.84$} & \BestCol{$29.09\pm2.84$} \\
 & Vanilla & \runupCol{$27.93\pm3.99$} & \runupCol{$27.93\pm3.99$} & \runupCol{$27.68\pm2.78$} & $22.43\pm2.31$ \\
 & Equiv. & \BestCol{$27.94\pm4.01$} & \BestCol{$27.94\pm4.01$} & $27.68\pm2.79$ & \runupCol{$22.57\pm2.44$} \\
\addlinespace[1.5pt] \hline \addlinespace[1.5pt]
\multirow{3}{*}{\parbox{2cm}{\textit{RED-GD} +\newline \textbf{DRUNet}~\cite{zhang2021plug}}} & \textbf{Ours} & \BestCol{$27.74\pm4.19$} & \BestCol{$27.74\pm4.19$} & \BestCol{$29.48\pm3.18$} & \BestCol{$29.48\pm3.18$} \\
 & Vanilla & $27.49\pm4.08$ & \runupCol{$27.00\pm4.44$} & $29.17\pm3.04$ & \runupCol{$25.93\pm6.13$} \\
 & Equiv. & \runupCol{$27.59\pm4.18$} & \xmark & \runupCol{$29.21\pm3.06$} & \xmark \\
\addlinespace[1.5pt] \hline \addlinespace[1.5pt]
\textbf{DPIR}~\cite{zhang2021plug} & & \multicolumn{2}{c}{\underline{$28.33\pm4.65$}} & \multicolumn{2}{c}{\underline{$30.69\pm0.52$}} \\
\textbf{GSPnP}~\cite{hurault2022gradient} & & \multicolumn{2}{c}{$28.21\pm4.23$} & \multicolumn{2}{c}{$30.58\pm0.59$} \\
\textbf{DEAL}~\cite{pourya2025dealing} & & \multicolumn{2}{c}{$\mathbf{28.44\pm4.49}$} & \multicolumn{2}{c}{$30.45\pm0.44$} \\
\bottomrule
\end{tabular}
\end{table*}

\begin{table*}[htpb]
\centering
\caption{PSNR results (mean $\pm$ std.\ dev.) for superresolution using various PnP frameworks on CBSD10 ($\sigma_{\n}=0.02$). \xmark~indicates divergence of the iterates.}
\label{tab:cbsd10_superresolution_vista}
\setlength\tabcolsep{2pt}
\tiny
\begin{tabular}{lllcccccc}
\toprule
 & \textbf{Framework} & \textbf{Method} & \multicolumn{2}{c}{$s=2$} & \multicolumn{2}{c}{$s=3$} & \multicolumn{2}{c}{$s=4$} \\
\cmidrule(lr){4-5} \cmidrule(lr){6-7} \cmidrule(lr){8-9}
 & &  & Peak & Final & Peak & Final & Peak & Final \\
\midrule
& \multicolumn{2}{c}{\textbf{Bicubic}} & \multicolumn{2}{c}{$24.28\pm3.37$} & \multicolumn{2}{c}{$22.68\pm3.46$} & \multicolumn{2}{c}{$21.40\pm3.48$} \\
\midrule
\multirow{18}{*}{
\parbox{\lenSR}{
\includegraphics[width=\lenSR]{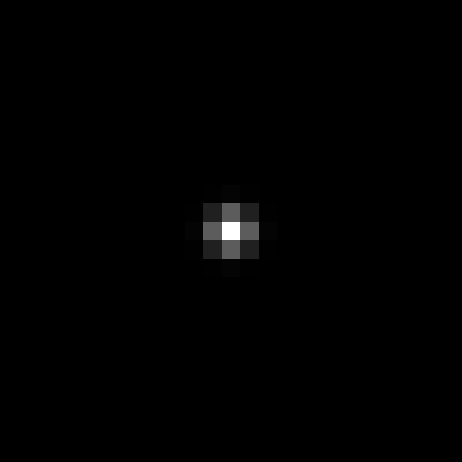}\\
\includegraphics[width=\lenSR]{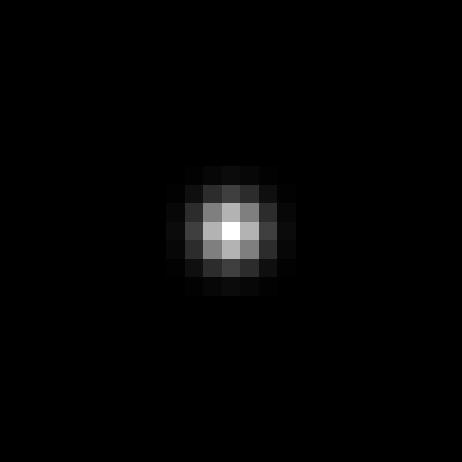}\\
\includegraphics[width=\lenSR]{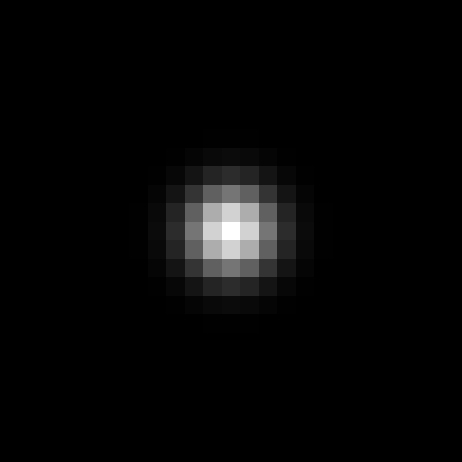}\\
\includegraphics[width=\lenSR]{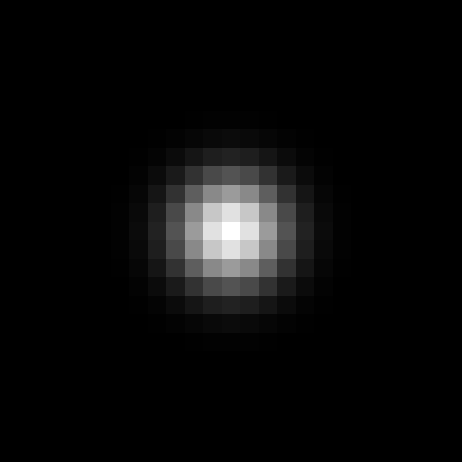}
}
} & $\Sctr$ & & \multicolumn{2}{c}{$26.84\pm0.99$} & \multicolumn{2}{c}{$24.97\pm0.51$} & \multicolumn{2}{c}{$22.77\pm1.87$} \\
\addlinespace[1.5pt] \cline{2-9} \addlinespace[1.5pt]
& \multirow{3}{*}{\parbox{1.5cm}{\textit{PnP-HQS} +\newline \textbf{DRUNet}~\cite{zhang2021plug}}} & \textbf{Ours} & \BestCol{$27.48\pm4.25$} & \BestCol{$27.48\pm4.24$} & \BestCol{$25.85\pm4.40$} & \BestCol{$25.83\pm4.43$} & \BestCol{$24.29\pm4.46$} & \BestCol{$23.96\pm4.80$} \\
& & Vanilla & $27.29\pm4.25$ & \xmark & $25.64\pm4.35$ & \xmark & $24.04\pm4.32$ & \xmark \\
& & Equiv. & \runupCol{$27.37\pm4.32$} & \xmark & \runupCol{$25.72\pm4.35$} & \xmark & \runupCol{$24.15\pm4.31$} & \xmark \\
\addlinespace[1.5pt] \cline{2-9} \addlinespace[1.5pt]
& \multirow{3}{*}{\parbox{1.5cm}{\textit{PnP-HQS} +\newline \textbf{GS-DRUNet}~\cite{hurault2022gradient}}} & \textbf{Ours} & \BestCol{$\mathbf{27.99\pm4.52}$} & \BestCol{$\mathbf{27.95\pm4.47}$} & \runupCol{$26.30\pm4.52$} & \BestCol{$26.29\pm4.53$} & \BestCol{$24.74\pm4.41$} & \BestCol{$24.62\pm4.37$} \\
& & Vanilla & $27.89\pm4.37$ & $23.84\pm7.91$ & $26.24\pm4.48$ & $24.08\pm5.78$ & $24.65\pm4.37$ & $21.02\pm6.71$ \\
& & Equiv. & \runupCol{$27.97\pm4.49$} & \runupCol{$25.24\pm7.23$} & \BestCol{$26.32\pm4.57$} & \runupCol{$24.22\pm6.41$} & \runupCol{$24.69\pm4.42$} & \runupCol{$21.51\pm6.33$} \\
\addlinespace[1.5pt] \cline{2-9} \addlinespace[1.5pt]
& \multirow{3}{*}{\parbox{1.5cm}{\textit{PnP-IHQS} +\newline \textbf{SPC-DRUNet}~\cite{wei2024learning}}} & \textbf{Ours} & \BestCol{$27.26\pm4.04$} & \BestCol{$27.22\pm4.07$} & \BestCol{$25.61\pm4.15$} & \BestCol{$25.55\pm4.22$} & \BestCol{$24.03\pm4.12$} & \BestCol{$23.75\pm4.29$} \\
& & Vanilla & \runupCol{$27.14\pm3.96$} & \xmark & $25.46\pm4.05$ & \xmark & $23.88\pm3.99$ & \xmark \\
& & Equiv. & $27.12\pm3.98$ & \xmark & \runupCol{$25.47\pm4.07$} & \xmark & \runupCol{$23.88\pm4.01$} & \xmark \\
\addlinespace[1.5pt] \cline{2-9} \addlinespace[1.5pt]
& \multirow{3}{*}{\parbox{1.5cm}{\textit{RED-GD} +\newline \textbf{GS-DRUNet}~\cite{hurault2022gradient}}} & \textbf{Ours} & $27.54\pm4.18$ & \BestCol{$27.53\pm4.17$} & $26.08\pm4.39$ & \BestCol{$26.07\pm4.39$} & \BestCol{$24.66\pm4.34$} & \BestCol{$24.66\pm4.35$} \\
& & Vanilla & \runupCol{$27.58\pm4.02$} & $25.48\pm5.04$ & \runupCol{$26.11\pm4.33$} & $24.84\pm4.33$ & $24.61\pm4.33$ & $23.24\pm4.42$ \\
& & Equiv. & \BestCol{$27.73\pm4.17$} & \runupCol{$26.60\pm5.03$} & \BestCol{$26.22\pm4.44$} & \runupCol{$25.35\pm4.81$} & \runupCol{$24.65\pm4.36$} & \runupCol{$23.38\pm4.62$} \\
\addlinespace[1.5pt] \cline{2-9} \addlinespace[1.5pt]
 & \textbf{DPIR}~\cite{zhang2021plug} & & \multicolumn{2}{c}{\underline{$27.93\pm1.16$}} & \multicolumn{2}{c}{\underline{$26.32\pm0.35$}} & \multicolumn{2}{c}{\underline{$24.76\pm0.36$}} \\
 & \textbf{GSPnP}~\cite{hurault2022gradient} & & \multicolumn{2}{c}{$27.47\pm0.78$} & \multicolumn{2}{c}{$26.13\pm0.28$} & \multicolumn{2}{c}{$24.66\pm0.54$}  \\
& \textbf{DEAL}~\cite{pourya2025dealing} & & \multicolumn{2}{c}{$27.87\pm1.05$} & \multicolumn{2}{c}{$\mathbf{26.40\pm0.37}$} & \multicolumn{2}{c}{$\mathbf{25.02\pm0.25}$} \\
\bottomrule
\end{tabular}
\end{table*}

\section{Conclusion}
We presented a stabilization framework for denoiser-driven black-box reconstruction systems that applies across proximal solvers, pretrained denoisers, and measurement models, without retraining or modifying the underlying black box. Our approach characterizes instability via a stability index $\eta$ and adaptively controls it, while keeping the updates largely driven by the original operator to preserve reconstruction quality. The resulting method is easy to deploy as a drop-in wrapper, requires no additional parameter tuning, and uses only input-output evaluations of the black-box IR system. 

Extensive experiments show that the proposed framework consistently prevents peak-and-collapse behavior and produces reliable reconstructions across a broad range of settings. In particular, it enables the stable use of standard IR pipelines without the careful modifications that are otherwise needed to avoid collapse. Future work includes incorporating this stabilization principle into training to obtain inherently reliable IR systems and developing a deeper understanding of peak-and-collapse in modern denoising models.

\section*{Acknowledgements}
A.~Sinha was supported by the Government of India through the Prime Minister’s Research Fellowship (PMRF) under grant TF/PMRF-22-5534, and by the Qualcomm Innovation Fellowship India under grant 4300074105. 

K.~N.~Chaudhury and T.~Mukherjee were supported by the Government of India through grant ANRF/ARG/2025/00696/ENS. The authors thank the Kotak IISc AI-ML Center at the Indian Institute of Science for providing GPU resources.

%
%
\bibliographystyle{splncs04}
\bibliography{ref}

\input{appendix}

\end{document}

%% file: arch.tex
%
\definecolor{convfill}{HTML}{F2EDE4}      
\definecolor{convstroke}{HTML}{B0A48E}     
\definecolor{normfill}{HTML}{E4EDF2}       
\definecolor{normstroke}{HTML}{7A9EB5}     
\definecolor{scalefill}{HTML}{E4F2EA}      
\definecolor{scalestroke}{HTML}{7AB59A}    
\definecolor{actfill}{HTML}{F2E4EC}        
\definecolor{actstroke}{HTML}{B57A96}      
\definecolor{taufill}{HTML}{EAE4F2}        
\definecolor{taustroke}{HTML}{8E7AB5}      
\definecolor{stepfill}{HTML}{F5F4C4}       
\definecolor{stepstroke}{HTML}{B5967A}     
\definecolor{plusfill}{HTML}{7AB5A8}       
\definecolor{minusfill}{HTML}{B57A7A}      
\definecolor{arrowgray}{HTML}{5A5A5A}     
\definecolor{sigmaline}{HTML}{7AB59A}      
\begin{figure}[t]
    \centering
    \resizebox{\linewidth}{!}{
    \begin{tikzpicture}[
        convbox/.style={
            rectangle, rounded corners=6pt,
            fill=convfill, draw=convstroke, line width=0.8pt,
            minimum width=1.5cm, minimum height=2.4cm,
            align=center, font=\large},
        normbox/.style={
            rectangle, rounded corners=6pt,
            fill=normfill, draw=normstroke, line width=0.8pt,
            minimum width=1.4cm, minimum height=2.4cm,
            align=center, font=\large},
        scalebox/.style={
            rectangle, rounded corners=6pt,
            fill=scalefill, draw=scalestroke, line width=0.8pt,
            minimum width=1.3cm, minimum height=2.4cm,
            align=center, font=\large},
        actbox/.style={
            rectangle, rounded corners=6pt,
            fill=actfill, draw=actstroke, line width=0.8pt,
            minimum width=1.2cm, minimum height=2.4cm,
            align=center, font=\large},
        stepbox/.style={
            rectangle, rounded corners=5pt,
            fill=stepfill, draw=stepstroke, line width=0.8pt,
            minimum width=0.85cm, minimum height=0.85cm,
            align=center, font=\LARGE},
        taubox/.style={
            rectangle, rounded corners=4pt,
            fill=taufill, draw=taustroke, line width=0.8pt,
            minimum width=0.75cm, minimum height=0.75cm,
            align=center, font=\LARGE},
        pluscirc/.style={
            circle, draw=plusfill!80!black, line width=0.8pt,
            fill=plusfill, text=white, inner sep=2pt,
            font=\Large\bfseries},
        minuscirc/.style={
            circle, draw=minusfill!80!black, line width=0.8pt,
            fill=minusfill, text=white, inner sep=2pt,
            font=\Large\bfseries},
        arr/.style={
            -Stealth, line width=0.7pt, color=arrowgray,
            rounded corners=2pt},
        sigmaarr/.style={
            -Stealth, line width=0.7pt, color=sigmaline!40!black,
            rounded corners=2pt, dashed},
    ]

    \node[convbox] (Wfwd) at (-5.0, 0) {$\widehat{\W}_{\mathrm{conv}}$};

    \node[normbox] (T1) at (-3.1, 0) {$\R^{-1/2}$};

    \node[scalebox] (scaleup) at (-1.3, 0) {$\ScalFunc(\sigma_{\n})$};

    \node[actbox] (act) at (0.4, 0) {$\Lspline$};

    \node[scalebox] (scaledn) at (2.1, 0) {$\dfrac{1}{\ScalFunc(\sigma_{\n})}$};

    \node[normbox] (T2) at (3.9, 0) {$\R^{-1/2}$};

    \node[convbox] (Wtrn) at (5.8, 0) {$\widehat{\W}_{\mathrm{conv}}^{\top}$};

    \node[pluscirc]  (plus)  at (7.4, 0) {$+$};
    \node[stepbox]   (gamma) at (8.6, 0) {$\gamma$};
    \node[minuscirc] (minus) at (9.8, 0) {$-$};

    \node[taubox] (tau) at (0.4, -2.2) {$\tau$};

    \node[inner sep=2pt]
        (inimg) at (-8.5, 0)
        {\includegraphics[width=2.5cm]{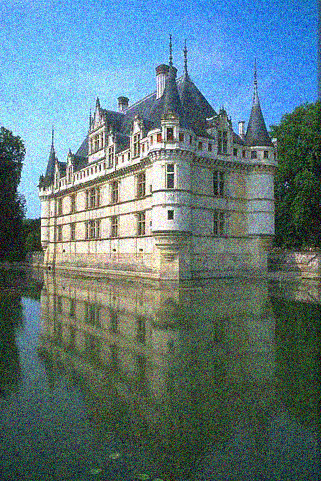}};
    \node[below=2pt, font=\large] at (inimg.south) {noisy $\bar{\x}+\n$};

    \node[inner sep=2pt]
        (outimg) at (12.5, 0)
        {\includegraphics[width=2.5cm]{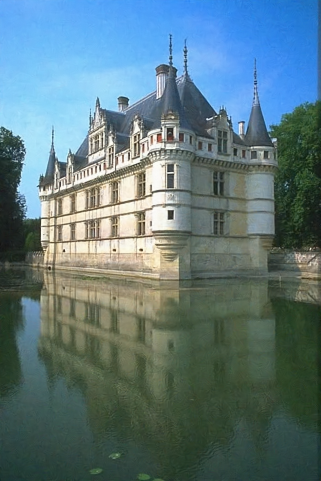}};
    \node[below=2pt, font=\large] at (outimg.south) {denoised $\hat{\x}$};


    \draw[arr] (inimg.east)  -- (Wfwd.west);
    \draw[arr] (minus.east)  -- (outimg.west);

    \draw[arr] (Wfwd.east)    -- (T1.west);
    \draw[arr] (T1.east)      -- (scaleup.west);
    \draw[arr] (scaleup.east) -- (act.west);
    \draw[arr] (act.east)     -- (scaledn.west);
    \draw[arr] (scaledn.east) -- (T2.west);
    \draw[arr] (T2.east)      -- (Wtrn.west);
    \draw[arr] (Wtrn.east)    -- (plus.west);
    \draw[arr] (plus.east)    -- (gamma.west);
    \draw[arr] (gamma.east)   -- (minus.west);

    \draw[arr] (-6.8, 0) -- (-6.8, -2.2) -- (tau.west);
    \draw[arr] (tau.east)  -- (7.4, -2.2) -- (plus.south);

    \draw[arr] (-6.5, 0) -- (-6.5, 1.8) -- (9.8, 1.8) -- (minus.north);

    %
    \pgfmathsetmacro{\sigmaRailY}{2.2}
    %
    \coordinate (sigmaJunction) at (0.4, \sigmaRailY);
    %
    \draw[sigmaarr, -{}, rounded corners=4pt]
        (inimg.north) -- ++(0, 1.)       
        -| (sigmaJunction);               
    %
    \node[above=-1pt, font=\large, color=sigmaline!40!black]
        at (-8.0, \sigmaRailY) {$\sigma_{\n}$};
    %
    \draw[sigmaarr, rounded corners=4pt]
        (sigmaJunction) -- (scaleup.north);
    %
    \draw[sigmaarr, rounded corners=4pt]
        (sigmaJunction) -- (scaledn.north);

    \end{tikzpicture}
    }
    \caption{Architecture of the noise-aware denoiser $\Dsig$ defined in~\eqref{eq:Dsig}.}
    \label{fig:arch}
\end{figure}

%% file: appendix.tex
\appendix

\newcommand{\xsol}{\x^\ast}
\newcommand{\xfp}{\x_\ast}

\section*{Appendix}

\noindent\textbf{Organization}
\begin{itemize}
    \item \textbf{Appendices A--F (Proofs):} Detailed proofs of the mathematical results stated in the main text.

    \item \textbf{Appendix G (Standalone performance of $\Dsig$):} Additional experiments evaluating the contractive denoiser $\Dsig$ and its induced IR operators in isolation. These results provide further insight into the behaviour of the proposed contractive components outside the stabilization framework.

    \item \textbf{Appendix H (Classical early stopping methods)} Comparisons with fixed-iteration stopping, a discrepancy principle, and related classical criteria. These experiments show that the appropriate stopping iteration can depend strongly on the input image and that, even with the exact noise level, the discrepancy principle may fail to stop near the PSNR peak.

    \item \textbf{Appendix I (Ablation studies):} Architectural ablations of $\Dsig$ and comparisons of the resulting anchors used in \Cref{algo:vista}. Since the anchor is the only component that varies in the algorithm, ablations are naturally performed by studying different anchor constructions. These experiments justify the design choices made in the main paper and explain the selection of $\Sctr$ as the anchor operator.

    \item \textbf{Appendix J (Experiments):} Stability landscapes for Equivariant-PnP~\cite{terris2024equivariant}, discussion on computational overhead and additional visual results across multiple inverse problems, noise levels, and modern deep denoisers. These examples illustrate that instability or the \emph{peak-and-collapse} (\emph{PC}) behaviour are common across tasks and architectures, and demonstrate the robustness of the proposed stabilization framework in preventing such failures.
\end{itemize}
\section{Proof of \Cref{lem:eta-reduces}}
\begin{proof}
We begin by expanding the definition of $\cT_{\theta}$:
\begin{align*}
    \|\cT_{\theta}(\x) - \p\| 
    &= \|(1-\theta)\cT(\x) + \theta S(\x) - \p\|\\
    &= \|(1-\theta)(\cT(\x) - \p) + \theta(S(\x) - \p)\|.
\end{align*}
Since $\p$ is a fixed point of $\cS$, we have $\cS(\p) = \p$. Substituting this we have:
\begin{align*}
    \|\cT_{\theta}(\x) - \p\| 
    &= \|(1-\theta)(\cT(\x) - \p) + \theta(S(\x) - \cS(\p))\|\\
    &\le (1-\theta)\|\cT(\x) - \p\| + \theta\|S(\x) - \cS(\p)\|,
\end{align*}
Dividing both sides by $\|\x - \p\|$ and applying the contraction property of $\cS$, we obtain:
\begin{equation*}
    \etap(\x, \cT_{\theta}) \le (1-\theta)\etap(\x, \cT) + \theta\kappa.
\end{equation*}
Therefore, for any $\theta \in (0,1]$, whenever $\etap(\x, \cT) > \kappa$, the desired result follows.
\qed
\end{proof}

\section{Proof of \Cref{thm:theta-threshold}}
\begin{proof}
Let $\kappa < \xi < \etap(\x, T)$. First we show the existence as follows.\\
Define $\psi(\theta) := \etap(\x, T_\theta)$. Note that, $\psi$ is continuous. \\
Observe that $\psi(0) = \etap(\x, T) > \xi$ and $\psi(1) \le \kappa < \xi$. \\
By the Intermediate Value Theorem\cite{rudin1976principles}, there exists $\ttheta \in (0,1)$ such that $\psi(\ttheta) = \xi$.\\
We now show that for all $\zeta \in [\ttheta, 1]$, we have $\etap(\x, T_\zeta) \le \xi$. \\
For any $\zeta \in [\ttheta, 1]$, define $\rho := (\zeta - \ttheta)/(1 - \ttheta) \in [0,1]$ (valid since $\ttheta \in (0,1)$). \\
Note that $T_\zeta = (1-\zeta)T + \zeta S$ can be rewritten as:
\begin{equation*}
    T_\zeta = (1-\rho)T_{\ttheta} + \rho S.
\end{equation*}
Since $\etap(\x, T_{\ttheta}) = \xi > \kappa$, we apply \Cref{lem:eta-reduces} to conclude:
\begin{equation*}
    \etap(\x, T_\zeta) = \etap(\x,(1-\rho)T_{\ttheta} + \rho S) \le \etap(\x,T_{\ttheta}) = \xi
\end{equation*}
This completes the proof.
\qed
\end{proof}


\section{Proof of \Cref{prop:ctr}}

\begin{proof} As the loss $f$ in~\eqref{eq:VarProb} is convex, its proximal operator $\prox_{\rho f}$ is nonexpansive for any $\rho>0$~\cite[Proposition~12.28]{bauschke2011convex}. Consequently, being the composition of the contractive operator $D$ and the nonexpansive operator $\prox_{\rho f}$, the operator $\Sctr$ is itself contractive.
\qed
\end{proof}

\section{Proof of \Cref{prop:ccd}}

\begin{proof}

The linear spline $\Lspline$ is nonexpansive as its slopes are in $[0,1]$. Since $\W$ is also nonexpansive, it follows from~\eqref{eq:gradmodel} that $\nabla\!\Poten$ is $\beta$-Lipschitz, where $\beta=1+\tau$. Moreover, each $\psi_j$ in \eqref{eq:potential} convex, so $\Poten$ is $\mu$-strongly convex with $\mu=\tau$. Hence, if we can show that the gradient step operator \eqref{eq:gradstep_model} of a strongly convex function is contractive, we are done. We invoke the following Lemma to complete the proof. 

\begin{lemma}
\label{lemma:grad_sc}
Let $h : \bbR^n \mapsto \bbR$ be $\beta$-smooth and $\mu$-strongly convex. Then its graident step operator, $G_{\gamma,h} := \cI - \gamma\nabla h$ is contractive $\forall \gamma \in \left(0,2/(\beta+\mu)\right]$ with contraction factor $\kappa = \sqrt{1 - 2\gamma\mu\beta/(\beta + \mu)}$.
\end{lemma}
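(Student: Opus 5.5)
The plan is to compute $\|G_{\gamma,h}(\x)-G_{\gamma,h}(\y)\|^2$ directly and bound the cross term with the standard co-coercivity inequality for functions that are both smooth and strongly convex. Fix $\x,\y\in\bbR^n$ and write $\u=\x-\y$ and $\v=\nabla h(\x)-\nabla h(\y)$. Expanding the square gives
\begin{equation*}
\|G_{\gamma,h}(\x)-G_{\gamma,h}(\y)\|^2=\|\u\|^2-2\gamma\langle \u,\v\rangle+\gamma^2\|\v\|^2 .
\end{equation*}
Everything then depends on a lower bound for $\langle\u,\v\rangle$.

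The key ingredient is the inequality (Nesterov, Thm.~2.1.12): for $h$ that is $\beta$-smooth and $\mu$-strongly convex,
\begin{equation*}
\langle \u,\v\rangle\ \ge\ \frac{\mu\beta}{\mu+\beta}\|\u\|^2+\frac{1}{\mu+\beta}\|\v\|^2 .
\end{equation*}
If a self-contained argument is needed, I would derive it as follows. Set $q=h-\frac{\mu}{2}\|\cdot\|^2$, which is convex and $(\beta-\mu)$-smooth. Baillon--Haddad co-coercivity then gives $\langle \u,\nabla q(\x)-\nabla q(\y)\rangle\ge \frac{1}{\beta-\mu}\|\nabla q(\x)-\nabla q(\y)\|^2$. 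Substituting $\nabla q(\x)-\nabla q(\y)=\v-\mu\u$ and rearranging yields the displayed bound. The degenerate case $\beta=\mu$ is handled directly, since there $\nabla h$ is affine with $\v=\mu\u$. This derivation is the main technical step, and I would cite it rather than reprove it.

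Substituting the bound gives
\begin{equation*}
\|G_{\gamma,h}(\x)-G_{\gamma,h}(\y)\|^2\le\Big(1-\frac{2\gamma\mu\beta}{\mu+\beta}\Big)\|\u\|^2+\gamma\Big(\gamma-\frac{2}{\mu+\beta}\Big)\|\v\|^2 .
\end{equation*}
For $\gamma\in(0,2/(\beta+\mu)]$ the coefficient of $\|\v\|^2$ is nonpositive, so that term can be dropped. Taking square roots gives the contraction factor $\kappa=\sqrt{1-2\gamma\mu\beta/(\beta+\mu)}$.

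Two small checks remain. First, $\kappa<1$ because $\gamma,\mu,\beta>0$. Second, the quantity under the root is nonnegative, because $2\gamma\mu\beta/(\beta+\mu)\le 4\mu\beta/(\beta+\mu)^2\le 1$ by AM--GM.

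Plugging in $\beta=1+\tau$ and $\mu=\tau$, so that $\beta+\mu=1+2\tau$, recovers exactly the step-size range and contraction factor stated in \Cref{prop:ccd}.
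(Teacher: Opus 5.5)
Your proposal is correct and follows essentially the same route as the paper: expand $\|G_{\gamma,h}(\x)-G_{\gamma,h}(\y)\|^2$, lower-bound the cross term with Nesterov's Theorem~2.1.12, and drop the nonpositive $\|\nabla h(\x)-\nabla h(\y)\|^2$ term when $\gamma\le 2/(\beta+\mu)$. Your optional Baillon--Haddad derivation of that inequality (including the $\beta=\mu$ case) and your AM--GM check that the radicand is nonnegative are correct additions; the paper handles the latter through the condition $\gamma\le(\beta+\mu)/(2\mu\beta)$, and neither changes the argument.
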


The proof can also be found in~\cite[Proposition 3]{sabach_first_2017}, but we include a brief argument for completeness.

\begin{proof} Let $\u$ and $\v \in \bbR^n$. 
    \begin{align*}
        G_{\gamma, h}(\u) - G_{\gamma, h}(\v) = (\u-\v) - \gamma(\nabla h(\u) - \nabla h(\v))
    \end{align*}
    Taking norms, squaring, and expanding on both sides, we have
    \begin{align*}
        \norm[]{G_{\gamma, h}(\u) - G_{\gamma, h}(\v)}^2 &= \norm[]{\u - \v}^2 + \gamma^2 \norm[]{\nabla h(\u) - \nabla h(\v)}^2  \\
        &- 2\gamma \langle \u - \v, \nabla h(\u) - \nabla h(\v) \rangle
    \end{align*}
    By \cite[Theorem~2.1.12]{nesterov_lectures_2018}
    \begin{equation*}
        \langle \nabla h(\u) - \nabla h(\v), \u - \v \rangle \geqslant \frac{1}{\beta + \mu} \norm[]{\nabla h(\u) - \nabla h(\v)}^2 + \frac{\beta\mu}{\beta + \mu} \norm[]{\u - \v}^2
    \end{equation*}
    Substituting, we get
    \begin{align*}
    \norm[]{G_{\gamma, h}(\u) - G_{\gamma, h}(\v)}^2 &\leqslant \left(1 - \frac{2\gamma\mu\beta}{\beta + \mu} \right) \norm[]{\u - \v}^2
    \\ &+ \left(\gamma^2 - \frac{2\gamma}{\beta + \mu} \right) \norm[]{\nabla h(\u) - \nabla h(\v)}^2
    \end{align*}
    For $G_{\gamma,h}$ to be contractive, we need
    \begin{align*}
        \gamma \leqslant \frac{\beta+\mu}{2\mu\beta} \text{ and } \gamma \leqslant \frac{2}{\beta+\mu} \Rightarrow \gamma \leqslant \frac{2}{\beta+\mu} \quad\because \forall \beta,\mu\in\bbR : \frac{2}{\beta+\mu} \leqslant \frac{\beta+\mu}{2\mu\beta}
    \end{align*}
    \begin{align*}
        \therefore \forall \gamma \in \left(0,\frac{2}{\beta+\mu}\right] : \norm[]{G_{\gamma, h}(\u) - G_{\gamma, h}(\v)} \leqslant \sqrt{1 - \frac{2\gamma\mu\beta}{\beta + \mu}} \norm[]{\u - \v}
    \end{align*}
    Thus, the mapping $G_{\gamma, h}$ is $\sqrt{1 - 2\gamma\mu\beta/(\beta + \mu)}$ -- contractive.
    \qed
\end{proof}

Thus, by \Cref{lemma:grad_sc}, the gradient-step operator \eqref{eq:gradstep_model}, and hence \eqref{eq:D}, is contractive with contraction factor 
\begin{equation}
\label{eq:ctr_fac}
    \kappa = \sqrt{1 - \frac{2\gamma\tau(1+\tau)}{(1 + 2\tau)}}
\end{equation}
for any $\tau > 0$ and $0<\gamma \leqslant2/(1+2\tau)$.
\qed
\end{proof}

\section{Multi Noise Level Denoiser}
\label{app:arch_det}
To enhance the expressivity of $\Dctr$ in \eqref{eq:D} and induce noise-level awareness, inspired from~\cite{goujon_learning_2024}, we incorporate a trainable noise-dependent scaling function $\ScalFunc$ applied to the spline activations $\Lspline$. With slight abuse of notation, we define the scaled profile, activation, potential, gradient functions and the noise-level aware gradient step denoiser, respectively  to be
\begin{align}
    \label{eq:scal_profile}
    \Qspline_{\sigma} (\x) &= \ScalFunc(\sigma)^{-2}\Qspline\,( \ScalFunc(\sigma) \,\x),\\
    \label{eq:scal_act}
    \Lspline_{\sigma} (\x) =\nabla\!\Qspline_{\sigma}(\x) &= \ScalFunc(\sigma)^{-1}\Lspline\,( \ScalFunc(\sigma) \,\x), \\
    \label{eq:scal_potential}
    \varphi_{\sigma}(\x) &= \sum_{j=1}^{p} \qspline_{{\sigma}_j} \big( (\W \x)_j \big)+ \frac{\tau}{2}\Vert\x\Vert^2,\\
    \label{eq:scal_gradmodel}
    \nabla \! \PotenScal(\x) &= \Lspline_{\sigma} (\x) + \tau\x,\\
    \label{eq:scal_gradstepmodel}
    \Dsig(\x) &= \x - \gamma\nabla \! \PotenScal(\x).
\end{align}

\noindent Following~\cite{goujon_learning_2024}
the scaling function is parameterized as 
\begin{equation}
    \label{eq:scal_func}
    \ScalFunc = (\nu_1\,,\, \ldots \,,\, \nu_p); \quad \text{where } \nu_{i}(\sigma) = \exp(\lsplineScal[i](\sigma))/(\sigma+10^{-5})
\end{equation}
and $\LsplineScal = (\lsplineScal[1] \,,\, \ldots \,,\,\lsplineScal[p])$ are implemented using unconstrained learnable linear splines with 11 equidistant knots in $[\sigma_{\min},\sigma_{\max}]$.

From the definitions of the composite profile and activation functions in \Cref{sec:conden}, together with \eqref{eq:scal_func}, the scaled versions take the form
\begin{align}
    \Qspline_{\sigma}(\x) &= \big(\qspline_{\sigma_1}(x_1),\, \ldots,\, \qspline_{\sigma_p}(x_p)\big),
    \qquad 
    \qspline_{\sigma_j}(x_j) = \nu_j(\sigma)^{-2}\qspline_j\big(\nu_j(\sigma)x_j\big);\\
    \Lspline_{\sigma}(\x) &= \big(\lspline_{\sigma_1}(x_1),\, \ldots,\, \lspline_{\sigma_p}(x_p)\big),
    \qquad 
    \lspline_{\sigma_j}(x_j) = \nu_j(\sigma)^{-1}\lspline_j\big(\nu_j(\sigma)x_j\big).
\end{align}

Since scaling only rescales the inputs and outputs to the nonlinearities, it does not alter the strong convexity of $\PotenScal$ or the contractivity of $\Dsig$ (see the proof of \Cref{prop:dctr} below).

\section{Proof of \Cref{prop:dctr}}

\begin{proof}
It suffices to show that $\varphi_{\sigma}$ is $(1+\tau)$-smooth and $\tau$-strongly convex, since the result then follows exactly as in \Cref{prop:ccd}. This, in turn, reduces to proving that each $\qspline_{\sigma_j}$ is convex and each $\lspline_{\sigma_j}$ is $1$-Lipschitz.

\begin{lemma}
\label{lem:cvx_trans}
Let $Q:\bbR\to\bbR$ be convex. For any $t>0$ and $m,n\in\bbR$, define
\[
Q_t(\cdot):=t^{-m}Q(t^n\;\cdot).
\]
Then $Q_t$ is convex.
\end{lemma}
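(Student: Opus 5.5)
The plan is to check convexity straight from the definition: $Q_t$ should satisfy the convexity inequality because it is built from $Q$ by two operations that each preserve convexity. Fix $t>0$ and $m,n\in\bbR$, and set $a:=t^{-m}>0$ and $b:=t^{n}>0$. With this notation $Q_t(x)=a\,Q(bx)$. So $Q_t$ is the composition of $Q$ with the linear map $x\mapsto bx$, followed by multiplication by the positive constant $a$. All that matters is that $a$ and $b$ are real and $a$ is strictly positive, which holds since $t>0$. The exponents $m,n$ may have any sign.

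First, show that $x\mapsto Q(bx)$ is convex. Take $x,y\in\bbR$ and $\lambda\in[0,1]$. By linearity, $b(\lambda x+(1-\lambda)y)=\lambda(bx)+(1-\lambda)(by)$. Convexity of $Q$ at the points $bx$ and $by$ then gives
\begin{equation*}
Q\big(b(\lambda x+(1-\lambda)y)\big)\le \lambda Q(bx)+(1-\lambda)Q(by).
\end{equation*}
Second, multiply both sides by $a>0$. The inequality is preserved, which gives $Q_t(\lambda x+(1-\lambda)y)\le \lambda Q_t(x)+(1-\lambda)Q_t(y)$. This is exactly convexity of $Q_t$.

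There is no substantial obstacle here. The only point needing care is the sign of the prefactor: $t^{-m}$ must be positive so the inequality does not flip, and this holds for every real $m$ because $t>0$.

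Looking ahead to how the lemma will be applied: taking $t=\nu_j(\sigma)$, $m=2$, $n=1$ gives convexity of $\qspline_{\sigma_j}$. The companion fact that $\lspline_{\sigma_j}(x)=\nu_j^{-1}\lspline_j(\nu_j x)$ is $1$-Lipschitz follows by the same scaling computation, since $|\lspline_{\sigma_j}(x)-\lspline_{\sigma_j}(y)|\le \nu_j^{-1}\cdot\nu_j|x-y|$.
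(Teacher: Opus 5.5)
Your proof is correct and follows the paper's argument: apply the convexity inequality for $Q$ at the points $t^n u$ and $t^n v$, then multiply by the positive constant $t^{-m}$. You also rightly identify positivity of $t^{-m}$ as the only point that needs care.
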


\begin{proof}
Let $u,v\in\bbR$ and $\theta\in[0,1]$. Then
\begin{align*}
Q_t\big(\theta u+(1-\theta)v\big)
&= t^{-m}Q\Big(t^n\big(\theta u+(1-\theta)v\big)\Big)\\
&= t^{-m}Q\big(\theta t^n u+(1-\theta)t^n v\big)\\
&\le t^{-m}\big(\theta Q(t^n u)+(1-\theta)Q(t^n v)\big)\\
&= \theta Q_t(u)+(1-\theta)Q_t(v).
\end{align*}
\end{proof}

Since each $\qspline_j$ is convex and $\nu_j(\sigma)>0$ for all $\sigma\ge0$ and $j\in[p]$, applying \Cref{lem:cvx_trans} with $m=2$, $n=1$, and $t=\nu_j(\sigma)$ shows that each $\qspline_{\sigma_j}$ and hence $\Qspline_\sigma$ are convex. Therefore $\PotenScal$ is $\tau$-strongly convex.

\begin{lemma}
\label{lem:lips_trans}
Let $H:\bbR\to\bbR$ be $\beta$-Lipschitz. For any $t\neq0$, define
\[
H_t(\cdot):=t^{-1}H(t\;\cdot).
\]
Then $H_t$ is also $\beta$-Lipschitz.
\end{lemma}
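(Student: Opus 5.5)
We argue directly from the definitions of Lipschitz continuity and of $H_t$. Fix $t\neq0$ and $u,v\in\bbR$. Since $t\neq 0$, the map $H_t$ is well defined, and
\[
|H_t(u)-H_t(v)| = |t|^{-1}\,\big|H(tu)-H(tv)\big|.
\]

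The only step is to apply the $\beta$-Lipschitz property of $H$ to the points $tu$ and $tv$. This gives $|H(tu)-H(tv)|\le \beta|tu-tv| = \beta|t|\,|u-v|$. Substituting into the display, the factor $|t|$ cancels against $|t|^{-1}$, leaving $|H_t(u)-H_t(v)|\le\beta|u-v|$. This is the claim.

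There is no real obstacle here. The one point needing care is to use absolute values $|t|$, not $t$, so that negative $t$ is also covered; the argument never needs $t>0$.

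This lemma is applied with $H=\lspline_j$, $\beta=1$ and $t=\nu_j(\sigma)>0$. It shows that each $\lspline_{\sigma_j}$ is $1$-Lipschitz, which is exactly what the proof of \Cref{prop:dctr} requires for $(1+\tau)$-smoothness of $\PotenScal$.
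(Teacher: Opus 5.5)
Your proposal is correct and follows the paper's proof essentially line for line. Like the paper, you factor out $|t|^{-1}$, apply the $\beta$-Lipschitz bound to $H$ at the points $tu$ and $tv$, and cancel the resulting factor $|t|$.
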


\begin{proof}
Let $u,v\in\bbR$. Then
\begin{align*}
|H_t(u)-H_t(v)|
&= \big|t^{-1}\big(H(tu)-H(tv)\big)\big|\\
&= |t|^{-1}\,|H(tu)-H(tv)|\\
&\le \beta |t|^{-1}|t(u-v)|\\
&= \beta |u-v|.
\end{align*}
\end{proof}

Since each $\lspline_j$ has slopes in $[0,1]$, it is $1$-Lipschitz. As $\nu_j(\sigma)>0$ for all $\sigma\ge0$ and $j\in[p]$, \Cref{lem:lips_trans} with $\beta=1$ and $t=\nu_j(\sigma)$ implies that each $\lspline_{\sigma_j}$ is also $1$-Lipschitz. Hence $\Lspline_\sigma$ is nonexpansive.

Therefore, $\nabla \! \PotenScal$ is $(1+\tau)$-Lipschitz and $\PotenScal$ is $\tau$-strongly convex. The conclusion now follows exactly as in \Cref{prop:ccd}. Thus, by \Cref{lemma:grad_sc}, the gradient-step operator \eqref{eq:scal_gradstepmodel}, and hence \eqref{eq:Dsig}, is contractive with $\kappa$ in \eqref{eq:ctr_fac}.  \qed
\end{proof}

\section{Performance of the contractive denoiser}

\subsection{Denoising}
We compare our denoiser with classical denoisers such as TV~\cite{rudin1992nonlinear}, DSG-NLM~\cite{sreehari2016plug}, and BM3D~\cite{dabov2007image} in \Cref{fig:coral_awgn}. Despite using only a single layer, $\Dsig$ preserves fine details and even outperforms BM3D, which is widely regarded as the state-of-the-art among classical denoisers.

\begin{table}[ht]
\centering
\caption{PSNR/SSIM performance on AWGN denoising for CBSD68~\cite{bsd500}, comparing classical methods and deep CNN-based denoisers against the proposed shallow contractive denoiser $\Dsig$ over different noise levels (\(255\times\sigma_{\n}\)).}
\label{tab:awgn-comp}
\setlength\tabcolsep{4pt}
\begin{tabular}{l|c|c|c|c|c}
\toprule
\textbf{Denoiser} & 5 & 10 & 15 & 20 & 25 \\
\midrule
\textbf{TV}~\cite{rudin1992nonlinear} & 36.28/.9531 & 32.03/.8895 & 29.66/.8253 & 28.30/.7912 & 27.13/.7428 \\
\textbf{DSG-NLM}~\cite{sreehari2016plug} & 36.84/.9554 & 32.32/.8896 & 29.88/.8262 & 28.22/.7684 & 26.98/.7170 \\
\textbf{BM3D}~\cite{dabov2007image} & 37.48/.9655 & 33.27/.9195 & 31.01/.8738 & 29.46/.8320 & 28.30/.7943 \\
$\Dsig$ & 37.90/.9747 & 33.89/.9354 & 31.43/.8922 & 29.67/.8485 & 28.28/.8044 \\
\textbf{DnCNN}~\cite{zhang2017beyond}&  39.80/- & 35.82/- & 33.55/- & 32.02/- & 30.87/-  \\
\textbf{DRUNet}~\cite{zhang2021plug} & 40.54/.9815  & 36.34/.9582 & 33.96/.9341 & 32.30/.9098 & 31.00/.8858 \\
\textbf{GSDRUNet}~\cite{hurault2022gradient} & 40.46/.9812 & 36.27/.9577 & 33.91/.9337 & 32.26/.9100 & 30.97/.8863 \\
\bottomrule
\end{tabular}
\end{table}

\begin{figure}[ht]
    \centering
    \subfloat[noisy]{
    \includegraphics[width=\Rthree]{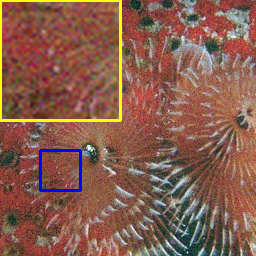}}\hfill
    \subfloat[TV]{
    \includegraphics[width=\Rthree]{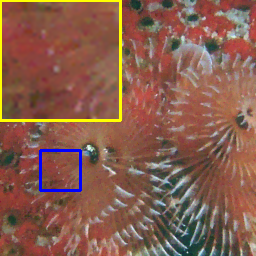}}\hfill
    \subfloat[DSG-NLM]{
    \includegraphics[width=\Rthree]{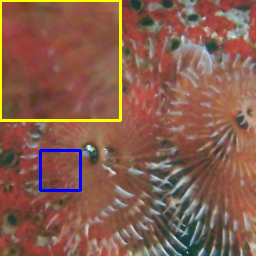}}\hfill
    \subfloat[BM3D]{
    \includegraphics[width=\Rthree]{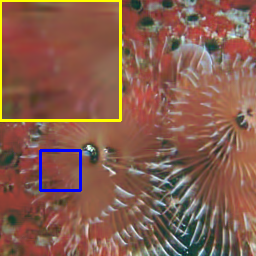}}\hfill
    \subfloat[$\Dsig$]{
    \includegraphics[width=\Rthree]{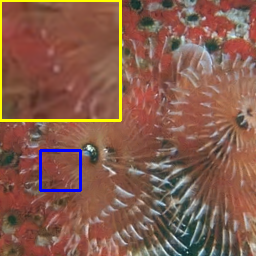}}\hfill
    \subfloat[clean]{
    \includegraphics[width=\Rthree]{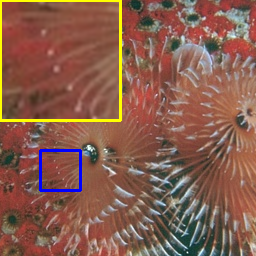}}\hfill
    \caption{Comparison of denoising on \textit{coral} image corrupted with AWGN at  $\sigma_{\n}=15$. The proposed contractive denoiser $\Dsig$ outperforms TV-denoiser, DSG-NLM, and BM3D, the latter regarded as the state of the art among classical denoisers. The magnified regions show that  $\Dsig$ better preserves finer details. The PSNR(dB)/SSIM values are: (a) $ 24.62/0.6823$, (b) $29.20, 0.8446$, (c) $ 29.16,0.8414$, (d) $ 30.34, 0.8718$ and (e) $ 31.44, 0.9048$.}
    \label{fig:coral_awgn}
\end{figure}

\subsection{Inverse problems}

We can also use $\Dsig$ to develop globally convergent reconstruction operator $\Rctr$. 
Instead of the standard PnP-HQS operator~\eqref{eq:pnphqs}, where the proximal step is applied before the denoiser, we adopt the $T_{\mathrm{GS\text{-}PnP}}$~\cite{hurault2022gradient} ordering and apply the denoiser first, followed by the proximal step. This reordering fits the updates into a proximal-gradient framework and simplifies the convergence analysis. Formally, plugging our contractive denoiser $\Dsig$ into this scheme yields the 
contractive reconstruction operator
\begin{equation}
\label{eq:Rctr}
\Rctr := \prox_{\rho f} \circ\, \Dsig,
\end{equation}
where the parameter $\rho>0$ is the penalty used in the proximal update. 
The central result for the contractive PnP update is stated below. This is a standard result; e.g., see~\cite{nair_convergent_2024,hurault2022gradient}.

\begin{proposition}
\label{prop:sc_obj}
The operator $\Rctr$ in~\eqref{eq:sctr} is contractive for any $\rho>0$. For any initialization $\x_0 \in \bbR^n$, the sequence $\{\x_k\}$ generated by the iteration
\begin{equation}
\label{eq:ctr_pnp}
\x_{k+1} = \Rctr(\x_k)
\end{equation}
converges to the unique minimizer of \eqref{eq:VarProb} with $g=\lambda\PotenScal$, where $\lambda>0$.
\end{proposition}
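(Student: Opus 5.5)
The plan has two parts: establish contractivity by composing known operator properties, then identify the unique fixed point as the minimizer through the first-order optimality condition of the proximal step. (The statement refers to~\eqref{eq:sctr}, but the operator meant is $\Rctr$ from~\eqref{eq:Rctr}; I work with that.)

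\emph{Contractivity.} Since $f$ is convex, $\prox_{\rho f}$ is nonexpansive for every $\rho>0$, as in the proof of \Cref{prop:ctr}. By \Cref{prop:dctr}, $\Dsig$ is a $\kappa$-contraction with $\kappa<1$. Hence for all $\x,\y$,
\begin{equation*}
\|\Rctr(\x)-\Rctr(\y)\|\le\|\Dsig(\x)-\Dsig(\y)\|\le\kappa\|\x-\y\|,
\end{equation*}
so $\Rctr$ is a $\kappa$-contraction. The Banach fixed-point theorem then gives a unique fixed point $\xfp$. For any $\x_0$, the iterates of~\eqref{eq:ctr_pnp} converge to $\xfp$ linearly, with $\|\x_k-\xfp\|\le\kappa^k\|\x_0-\xfp\|$.

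\emph{Identifying the fixed point.} By~\eqref{eq:scal_gradstepmodel}, $\Dsig=\cI-\gamma\nabla\PotenScal$. Because $f$ is a differentiable convex quadratic, $\u=\prox_{\rho f}(\v)$ is equivalent to $\u-\v+\rho\nabla f(\u)=0$. Applying this with $\v=\xfp-\gamma\nabla\PotenScal(\xfp)$ shows that $\xfp=\Rctr(\xfp)$ is equivalent to
\begin{equation*}
\rho\nabla f(\xfp)+\gamma\nabla\PotenScal(\xfp)=0,
\qquad\text{i.e.}\qquad
\nabla\big(f+\lambda\PotenScal\big)(\xfp)=0
\quad\text{with } \lambda=\gamma/\rho>0 .
\end{equation*}

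\emph{Minimizer.} From the proof of \Cref{prop:dctr}, $\PotenScal$ is $\tau$-strongly convex, so $f+\lambda\PotenScal$ is $\lambda\tau$-strongly convex. It therefore has exactly one stationary point, which is its unique global minimizer. That point is $\xfp$, which proves the claim.

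The main delicate point is the value of $\lambda$. The statement reads as if $\lambda>0$ were arbitrary, but the argument pins it to $\lambda=\gamma/\rho$. I would state the result with this explicit $\lambda$; equivalently, for a prescribed $\lambda$ one chooses $\rho=\gamma/\lambda$. The other checks are routine: the prox optimality condition is the standard one for a smooth convex $f$, and differentiability of $\PotenScal$ follows because each $\lspline_{\sigma_j}$ is continuous.
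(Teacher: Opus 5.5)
Your proposal is correct and follows essentially the same route as the paper. It gets contractivity by composing the nonexpansive $\prox_{\rho f}$ with the contraction $\Dsig$ and applying the Banach fixed-point theorem, then identifies the unique fixed point through the proximal optimality condition with $\gamma=\rho\lambda$ and uses the $\lambda\tau$-strong convexity of $f+\lambda\PotenScal$. Your use of $\nabla f$ in place of the paper's subdifferential $\partial f$ and sum rule is a harmless simplification because $f$ is a smooth quadratic, and your remark that $\lambda$ is tied to $\gamma/\rho$ matches the paper's own choice $\gamma=\rho\lambda$.
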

\begin{proof}
Since $\Dsig$ is contractive by \Cref{prop:dctr}, it follows from \Cref{prop:ctr} that the IR operator $\Rctr$ is contractive. Since the sequence $\{\x_k\}$ in \eqref{eq:ctr_pnp} is generated using $\Rctr$, its convergence follows from the contraction mapping theorem~\cite[Theorem~1.50]{bauschke2011convex}.

Consider the composite objective
\begin{equation}
\label{eq:sc_obj}
\min_{x\in\bbR^n}\; J(\x)\;:=\; f(\x)+\lambda\PotenScal(\x).
\end{equation}
Since $f$ is convex and $\PotenScal$ is $\tau$-strongly convex, $J$ is $\lambda\tau$-strongly convex and therefore by~\cite[Corollary~11.17]{bauschke2011convex} admits a unique minimizer,
denoted by $\xsol$.

Recall that $\Dsig$ is of (explicit) gradient-step form
\(
\x-\gamma \nabla \! \PotenScal(\x),
\)
and our PnP iteration is
\[
\x_{k+1}=\Rctr\,(\x_k) \;=\; \prox_{\rho f} \circ\,\Dsig \,(\x_k)
=
\prox_{\rho f}\big(\x_k-\rho\lambda\nabla \! \PotenScal(\x_k)\big),
\]
where $\gamma=\rho\lambda$ with $\lambda>0$ such that $\rho\lambda\leqslant2/(1+2\tau)$. 

Let $\xfp\in\bbR^n$ be the \emph{unique} fixed point of $\Rctr$, i.e.
\[
\xfp \;=\; \Rctr(\xfp) = \prox_{\rho f}\big(\Dsig(\xfp)\big).
\]
The fixed-point relation becomes
\begin{equation}\label{eq:fp_rel}
\xfp \;=\; \prox_{\rho f}\big(\xfp - \rho\lambda\nabla \! \PotenScal(\xfp)\big).
\end{equation}
On the other hand, by the proximal-point characterization~\cite[Proposition~16.44]{bauschke2011convex}
\begin{equation}\label{eq:prox_char}
\u=\prox_{\rho f}(\v) \iff \v-\u\in \partial \rho f(\u).
\end{equation}
Substituting \eqref{eq:fp_rel} into \eqref{eq:prox_char}, i.e., $\u=\x_{\star}$ and $\v=\x_{\star}-\rho\lambda\nabla \! \PotenScal(\xfp)$,  yields
\[
\frac{\xfp - \rho\lambda\nabla \! \PotenScal(\xfp)-\xfp}{\rho}
=
\frac{-\rho\lambda\nabla \! \PotenScal(\xfp)}{\rho}
=
-\lambda\nabla \! \PotenScal(\xfp)
\;\in\; \partial f(\xfp),
\]
or,
\[
0\in \partial f(\xfp)+\lambda\nabla \! \PotenScal(\xfp).
\]
Since $\PotenScal$ is differentiable, $\partial(\lambda\PotenScal)=\{\lambda\nabla \! \PotenScal\}$, and hence by~\cite[Theorem 23.8]{rockafellar1970convex}
\[
0\in \partial\big(f+\lambda\PotenScal\big)(\xfp).
\]
This is exactly the first-order optimality condition for minimizing $J$. Therefore, $\xfp$ is also a minimizer of
$f+\lambda\PotenScal$, and by strong convexity it must equal the unique minimizer, i.e., 
\( \xfp = \xsol.\)
In particular, the unique fixed point $\xfp$ of $\Rctr$ maps to the unique minimizer $\xsol$ of~\eqref{eq:sc_obj}.
\qed
\end{proof}

\input{gc}

As shown in \Cref{fig:watch_recon}, the IR operators $\Sctr$~\eqref{eq:sctr} and $\Rctr$~\eqref{eq:Rctr} recovers sharper edges and finer textures than the TV~\cite{rudin1992nonlinear}, DSG-NLM~\cite{sreehari2016plug} and  BM3D~\cite{dabov2007image} denoisers. Thus, our IR operators using the contractive denoiser $\Dsig$~\eqref{eq:Dsig} beats classical baselines in perceptual quality and metrics. 

Although our single-layer denoiser outperforms classical denoisers, it cannot match the performance of modern deep denoisers based on UNet~\cite{zhang2021plug,choi2021conditioning} or Transformer~\cite{zhang2023practical,zamir2022restormer} architectures. This further motivates the use of the stabilization framework, \Cref{algo:vista}, to combine the stability benefits of our contractive module with the quality of the high-performance prior reconstruction.

\begin{figure}[ht]
  \centering
  \captionsetup[subfloat]{labelformat=empty,labelsep=none,justification=centering}
  \subfloat[blurry]{
  \includegraphics[width=\Rfour]{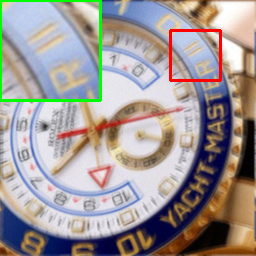}}\hfill
  \subfloat[PGD+TV-reg]{
  \includegraphics[width=\Rfour]{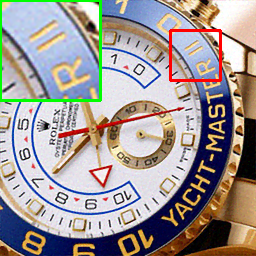}}\hfill
  \subfloat[ADMM + DSG-NLM]{
  \includegraphics[width=\Rfour]{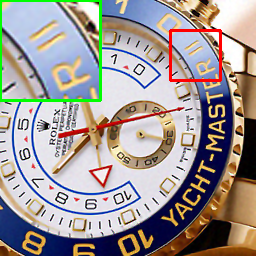}}\hfill
  \subfloat[HQS+BM3D]
  {\includegraphics[width=\Rfour]{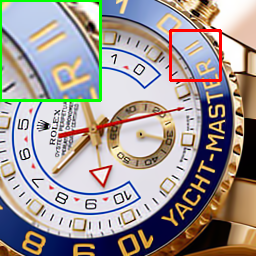}}\hfill
  \subfloat[$\Sctr$ in \eqref{eq:sctr}]{
  \includegraphics[width=\Rfour]{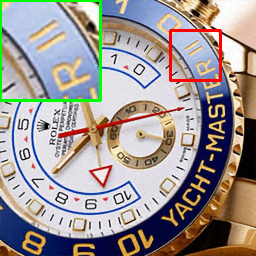}}\hfill
  \subfloat[$\Rctr$ in \eqref{eq:Rctr}]{
  \includegraphics[width=\Rfour]{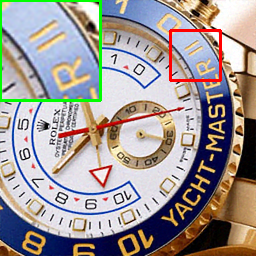}}\hfill
  \subfloat[FBS+DnCNN]
  {\includegraphics[width=\Rfour]{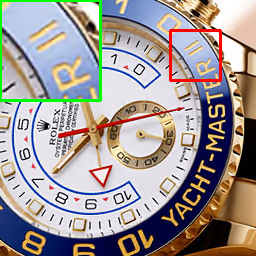}}\hfill
  \subfloat[ADMM+MMO]
  {\includegraphics[width=\Rfour]{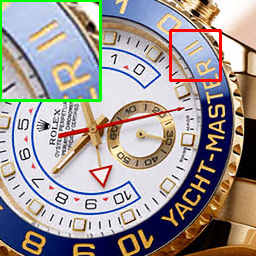}}\hfill
  \subfloat[HQS+DRUNet]
  {\includegraphics[width=\Rfour]{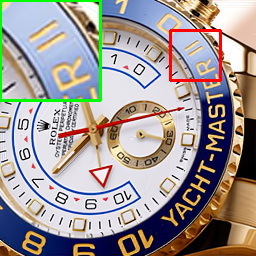}}\hfill
  \subfloat[HQS+DiffUNet]
  {\includegraphics[width=\Rfour]{figures/watch/DRUNet.png}}\hfill
  \subfloat[HQS+GSDRUNet]
  {\includegraphics[width=\Rfour]{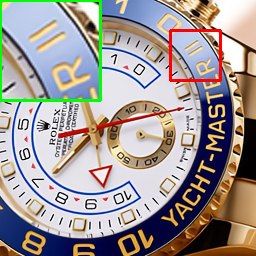}}\hfill
  \subfloat[clean]{
  \includegraphics[width=\Rfour]{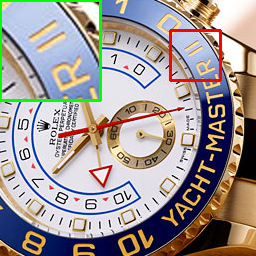}}\hfill
  \caption{Motion deblurring experiment on \textit{watch} image from General100~\cite{general100} with kernel 5~\cite{levin_kernel_2009} and additive noise $\sigma_{\n}=0.01$. The results are using the mentioned PnP framworks and denoisers. The PSNR(dB) values are: (a) $15.36$, (b) $26.01$, (c) $26.47$, (d) $26.98$, (e) $27.44$, (f) $27.78$, (g) $28.57$, (h) $28.19$, (i) $29.90$, (j) $29.12$ and (k) $29.71$.}
  \label{fig:watch_recon}
\end{figure}


\section{Classical early stopping methods}

Early-stopping rules such as the L-curve and discrepancy principle provide natural alternatives for mitigating post-peak degradation. However, these methods only select an iterate from the original trajectory, whereas our approach modifies the dynamics and may also improve the attained peak. We therefore use Equivariant PnP as a primary baseline, since it likewise alters the iterative trajectory.

\begin{figure}[t]
    \centering    \subfloat{\includegraphics[width=0.66\linewidth]{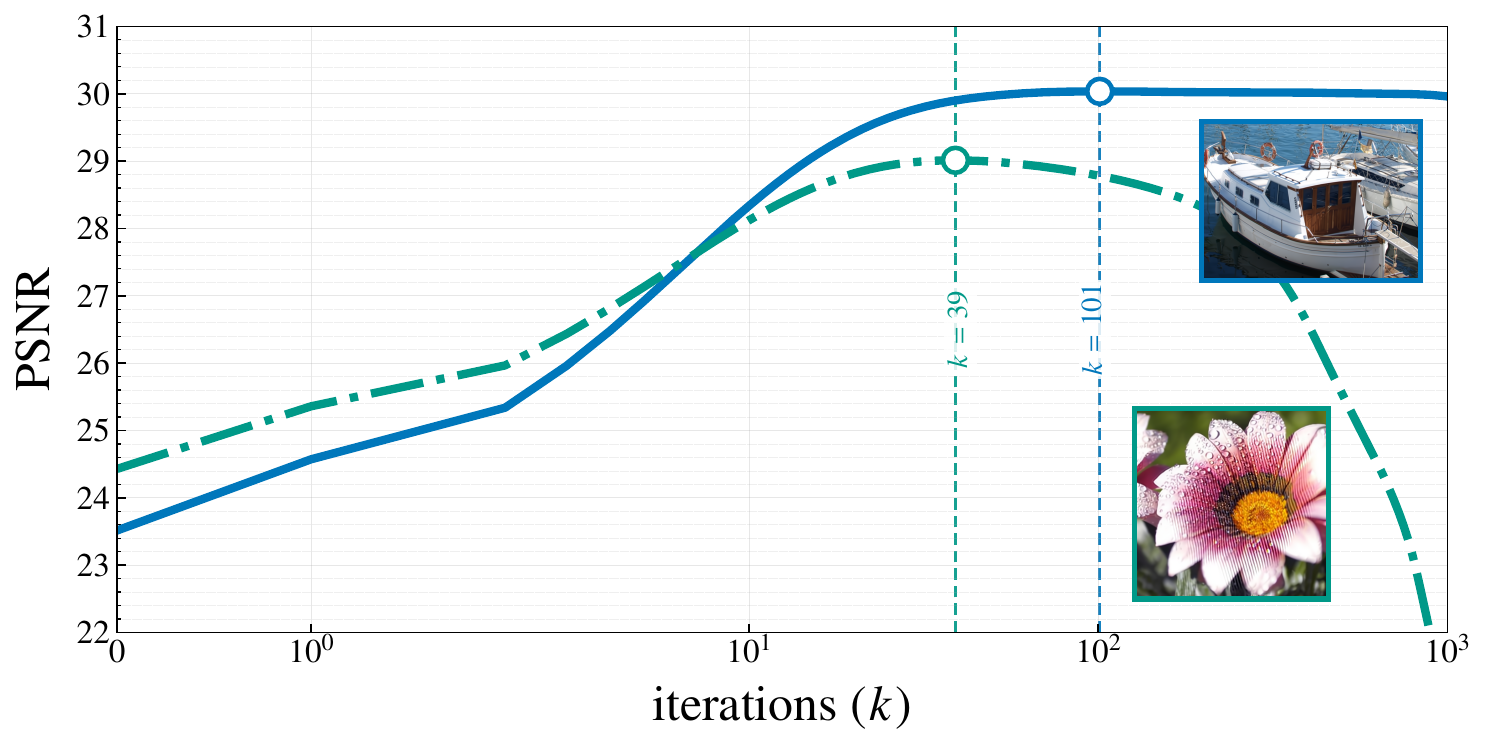}}\hfill   \subfloat{\includegraphics[width=0.32\linewidth]{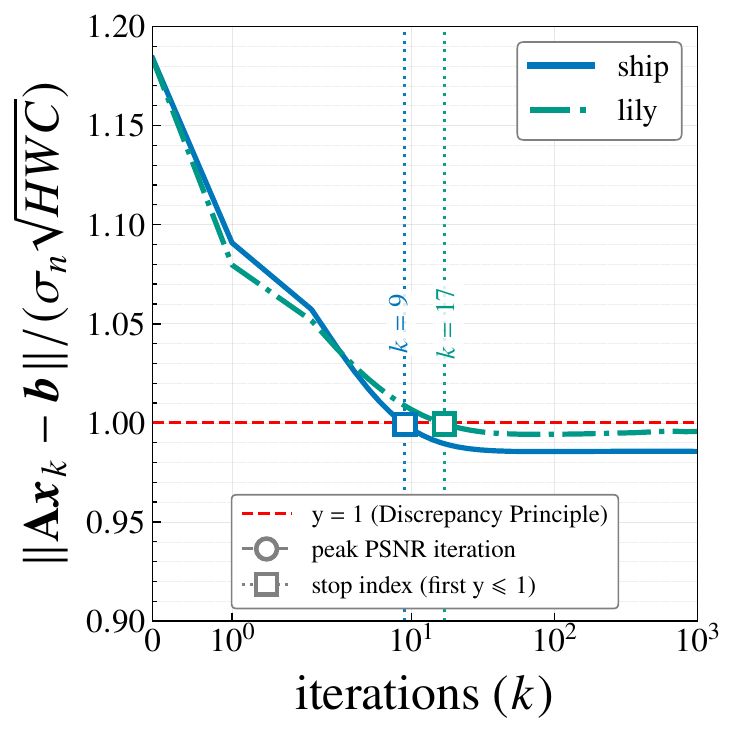}}\hfill
    \caption{Identical PnP setup on two images. The PSNR curves show that the image can strongly change the iterate dynamics. The discrepancy rule triggers too early and does not find the peak. $H, W, C$ denote the height, width and channels of the image respectively.}
    \label{fig:stab_same_sett}
\end{figure}

\begin{table}[t]
\centering
\caption{Peak PSNR (dB) and discrepancy-principle (DP) PSNR (dB) with the attaining iteration in brackets, for deblurring and superresolution over set3c dataset.}
\label{tab:discrepancy}
\setlength{\tabcolsep}{5pt}
\renewcommand{\arraystretch}{1.15}
\begin{tabular}{l l c c c c}
\toprule
& & \multicolumn{2}{c}{\includegraphics[width=\lenBlur]{figures/kernels/blur/8.png}} & 
\multicolumn{2}{c}{\includegraphics[width=\lenBlur]{figures/kernels/SR/1.png}}\\
\multirow{2}{*}{Denoiser} & \multirow{2}{*}{Image}
  & \multicolumn{2}{c}{$s=1$}
  & \multicolumn{2}{c}{$s=2$} \\
\cmidrule(lr){3-4} \cmidrule(lr){5-6}
 & & Peak (it.) & DP (it.) & Peak (it.) & DP (it.) \\
\midrule
\multirow{3}{*}{DnCNN}
  & Butterfly & 25.41 (63) & 22.49 (14) & 28.02 (335) & 25.43 (2) \\
  & Leaves    & 24.29 (73) & 21.34 (18) & 26.76 (173) & 24.86 (3) \\
  & Starfish  & 26.92 (39) & 24.85 (9) & 29.32 (197) & 27.62 (2) \\
\cmidrule(lr){1-6}
\multirow{3}{*}{MMO}
  & Butterfly & 26.45 (113) & 22.80 (15) & 27.67 (999) & 25.45 (2) \\
  & Leaves    & 25.74 (159) & 21.58 (19) & 27.10 (945) & 24.90 (3) \\
  & Starfish  & 27.61 (55) & 25.29 (10) & 29.45 (998) & 27.67 (2) \\
\cmidrule(lr){1-6}
\multirow{3}{*}{DRUNet}
  & Butterfly & 29.54 (992) & 25.04 (12) & 28.90 (177) & 26.23 (2) \\
  & Leaves    & 30.93 (397) & 24.22 (15) & 29.09 (99) & 25.80 (3) \\
  & Starfish  & 29.67 (56) & 27.06 (8) & 30.31 (49) & 28.58 (2) \\
\cmidrule(lr){1-6}
\multirow{3}{*}{GSDRUNet}
  & Butterfly & 30.32 (104) & 24.80 (10) & 30.24 (225) & 26.32 (2) \\
  & Leaves    & 31.33 (105) & 23.87 (12) & 30.47 (163) & 25.98 (3) \\
  & Starfish  & 31.13 (74) & 26.98 (7) & 31.83 (199) & 28.63 (2) \\
\cmidrule(lr){1-6}
\multirow{3}{*}{DiffUNet}
  & Butterfly & 27.82 (39) & 25.31 (12) & 28.65 (57) & 26.17 (2) \\
  & Leaves    & 28.07 (55) & 24.30 (13) & 27.59 (41) & 25.79 (3) \\
  & Starfish  & 28.88 (23) & 27.18 (8) & 30.48 (48) & 28.46 (2) \\
\bottomrule
\end{tabular}
\end{table}

Classical stopping rules can also be difficult to apply in generic black-box PnP settings. The L-curve requires a meaningful regularity measure, while PnP generally has no explicit regularizer or canonical regularity metric. The discrepancy principle, $\norm{\mat A\x_k - \b} \leq \sigma_{\n} \sqrt{HWC}$, requires a reliable noise estimate and, even when the exact noise level $\sigma_{\n}$ is available, need not stop near the PSNR peak; see \Cref{fig:stab_same_sett} (right) and the aggregate results in \Cref{tab:discrepancy}. Likewise, a fixed point iteration, PnP/RED algorithm with fixed step size(s) and denoiser noise level $\sigma_{\cD}$, selected on one validation image may not transfer to another, since the trajectory depends on the input image, as illustrated in \Cref{fig:stab_same_sett} (left).

\section{Ablation Studies}

In this section, we study the contractive denoiser $\Dsig$ and the induced contractive reconstruction operators. 
Since the only component that varies in \Cref{algo:vista} is the contractive anchor, and this anchor is fully determined by the choice of $\Dsig$ through \eqref{eq:sctr} or \eqref{eq:Rctr}, the natural ablation axis for \Cref{algo:vista} is the architecture of $\Dsig$. 
We therefore first examine several architectural variants of $\Dsig$, summarized in \Cref{tab:dsig-varaints}, and then study the resulting reconstruction operators both in standalone mode and as anchors in \Cref{algo:vista}. In \Cref{tab:dsig-varaints},


\begin{itemize}
    \item $\gamma$ denotes the gradient step size;
    \item \emph{MC} indicates the use of multi-convolution blocks~\cite{goujon2023neural}, 3 conv layers of output channels 12, 24 and 128 channels were used;
    \item \emph{LLS} denotes learnable linear splines~\cite{goujon2023neural}, else ReLU;
    \item \emph{NLS} denotes noise-level scaling~\cite{goujon_learning_2024};
    \item \emph{NM} denotes the use of a noise map~\cite{zhang2021plug} -- appending an additional constant channel $\sigma_{\n} \mat 1$ to the input, where $\mat 1 \in \bbR^n$ is an image of unit intensity;
    \item \emph{SR} denotes single parameter rescaling (applicable only to the \emph{MC} setting);
    \item \emph{P} is the number of trainable parameters. 
\end{itemize}
For these variants, we report comprehensive denoising results to assess the contribution of the individual design choices.

\begin{table}[htpb]
\centering
\caption{Ablation variants of the contractive denoiser $\Dsig$. We vary the gradient step size $\gamma$, the use of multi-convolution (MC), learnable linear splines (LLS), noise-level scaling (NLS), noise map (NM), and single rescaling (SR); $P$ reports the number of trainable parameters. }
\label{tab:dsig-varaints}
\setlength\tabcolsep{5pt}
\begin{tabular}{l|c|c|c|c|c|c|c}
\toprule
\textbf{Variant} & $\gamma$ & MC & LLS & NLS & NM & SR & P\\
\midrule
V0A & $1$ & \xmark & \xmark & \xmark & \cmark & --- & 2.3\,k\\
V0B & $1$ & \xmark & \xmark & \cmark & \xmark & --- & 2.4\,k\\
\addlinespace[1.5pt] \hline \addlinespace[1.5pt]
V1A & $(1+2\tau)^{-1}$ & \xmark & \cmark & \cmark & \cmark & --- & 9.5\,k\\
V1B & $(1+2\tau)^{-1}$ & \xmark & \cmark & \xmark & \cmark & --- & 8.8\,k\\
V1C & $(1+2\tau)^{-1}$ & \xmark & \cmark & \cmark & \xmark & --- & 8.9\,k\\
\addlinespace[1.5pt] \hline \addlinespace[1.5pt]
V1D & $1$ & \xmark & \cmark & \cmark & \cmark & --- & 9.5\,k\\
V1E & $1$ & \xmark & \cmark & \xmark & \cmark & --- & 8.8\,k\\
V1F & $1$ & \xmark & \cmark & \cmark & \xmark & --- & 8.9\,k\\
\addlinespace[1.5pt] \hline \addlinespace[1.5pt]
V2A & $1$ & \cmark & \cmark & \cmark & \cmark & \xmark & 98\,k\\
V2B & $1$ & \cmark & \cmark & \cmark & \xmark & \xmark & 126\,k\\
\addlinespace[1.5pt] \hline \addlinespace[1.5pt]
V3A & $1$ & \cmark & \cmark & \cmark & \cmark & \cmark & 98\,k\\
V3B & $(1+2\tau)^{-1}$ & \cmark & \cmark & \cmark & \cmark & \cmark & 98\,k\\
V3C & $1$ & \cmark & \cmark & \cmark & \xmark & \cmark & 125.7\,k\\
\bottomrule
\end{tabular}
\end{table}

We then study the corresponding contractive IR operators $\Sctr$~\eqref{eq:sctr} and $\Rctr$~\eqref{eq:Rctr} on inverse problems, namely deblurring and superresolution. 
We first compare their standalone reconstruction performance, and then evaluate them as anchors within the proposed stabilization framework across different PnP algorithms and pretrained denoisers, justifying our choice.

\subsection{Contractive denoiser $\Dsig$}

\Cref{tab:awgn-abl} reports denoising PSNR(dB)/SSIM on CBSD68 for the variants listed in \Cref{tab:dsig-varaints}. 
Replacing ReLU with learnable linear splines already yields a clear gain over the minimal baselines V0A/V0B, while moving from a single convolution to the multi-convolution design gives a further improvement, with V2/V3 variants consistently outperforming the V1 family across all noise levels. 
Multi layer rescaling when used with noise map provide smaller but consistent gains, with the V2 family generally matching or slightly improving over V3.



\begin{table}[htpb]
\centering
\caption{PSNR(dB)/SSIM performance on AWGN denoising for CBSD68~\cite{bsd500}, reporting ablation studies across different variants of the proposed shallow contractive denoiser $\Dsig$ over different noise levels (\(255\times\sigma_{\n}\)).}
\label{tab:awgn-abl}
\setlength\tabcolsep{5pt}
\begin{tabular}{l|c|c|c|c|c}
\toprule
\textbf{Variant} & 5 & 10 & 15 & 20 & 25 \\
\midrule
V0A & 33.96/.9561 & 31.33/.8859 & 28.77/.7967 & 26.60/.7093 & 24.81/.6322 \\
V0B & 33.58/.9506 & 30.69/.8656 & 27.97/.7638 & 25.71/.6688 & 23.86/.5876 \\
\addlinespace[1.5pt] \hline \addlinespace[1.5pt]
V1A & 37.72/.9741 & 33.89/.9351 & 31.48/.8908 & 29.72/.8444 & 28.32/.7983\\
V1B & 35.47/.9615 & 33.19/.9260 & 31.17/.8819 & 29.54/.8361 & 28.01/.7791\\
V1C & 37.90/.9747 & 33.88/.9354 & 31.43/.8922 & 29.67/.8484 & 28.28/.8043 \\
\addlinespace[1.5pt] \hline \addlinespace[1.5pt]
V1D & 37.74/.9741 & 33.90/.9351 & 31.49/.8905 & 29.72/.8439 & 28.31/.7975\\
V1E & 35.47/.9615 & 33.19/.9260 & 31.17/.8818 & 29.54/.8362 & 28.02/.7796 \\
V1F & 37.90/.9747 & 33.89/.9354 & 31.43/.8922 & 29.67/.8485 & 28.28/.8044 \\
\addlinespace[1.5pt] \hline \addlinespace[1.5pt]
V2A & 38.04/.9760 & 34.28/.9430 & 31.97/.9073 & 30.30/.8703 & 28.99/.8332\\
V2B & 37.85/.9762 & 34.11/.9435 & 31.85/.9080 & 30.22/.8711 & 28.93/.8339\\
\addlinespace[1.5pt] \hline \addlinespace[1.5pt]
V3A & 37.84/.9761 & 34.10/.9434 & 31.84/.9079 & 30.21/.8711 & 28.93/.8341\\
V3B & 37.85/.9761 & 34.11/.9434 & 31.85/.9080 & 30.22/.8712 & 28.93/.8342 \\
V3C & 37.85/.9761 & 34.11/.9437 &  31.85/.9082 & 30.22/.8716 &  28.94/.8347\\
\bottomrule
\end{tabular}
\end{table}

\subsection{Contractive IR operators $\Sctr$ vs $\Rctr$}

In the main paper, we use the lightweight \emph{V1C} configuration for $\Dsig$. Although its standalone denoising and reconstruction performance is lower than that of the heavier V2/V3 variants, it is more efficient and, more importantly, performs strongly inside the stabilization framework. In particular, as shown in the next subsection, V1C often matches or even surpasses the larger variants when used as an anchor in \Cref{algo:vista}. This motivates our choice of V1C in the main experiments.

\subsubsection{Standalone.}

\Cref{tab:cbsd10_dsig_standalone} compares the standalone performance of $\Sctr$ and $\Rctr$ using variants V1C, V2A, and V3C. 
Overall, $\Sctr$ is stronger for both deblurring tasks and for $2\times$ superresolution, while $\Rctr$ is slightly better for $3\times$ and $4\times$ superresolution. 
Among all variants, V2A in $\Sctr$ gives the best standalone PSNR for Gaussian deblurring, motion deblurring, and $2\times$ superresolution, whereas V2A in $\Rctr$ is best for $3\times$ and $4\times$ superresolution. 
The next subsection shows that standalone performance alone is not a sufficient criterion for selecting the anchor in \Cref{algo:vista}.

\begin{table*}[htpb]
\centering
\caption{PSNR(dB) results for deblurring and superresolution using variants noted in \Cref{tab:dsig-varaints} using IR operators $\Sctr$ and $\Rctr$ on CBSD10 ($\sigma_{\n}=0.02$).}
\label{tab:cbsd10_dsig_standalone}
\setlength\tabcolsep{5pt}
\begin{tabular}{lccccc}
\toprule
\multirow{3}{*}{\textbf{IR Operator}} &
\multirow{2}{*}{
{\includegraphics[width=\lenSR]{figures/kernels/blur/9.png}}} &  
\includegraphics[width=\lenSR]{figures/kernels/blur/1.png}
\includegraphics[width=\lenSR]{figures/kernels/blur/2.png}
\includegraphics[width=\lenSR]{figures/kernels/blur/3.png}
\includegraphics[width=\lenSR]
{figures/kernels/blur/4.png} 
&
\multicolumn{3}{c}{\multirow{2}{*}{
\raisebox{\height}{
\includegraphics[width=\lenSR]{figures/kernels/SR/1.png}
\includegraphics[width=\lenSR]{figures/kernels/SR/2.png}
\includegraphics[width=\lenSR]{figures/kernels/SR/3.png}
\includegraphics[width=\lenSR]{figures/kernels/SR/4.png}
}}}\\
& & 
\includegraphics[width=\lenSR]{figures/kernels/blur/5.png}
\includegraphics[width=\lenSR]{figures/kernels/blur/6.png}
\includegraphics[width=\lenSR]{figures/kernels/blur/7.png}
\includegraphics[width=\lenSR]{figures/kernels/blur/8.png}
\\
\addlinespace[1.5pt] \hline \addlinespace[1.5pt]
& \multicolumn{2}{c}{$s=1$} &  $s=2$ & $s=3$ & $s=4$\\
\addlinespace[1.5pt] \hline \addlinespace[1.5pt]
Start & $24.15$ & $20.38$ & $24.28$ & $22.68$ & $21.40$\\
\midrule
V1C in $\Sctr$ & \underline{27.55} & 28.71 & 26.84 & 24.97 & 22.77\\
V2A in $\Sctr$ & \textbf{27.56} & \textbf{29.02} & \textbf{26.99} & \underline{25.23} & \underline{23.41}\\
V3C in $\Sctr$  & 27.50 & \underline{28.98} & 26.89 & 25.05 & 22.98 \\
\addlinespace[1.5pt] \hline \addlinespace[1.5pt]
V1C in $\Rctr$ & 27.46 & 27.78 & 26.75 & 25.06 & 22.90\\
V2A in $\Rctr$ & 27.50 & 28.22 & \underline{26.92} & \textbf{25.33} & \textbf{23.54}\\
V3C in $\Rctr$ & 27.49 & 28.22 & 26.89 & 25.21 & 23.17 \\
\bottomrule
\end{tabular}
\end{table*}

\subsubsection{Anchor in \Cref{algo:vista}.}

\Cref{tab:set3c_vista_abl} shows that, unlike the standalone setting, the choice among V1C, V2A, and V3C has only a minor effect once the corresponding operators are used as anchors in \Cref{algo:vista}; all variants perform very similarly. Thus, the stronger standalone performance of the heavier multi-convolution variants does not translate into a clear advantage inside the stabilization framework. This supports our use of the lightweight V1C configuration in the main paper. We do not use $\Dsig$ alone as the anchor because, although it stabilizes the iterates, its performance remains close to the vanilla peak. In contrast, $\Sctr$ and $\Rctr$ yield better reconstructions, which we attribute to their additional data-fidelity step.
We use $\Sctr$ rather than $\Rctr$ as the anchor mainly for simplicity of motivation and presentation: although $\Rctr$ is tied to the minimization of~\eqref{eq:sc_obj}, that objective-level interpretation is not needed for \Cref{algo:vista}, whereas the standard HQS-based operator $\Sctr$ provides a simpler and equally effective anchor.

\begin{table*}[htpb]
\centering
\caption{PSNR results (mean $\pm$ std.\ dev.) on set3c ($\sigma_{\n}=0.02$). All expriments were conducted using PnP-PGD~\eqref{eq:pnppgd} + DRUNet~\cite{zhang2021plug} denoiser. \xmark~indicates divergence of the iterates.}
\label{tab:set3c_vista_abl}
\setlength\tabcolsep{4pt}
\tiny
\begin{tabular}{p{0.05cm}ccccccc}
\toprule
& & \multicolumn{2}{c}{\includegraphics[width=\lenBlur]{figures/kernels/blur/3.png}} & \multicolumn{2}{c}{\includegraphics[width=\lenBlur]{figures/kernels/blur/9.png}} &
\multicolumn{2}{c}{\includegraphics[width=\lenBlur]{figures/kernels/SR/3.png}}\\
& & \multicolumn{4}{c}{$s=1$} & \multicolumn{2}{c}{$s=2$} \\
\cmidrule(lr){3-4} \cmidrule(lr){5-6} \cmidrule(lr){7-8}
& \textbf{Start} & \multicolumn{2}{c}{$19.00\pm1.97$} & \multicolumn{2}{c}{$21.75\pm1.78$} &\multicolumn{2}{c}{$21.19\pm1.85$} \\
\midrule
 & & Peak & Final & Peak & Final & Peak & Final \\
\midrule
& \textbf{Vanilla} & $28.99\pm0.32$ & $28.37\pm1.04$ & $26.60\pm1.01$ & \xmark & $25.54\pm1.09$ & \xmark\\
\addlinespace[1.5pt] \hline \addlinespace[1.5pt]
\multirow{9}{*}{\rotatebox[origin=c]{90}{\Cref{algo:vista}}} & \textbf{$\Dsig$ with V1C} & $28.35\pm0.48$ & $28.35\pm0.48$ & $26.68\pm0.98$ & $26.68\pm0.98$ & $25.55\pm1.08$ & $25.54\pm1.11$\\
\addlinespace[1.5pt] \cline{2-8} \addlinespace[1.5pt]
 & \textbf{$\Sctr$ with V1C} & $\mathbf{29.18\pm0.22}$ & $\mathbf{29.18\pm0.22}$ & $\mathbf{26.85\pm0.89}$ & $\mathbf{26.85\pm0.89}$ & \underline{$25.74\pm1.05$} & \underline{$25.74\pm1.05$}\\
\addlinespace[2pt]
 & \textbf{$\Sctr$ with V2A} & $29.15\pm0.25$ & $29.14\pm0.24$ & $26.84\pm0.93$ & $26.84\pm0.93$ & $25.69\pm1.08$ & $25.69\pm1.09$\\
\addlinespace[2pt]
& \textbf{$\Sctr$ with V3C} & $29.14\pm0.26$ & $29.13\pm0.24$ & $26.83\pm0.93$ & $26.83\pm0.94$ & $25.67\pm1.09$ & $25.67\pm1.10$\\
\addlinespace[1.5pt] \cline{2-8} \addlinespace[1.5pt]
& \textbf{$\Rctr$ with V1C} & \underline{$29.17\pm0.25$} & \underline{$29.17\pm0.24$} & \underline{$26.85\pm0.90$} & \underline{$26.85\pm0.90$} & $\mathbf{25.77\pm1.05}$ & $\mathbf{25.77\pm1.05}$\\
\addlinespace[2pt]
& \textbf{$\Rctr$ with V2A} & $29.17\pm0.26$ & $29.16\pm0.25$ & $26.84\pm0.92$ & $26.84\pm0.92$ & $25.72\pm1.07$ & $25.72\pm1.07$ \\
\addlinespace[2pt]
& \textbf{$\Rctr$ with V3C} & $29.16\pm0.26$ & $29.16\pm0.25$ & $26.83\pm0.93$ & $26.83\pm0.93$ & $25.70\pm1.08$ & $25.70\pm1.09$\\
\bottomrule
\end{tabular}
\end{table*}

\section{Additional results}

\subsection{Stability landscapes for Equivariant-PnP}

To complement \Cref{fig:rot} in the main paper, we visualize the corresponding stability landscapes for Equivariant-PnP~\cite{terris2024equivariant} in \Cref{fig:rot_equiv}. 
These plots show that, although equivariant averaging can enlarge the stable region in some settings, it does not eliminate peak-and-collapse behaviour altogether. 
Thus, Equivariant-PnP can delay instability, but remains sensitive to parameter choices and does not provide the same reliability as the proposed stabilization framework.

\begin{figure}[ht]
\centering
\captionsetup[subfloat]{labelformat=empty,labelsep=none}
\subfloat[ADMM+DiffUNet]{\includegraphics[width=\Rthree]{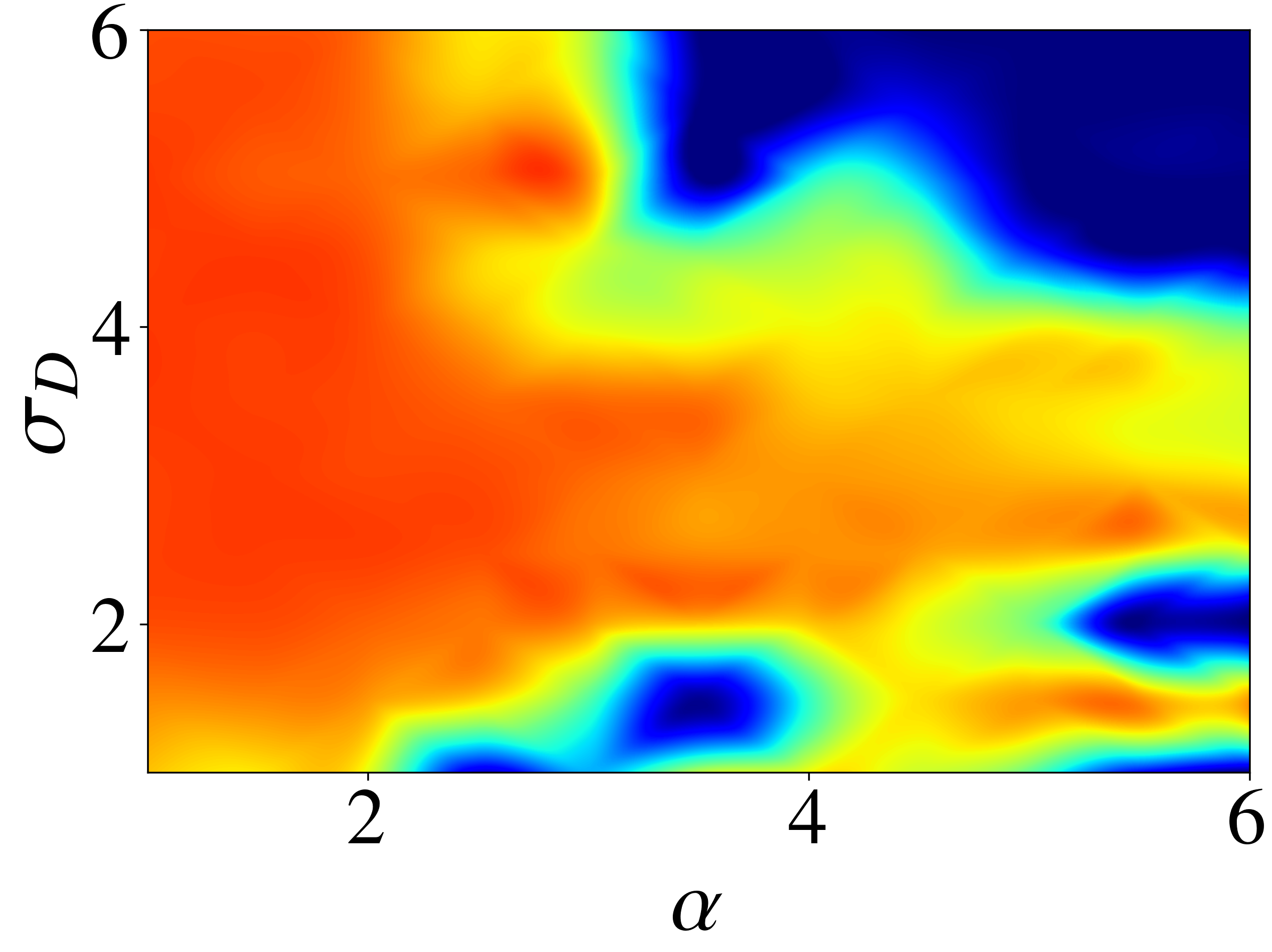}}\hfill
\subfloat[PGD+DRUNet]{\includegraphics[width=\Rthree]{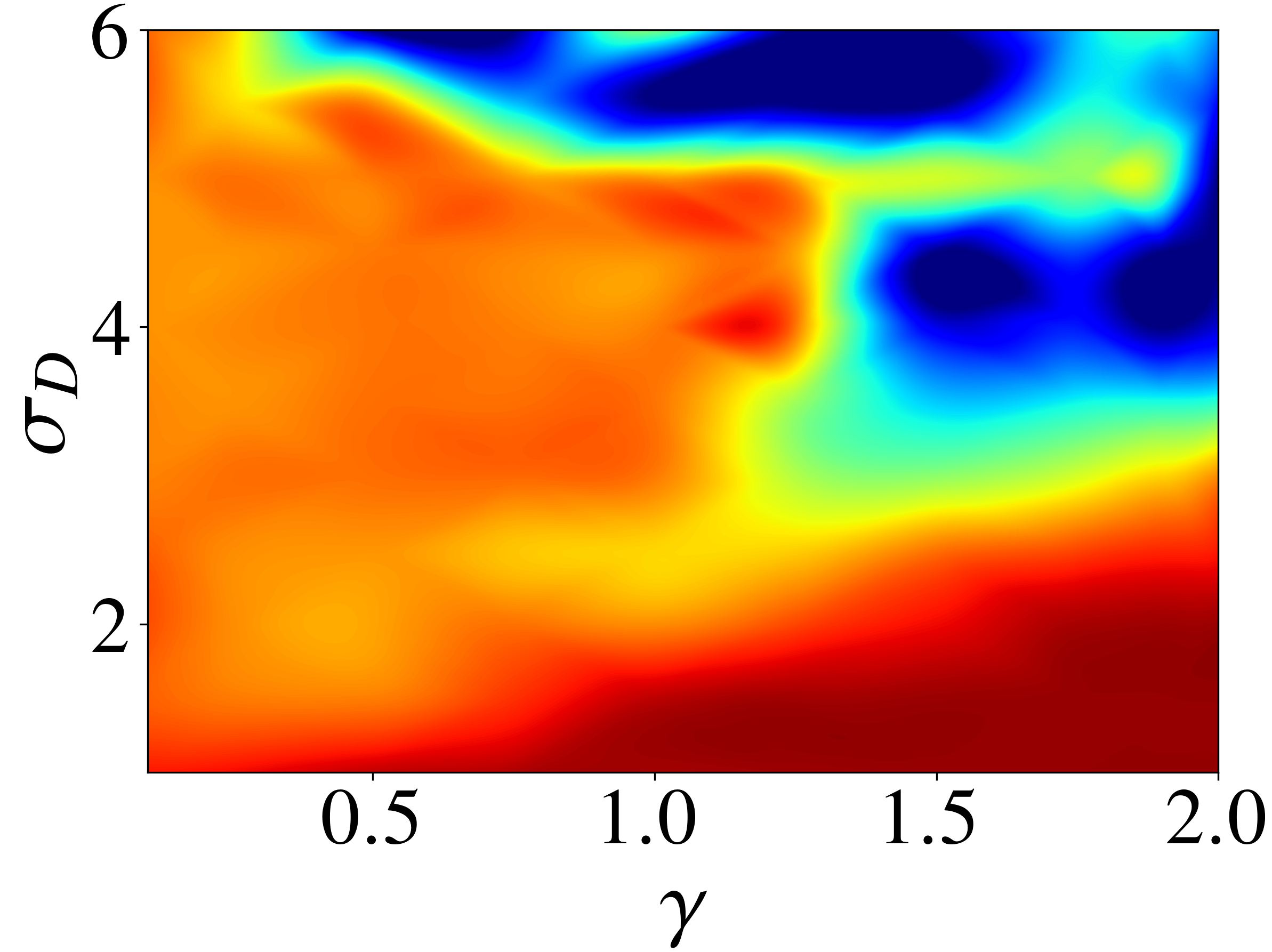}}\hfill
\subfloat[HQS+DnCNN]{\includegraphics[width=\Rthree]{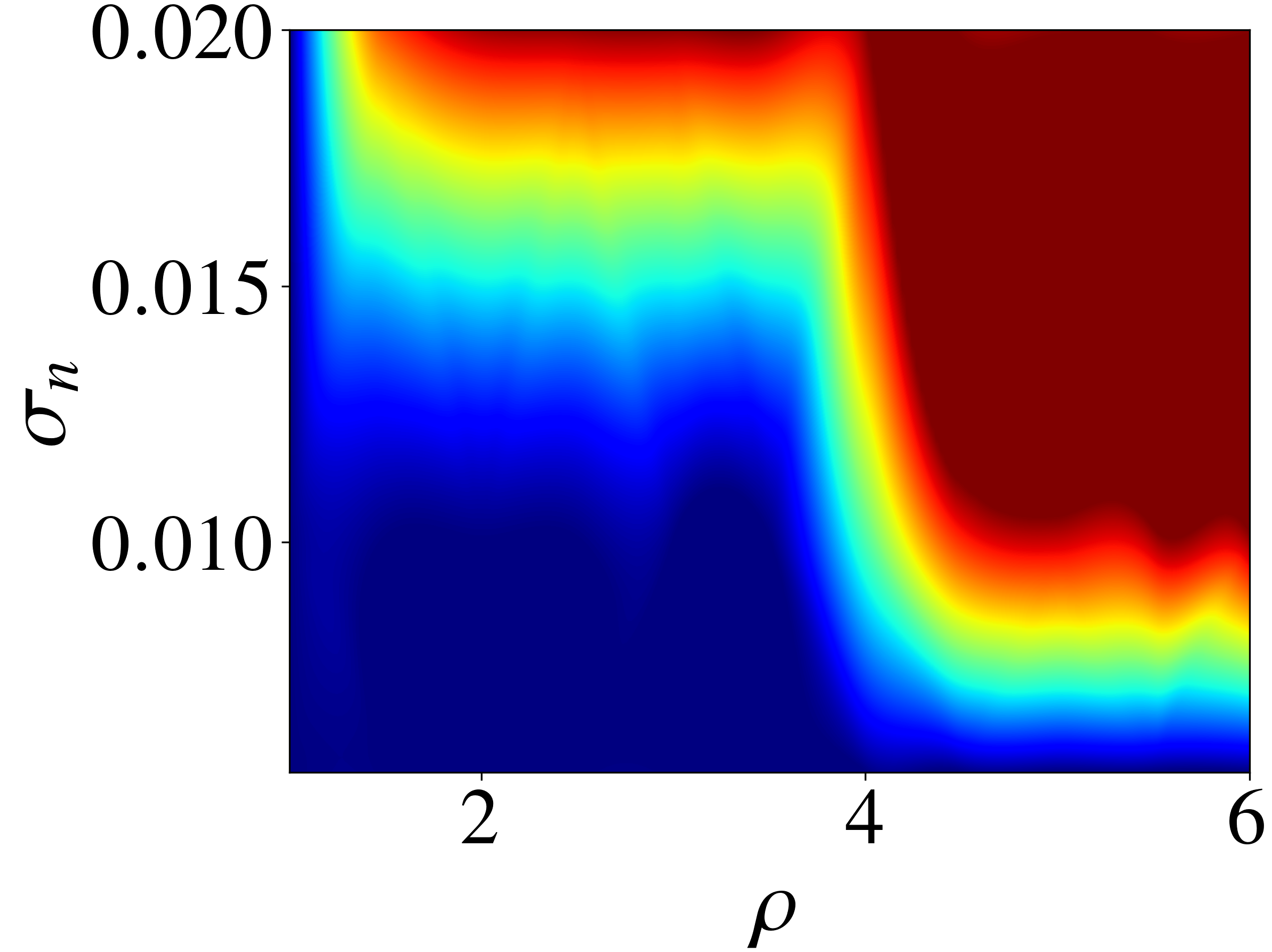
}}\hfill
\subfloat[ADMM+GSDRUNet]{\includegraphics[width=\Rthree]{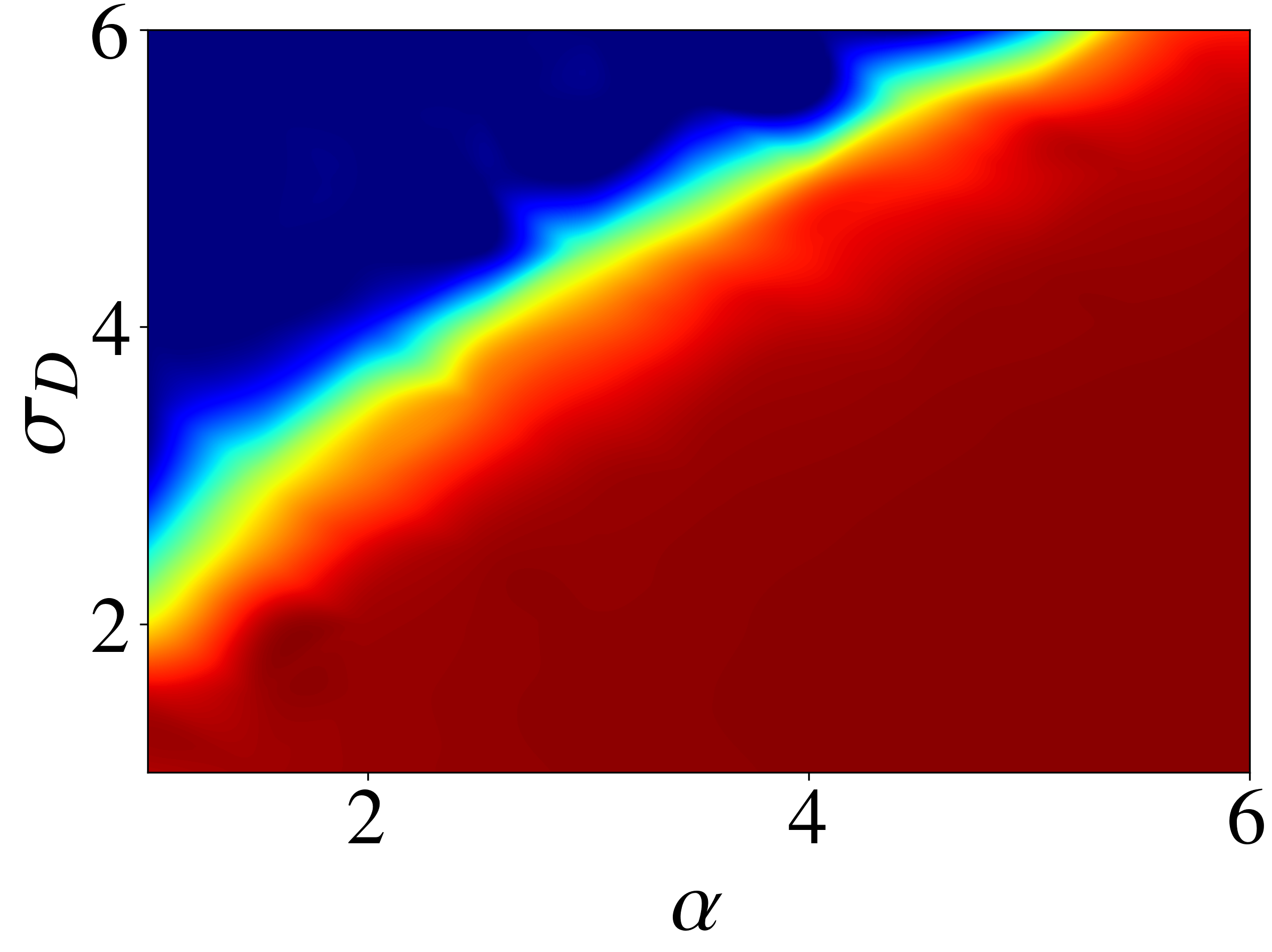}}\hfill
\subfloat[HQS+DRUNet]{\includegraphics[width=\Rthree]{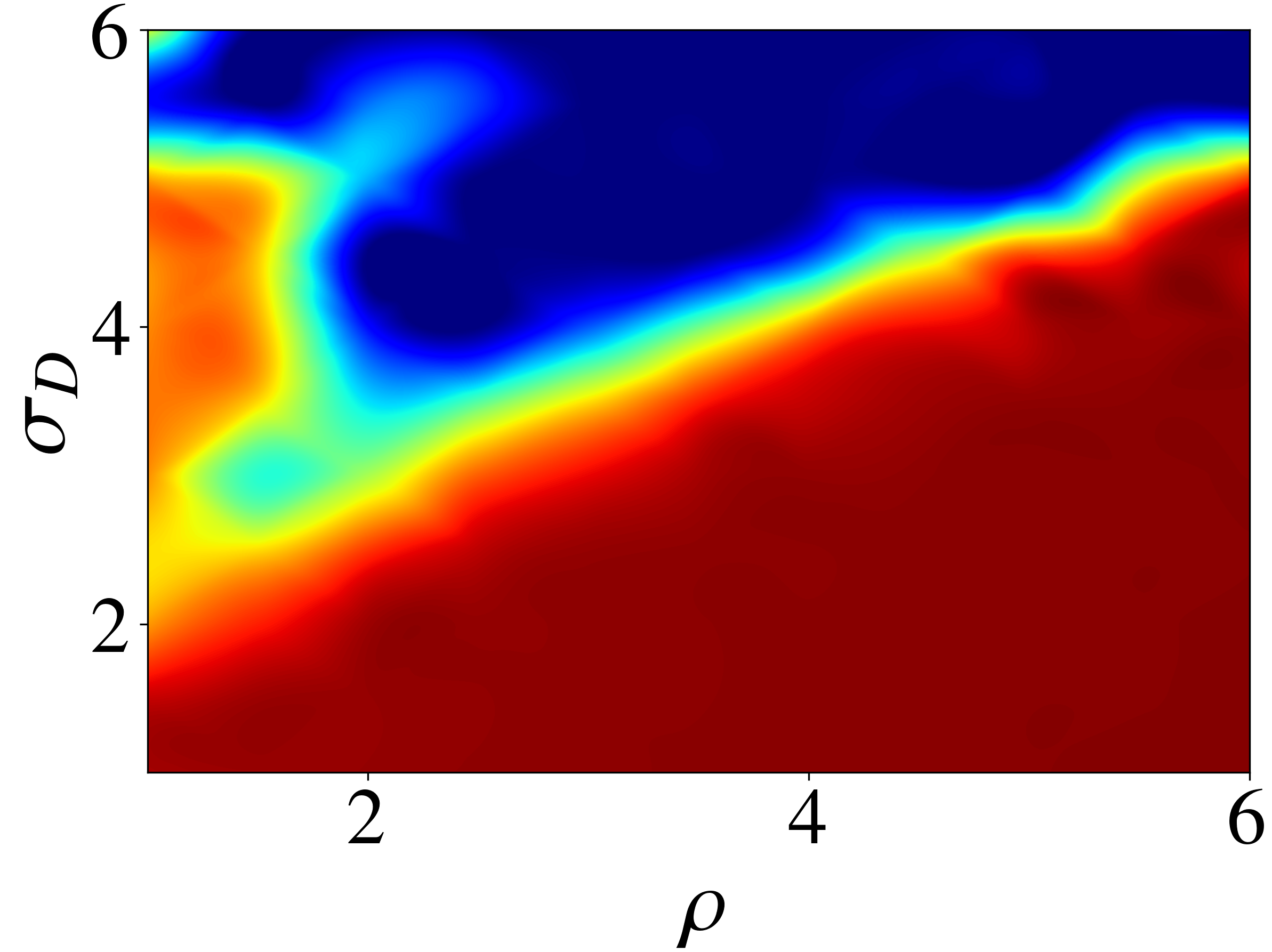}}\hfill
\subfloat[PGD+MMO]{\includegraphics[width=\Rthree]{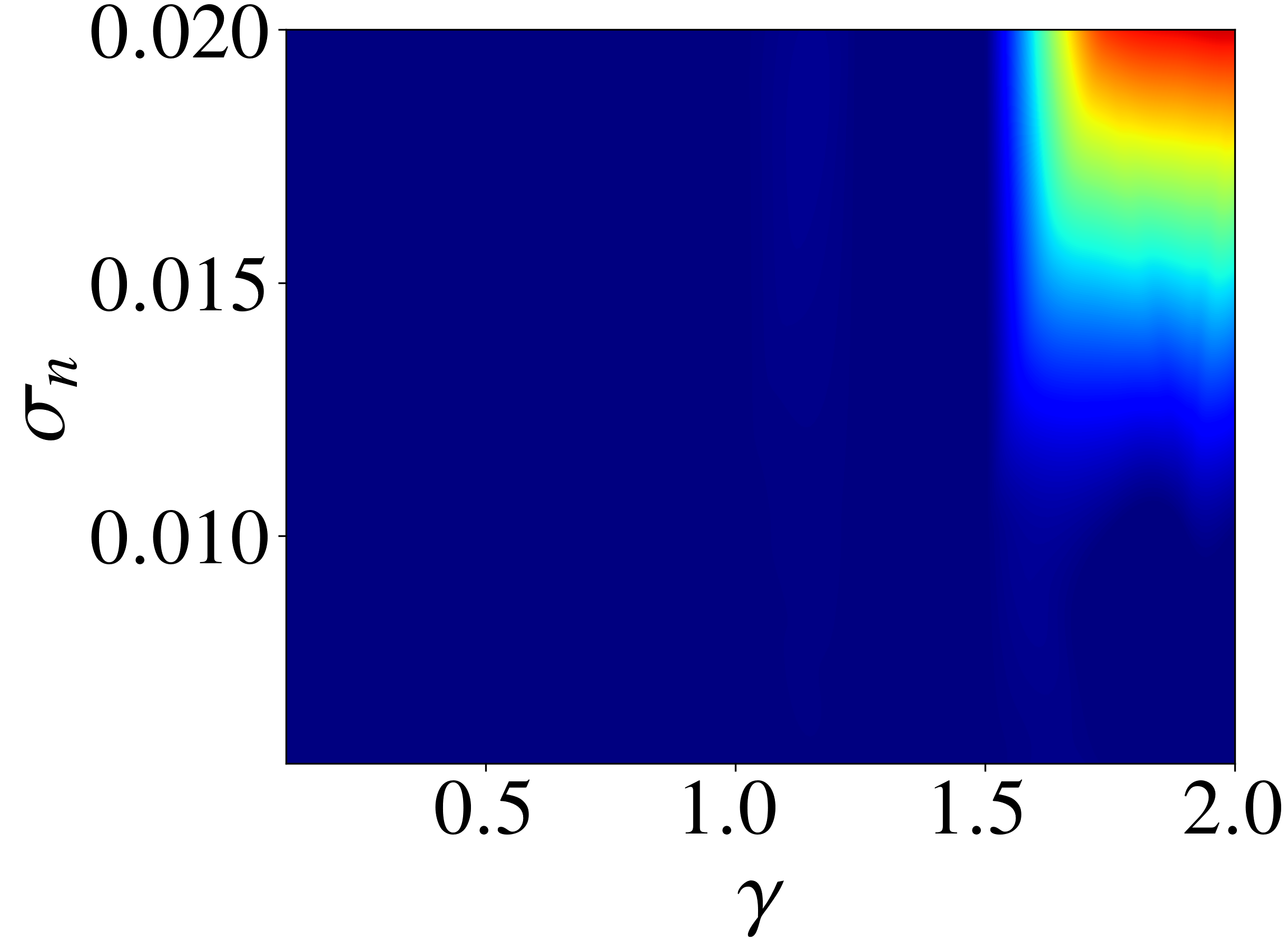
}}\hfill

\caption{Stability regions for Equivariant-PnP, complementing \Cref{fig:rot}. While Equivariant-PnP can improve stability relative to Vanilla-PnP in some parameter regimes, instability remains common.}
\label{fig:rot_equiv}
\end{figure}

\subsection{Computational overhead} 

As the stabilizer $\Sctr$ uses the lightweight shallow denoiser $\Dsig$, the additional per-iteration overhead remains modest, at approximately $15$–$20\%$ across the evaluated settings; see \Cref{tab:time_comp}. The one-time computation of the anchor fixed point $\p$ takes about $1.1$s on average and is negligible when amortized over long runs.

\begin{table}[ht]
\centering
\small
\setlength\tabcolsep{4.5pt}
\caption{Mean \emph{per} iteration computational times (in s) comparison.}
\label{tab:time_comp}
\begin{tabular}{lccc}
\toprule
\textbf{Setting} & \textbf{Vanilla} & \textbf{Equiv} & \textbf{Ours} \\
\midrule
\parbox{8cm}{PnP-HQS~\eqref{eq:pnphqs} + DRUNet~\cite{zhang2021plug}, motion deblur, $400 \times 400$ }
& 0.0478 & 0.0490 & 0.0569  \\
\addlinespace[1.5pt] \hline \addlinespace[1.5pt]
\parbox{8cm}{PnP-PGD~\eqref{eq:pnppgd} + DiffUNet~\cite{choi2021conditioning}, Gaussian deblur, $256 \times 256$ }
& 0.0438 & 0.0471 & 0.0512  \\
\addlinespace[1.5pt] \hline \addlinespace[1.5pt]
\parbox{8cm}{RED-GD~\eqref{eq:redgd} + GSDRUNet~\cite{hurault2022gradient}, $3\times$ SR, $420 \times 360$} 
& 0.0746 & 0.0759 & 0.0885  \\
\bottomrule
\end{tabular}
\end{table}

\subsection{Visual comparisons}
We provide additional qualitative results to illustrate the behaviour of the proposed stabilization framework across different inverse problems, denoisers, and PnP frameworks. 
\Cref{fig:dncnn-castle,fig:gsdrunet-glass,fig:diffunet-leaves,fig:mmo-corals} show representative examples for Gaussian deblurring, motion deblurring, and superresolution using DnCNN~\cite{zhang2017beyond}, DRUNet~\cite{zhang2021plug}, GSDRUNet~\cite{hurault2022gradient}, DiffUNet~\cite{choi2021conditioning}, and MMO~\cite{pesquet_learning_2021} denoisers. 
Across all these settings, Vanilla-PnP and often Equivariant-PnP can achieve strong intermediate reconstructions but later deteriorate, exhibiting the peak-and-collapse (PC) behaviour discussed in the main text, reiterating that the PC behaviour is not specific to a particular task, denoiser, or PnP framework.

In contrast, our method improves and  stabilizes the reconstruction quality without the abrupt degradation seen in the unstabilized baselines. The final output remains close to the best iterate avoiding the need for early stopping. The corresponding PSNR trajectories in \Cref{fig:psnr_plots_app} confirm the same phenomenon quantitatively. \Cref{fig:knee_mri} further demonstrates the same effect in MRI where convergence is deemed to be critical: Vanilla-PnP degrades substantially after its peak, whereas our stabilized reconstruction remains competitive compared to the convergence motivated baselines WCRR~\cite{goujon_learning_2024} and DEAL~\cite{pourya2025dealing}.



For completeness, we also include visual results with Restormer~\cite{zamir2022restormer} and SCUNet~\cite{zhang2023practical}; see \Cref{fig:vanilla-scyscrapper,fig:equiv-skyscraper}. 
These results further demonstrate that the PC phenomenon persists even with recent high-capacity architectures, including Transformer-based models, and that the proposed stabilization mechanism remains effective in these settings as well. 

Overall the additional experiments reinforce that instability is a common feature of deep PnP frameworks, whereas the proposed framework consistently yields reliable reconstructions through a stabilized iterative process.

\begin{figure}[t]
    \captionsetup[subfloat]{labelformat=empty,labelsep=none,justification=centering}
    \centering
    \subfloat[zero-fill ($\A^\top \b$)]{
    \begin{overpic}[width=0.22\linewidth]{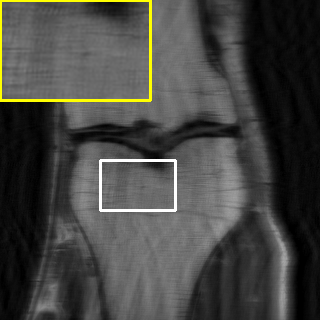}
    \put(4,4){\color{white}\tiny\bfseries PSNR: 29.19 dB}
    \end{overpic}
    }\hfill
    \subfloat[Vanilla (Peak)]{
    \begin{overpic}[width=0.22\linewidth]{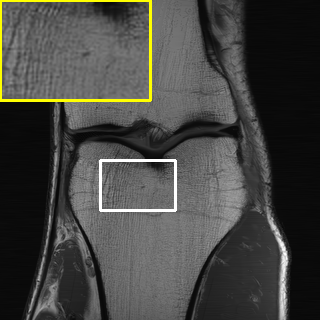}
    \put(4,4){\color{white}\tiny\bfseries PSNR: 42.70 dB}
    \end{overpic}
    }\hfill
    \subfloat[Vanilla (Final)]{
    \begin{overpic}[width=0.22\linewidth]{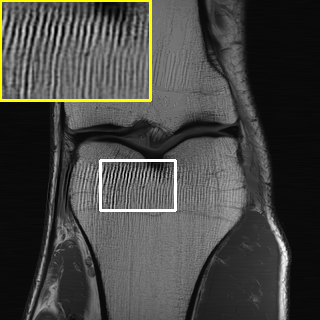}
    \put(4,4){\color{white}\tiny\bfseries PSNR: 33.22 dB}
    \end{overpic}
    }
    \subfloat[$\p$]{
    \begin{overpic}[width=0.22\linewidth]{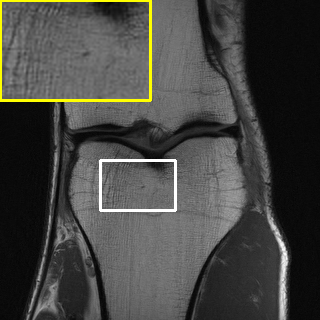}
    \put(4,4){\color{white}\tiny\bfseries PSNR: 42.33 dB}
    \end{overpic}
    }
    \\
    \subfloat[WCRR]{
    \begin{overpic}[width=0.22\linewidth]{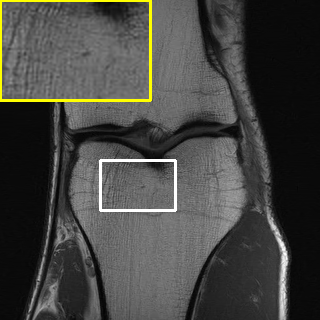}
    \put(4,4){\color{white}\tiny\bfseries PSNR: 42.13 dB}
    \end{overpic}
    }\hfill
    \subfloat[DEAL]{
    \begin{overpic}[width=0.22\linewidth]{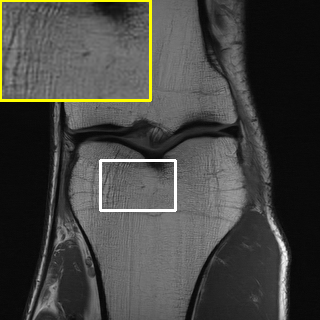}
    \put(4,4){\color{white}\tiny\bfseries PSNR: 43.28 dB}
    \end{overpic}
    }\hfill
    \subfloat[Ours]{
    \begin{overpic}[width=0.22\linewidth]{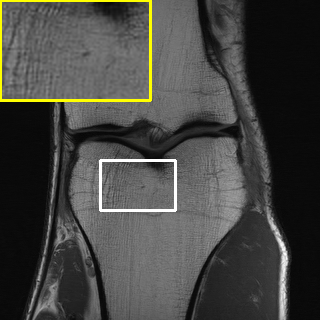}
    \put(4,4){\color{white}\tiny\bfseries PSNR: 43.76 dB}
    \end{overpic}
    }\hfill
    \subfloat[clean]{
    \begin{overpic}[width=0.22\linewidth]{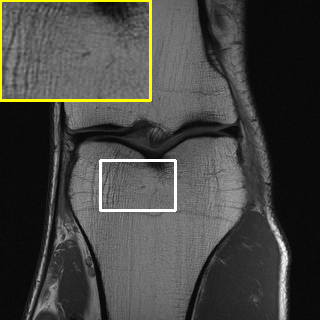}
    \end{overpic}
    }\hfill
    \caption{Knee MRI reconstruction experiment with additive noise $\sigma_{\n} = 0.002$ using RED-GD~\eqref{eq:redgd} + DRUNet~\cite{zhang2021plug}. Results on par with baselines WCRR~\cite{goujon_learning_2024} and DEAL~\cite{pourya2025dealing}.}
    \label{fig:knee_mri}
\end{figure}

\begin{figure}[ht]
    \centering
    \subfloat[blurry]{
    \includegraphics[width=\Rfour]{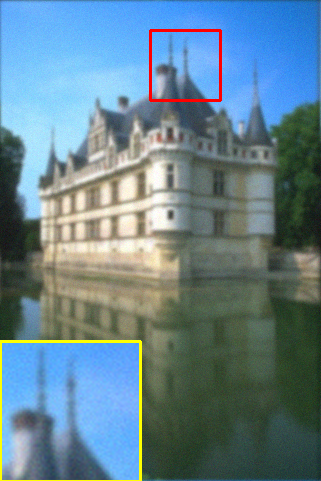}}\hfill
    \subfloat[Vanilla (iter $=1000$)]{
    \includegraphics[width=\Rfour]{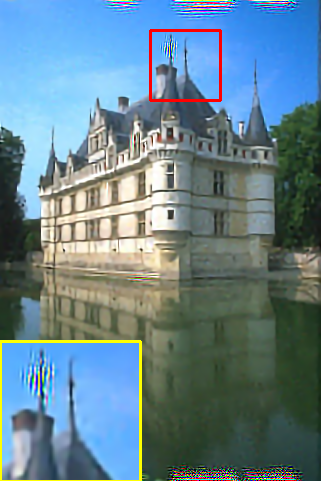}}\hfill
    \subfloat[Equiv (iter $=1000$)]{
    \includegraphics[width=\Rfour]{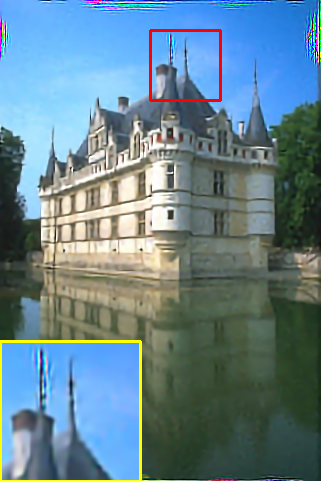}}\hfill
    \subfloat[Ours]{
    \includegraphics[width=\Rfour]{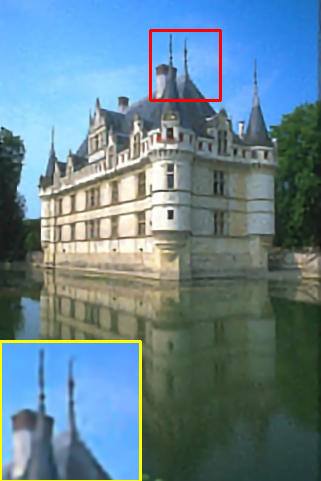}}\hfill
    \subfloat[Vanilla (best)]{
    \includegraphics[width=\Rfour]{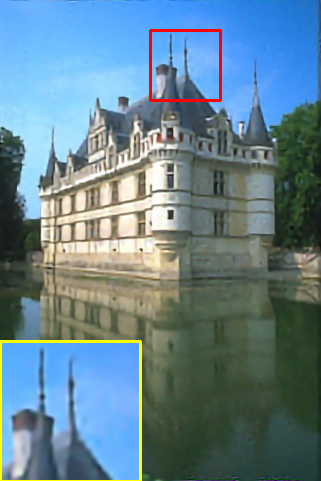}}\hfill
    \subfloat[Equiv (best)]{
    \includegraphics[width=\Rfour]{figures/castle/vanilla_best.png}}\hfill
    \subfloat[$\p$]{
    \includegraphics[width=\Rfour]{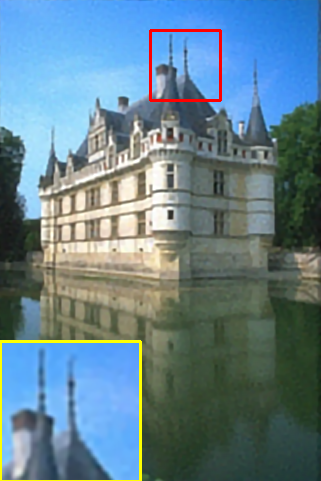}}\hfill
    \subfloat[clean]{
    \includegraphics[width=\Rfour]{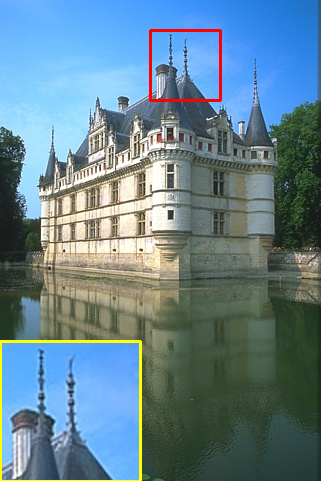}}\hfill
    \caption{Gaussian deblurring results on \textit{castle} from CBSD68 using PnP-PGD~\eqref{eq:pnppgd} + DnCNN~\cite{zhang2017beyond}. We used Gaussian blur with standard deviation $1.6$ and additive noise with $\sigma_{\n}=0.02$. The PSNR(dB) are: (a) $23.94$, (b) $15.32$, (c) $15.74$, (d) $26.94$, (e) $26.88$, (f) $26.57$, and (g) $26.42$.}
    \label{fig:dncnn-castle}
\end{figure}

\begin{figure}[ht]
    \centering
    \subfloat[bicubic]{
    \includegraphics[width=\Rfour]{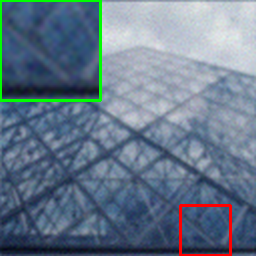}}\hfill
    \subfloat[Vanilla (iter $=1000$)]{
    \includegraphics[width=\Rfour]{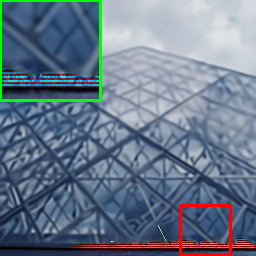}}\hfill
    \subfloat[Equiv (iter $=1000$)]{
    \includegraphics[width=\Rfour]{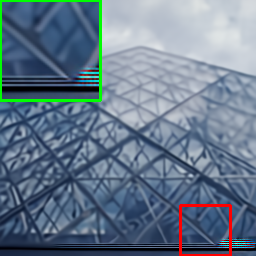}}\hfill
    \subfloat[Ours]{
    \includegraphics[width=\Rfour]{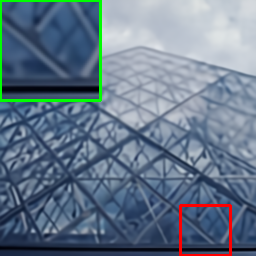}}\hfill
    \subfloat[Vanilla (best)]{
    \includegraphics[width=\Rfour]{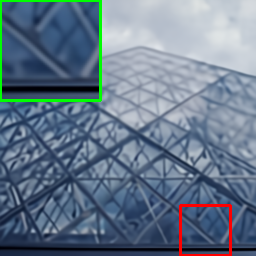}}\hfill
    \subfloat[Equiv (best)]{
    \includegraphics[width=\Rfour]{figures/glass/vanilla_best.png}}\hfill
    \subfloat[$\p$]{
    \includegraphics[width=\Rfour]{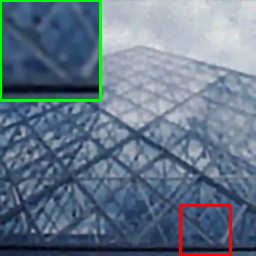}}\hfill
    \subfloat[clean]{
    \includegraphics[width=\Rfour]{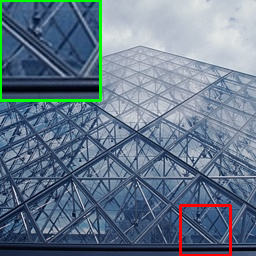}}\hfill
    \caption{$2\times$ superresolution results on \textit{glassdome} from CBSD68 using PnP-HQS~\eqref{eq:pnphqs} + GSDRUNet~\cite{hurault2022gradient}. We used Gaussian blur with standard deviation $1.6$ and additive noise with $\sigma_{\n}=0.02$. The PSNR(dB) are: (a) $21.24$, (b) $20.21$, (c) $20.74$, (d) $23.03$, (e) $23.07$, (f) $23.10$, and (g) $22.42$.}
    \label{fig:gsdrunet-glass}
\end{figure}

\begin{figure}[ht]
    \centering
    \subfloat[blurry]{
    \includegraphics[width=\Rfour]{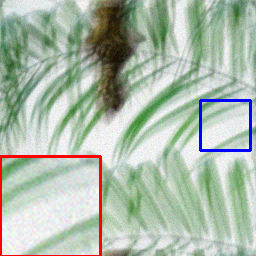}}\hfill
    \subfloat[Vanilla (iter $=1000$)]{
    \includegraphics[width=\Rfour]{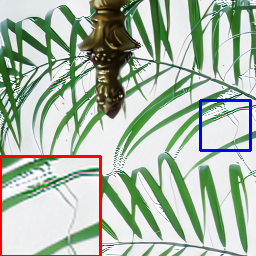}}\hfill
    \subfloat[Equiv (iter $=1000$)]{
    \includegraphics[width=\Rfour]{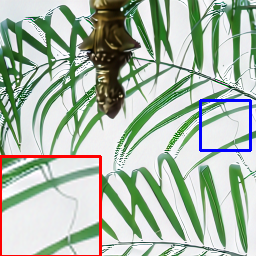}}\hfill
    \subfloat[Ours]{
    \includegraphics[width=\Rfour]{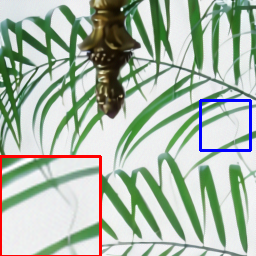}}\hfill
    \subfloat[Vanilla (best)]{
    \includegraphics[width=\Rfour]{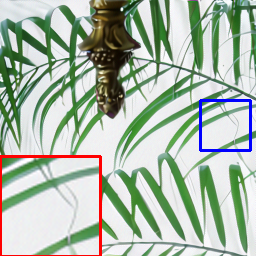}}\hfill
    \subfloat[Equiv (best)]{
    \includegraphics[width=\Rfour]{figures/leaves/vanilla_best.png}}\hfill
    \subfloat[$\p$]{
    \includegraphics[width=\Rfour]{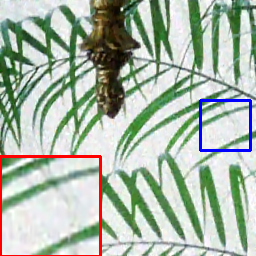}}\hfill
    \subfloat[clean]{
    \includegraphics[width=\Rfour]{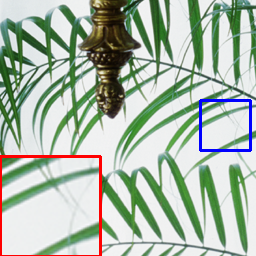}}\hfill
    \caption{Motion deblurring results on \textit{leaves} from set3c using PnP-HQS~\eqref{eq:pnphqs} + DiffUNet~\cite{choi2021conditioning}. We used kernel 8~\cite{levin_kernel_2009} and additive noise with $\sigma_{\n}=0.03$. The PSNR(dB) are: (a) $12.51$, (b) $23.73$, (c) $23.01$, (d) $29.23$, (e) $27.25$, (f) $27.10$, and (g) $24.94$.}
    \label{fig:diffunet-leaves}
\end{figure}

\begin{figure}[ht]
    \centering
    \subfloat[blurry]{
    \includegraphics[width=\Rfour]{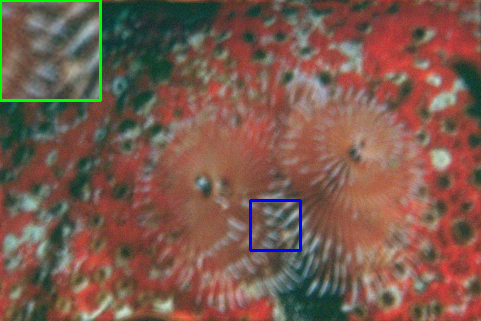}}\hfill
    \subfloat[Vanilla (iter $=1000$)]{
    \includegraphics[width=\Rfour]{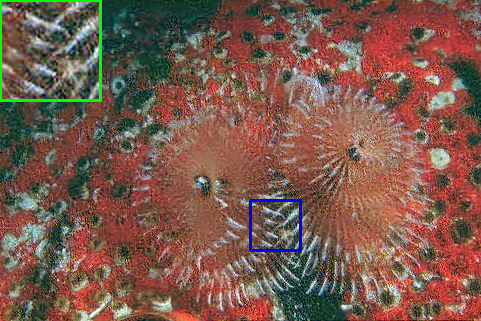}}\hfill
    \subfloat[Equiv (iter $=1000$)]{
    \includegraphics[width=\Rfour]{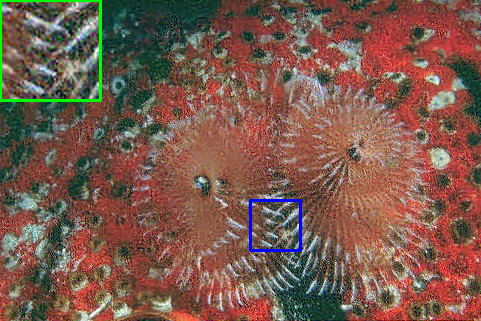}}\hfill
    \subfloat[Ours]{
    \includegraphics[width=\Rfour]{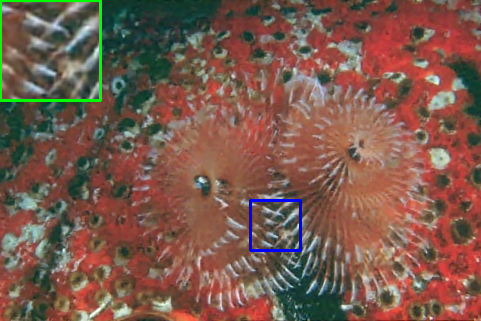}}\hfill
    \subfloat[Vanilla (best)]{
    \includegraphics[width=\Rfour]{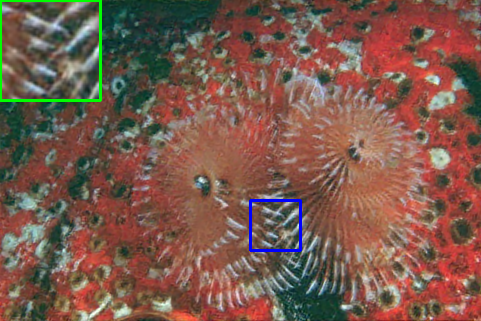}}\hfill
    \subfloat[Equiv (best)]{
    \includegraphics[width=\Rfour]{figures/corals/vanilla_best.png}}\hfill
    \subfloat[$\p$]{
    \includegraphics[width=\Rfour]{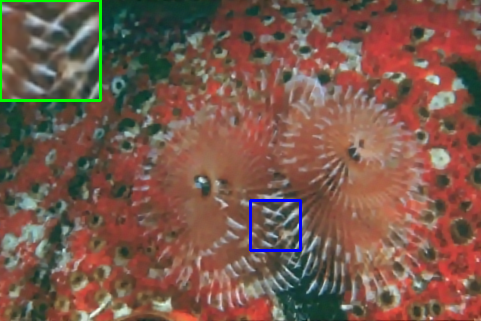}}\hfill
    \subfloat[clean]{
    \includegraphics[width=\Rfour]{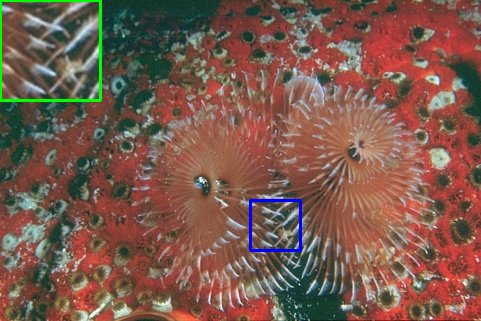}}\hfill
    \caption{Motion deblurring results on \textit{corals} from CBSD68 using PnP-PGD~\eqref{eq:pnppgd} + MMO~\cite{pesquet_learning_2021}. We used kernel 3~\cite{levin_kernel_2009} and additive noise with $\sigma_{\n}=0.02$. The PSNR(dB) are: (a) $22.69$, (b) $22.37$, (c) $22.48$, (d) $29.19$, (e) $28.37$, (f) $28.37$, and (g) $28.81$.}
    \label{fig:mmo-corals}
\end{figure}

\begin{figure}[ht]
    \centering
    \subfloat[PSNR plots for \Cref{fig:dncnn-castle}.]{
    \label{fig:castle-psnr}
    \includegraphics[width=\Rtwo]{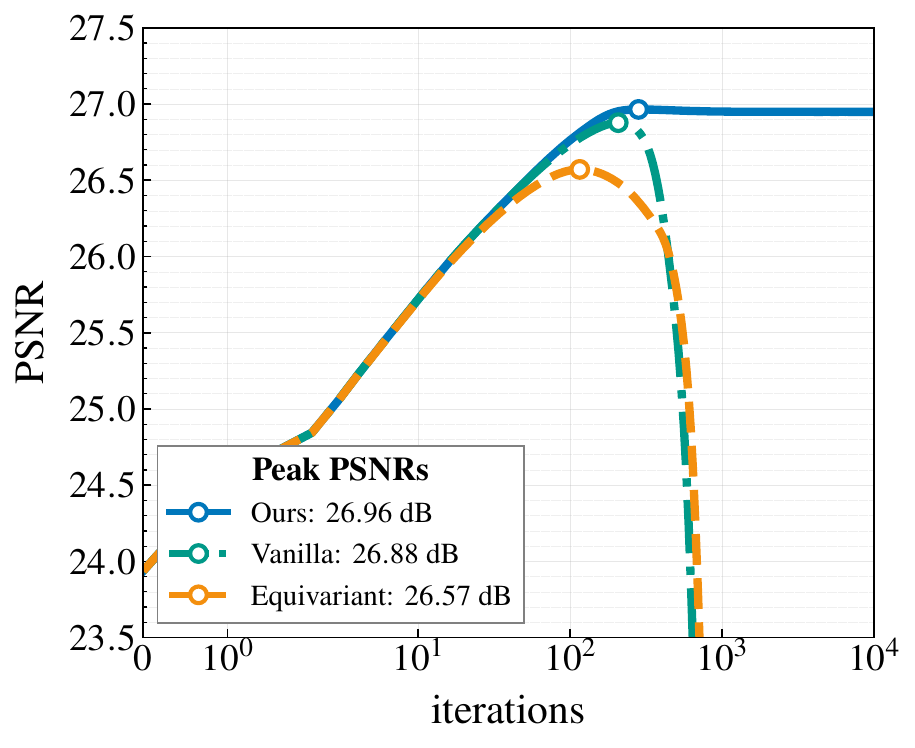}
    }\hfill
    \subfloat[PSNR plots for \Cref{fig:gsdrunet-glass}.]{
    \label{fig:glass-psnr}
    \includegraphics[width=\Rtwo]{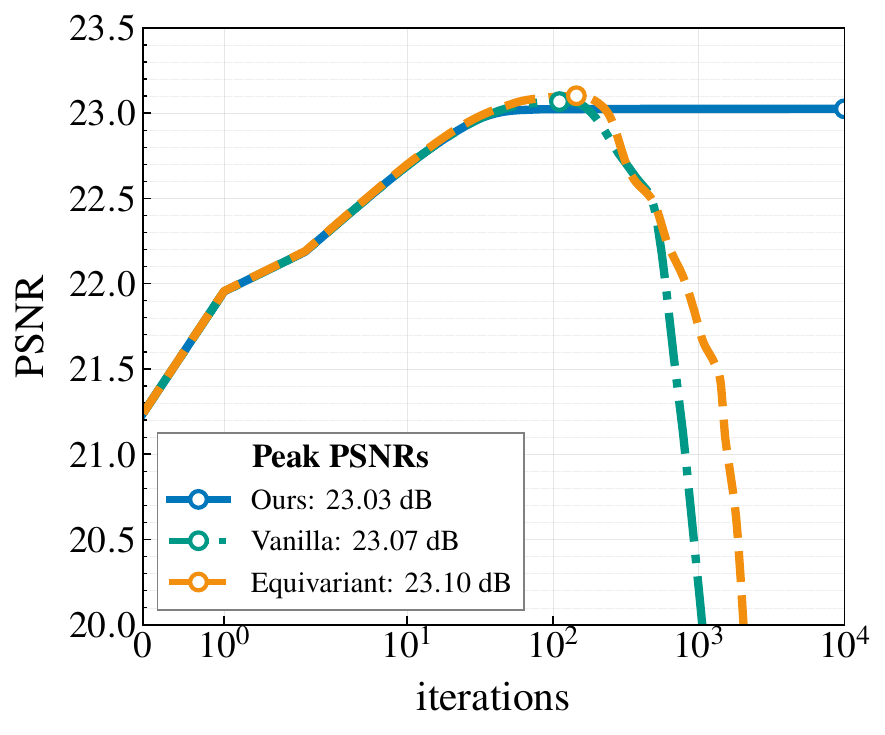}
    }\hfill
    \subfloat[PSNR plots for \Cref{fig:diffunet-leaves}.]{
    \label{fig:leaves-psnr}
    \includegraphics[width=\Rtwo]{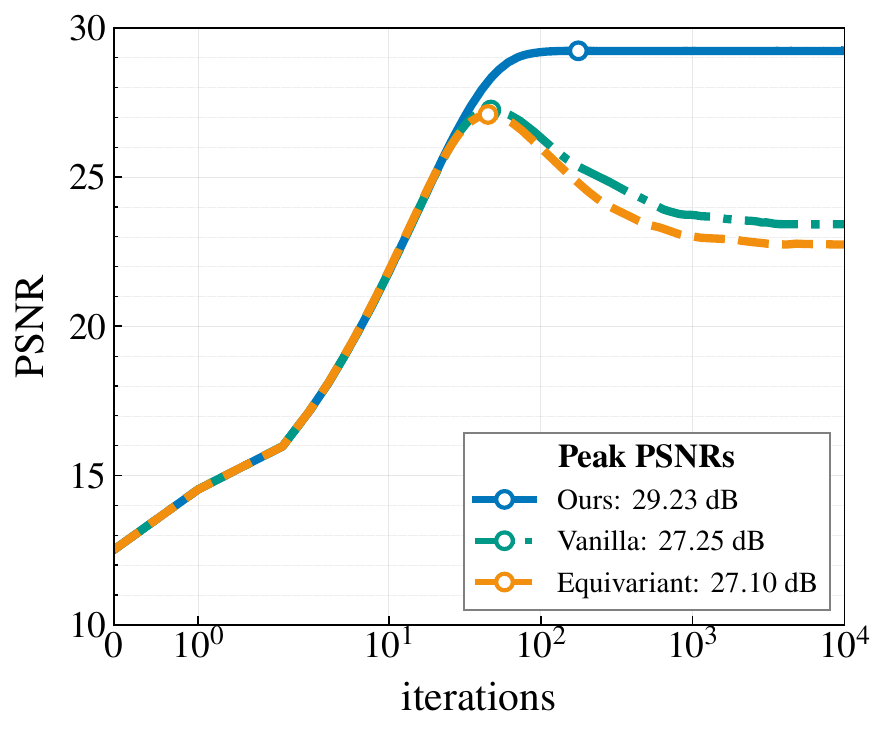}
    }\hfill
    \subfloat[PSNR plots for \Cref{fig:mmo-corals}.]{
    \label{fig:coral-psnr}
    \includegraphics[width=\Rtwo]{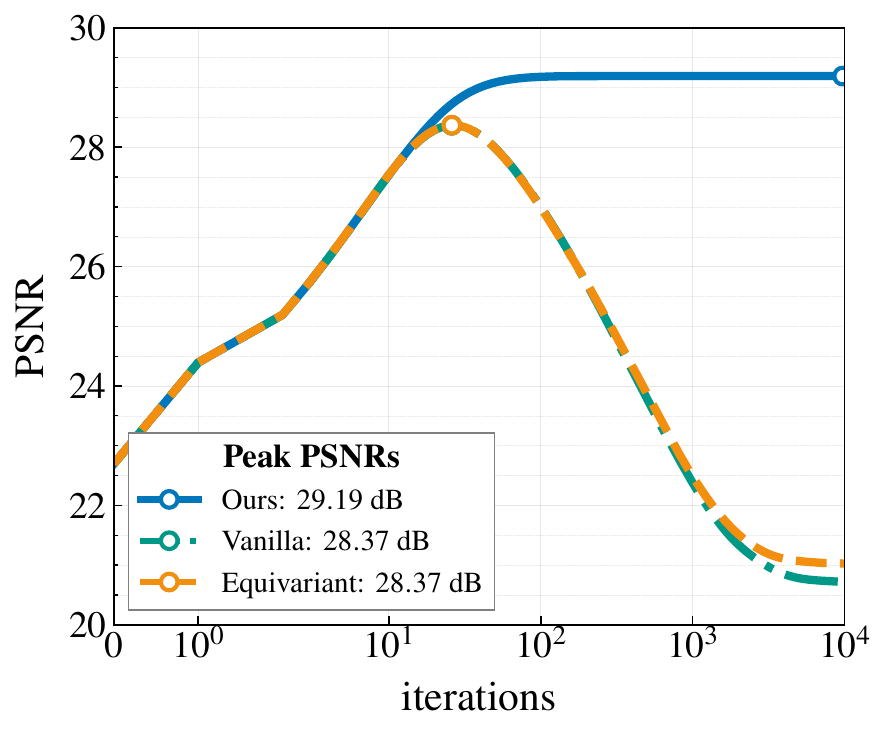}
    }\hfill
    \caption{Vanilla-PnP and Equivariant-PnP might reach high peak PSNR but quite often collapse, showing instability. Our algorithm remains stable and delivers high quality reconstructions.}
    \label{fig:psnr_plots_app}
\end{figure}

\begin{figure}
    \centering
     \captionsetup[subfloat]{labelformat=empty,labelsep=none,justification=centering}
    \subfloat[original]{
    \includegraphics[width=0.30\linewidth]{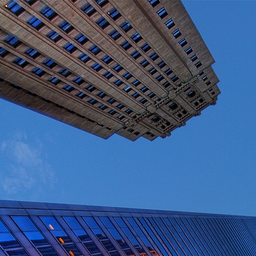}}\hspace{1em}
     \subfloat[blurry]{
    \includegraphics[width=0.30\linewidth]{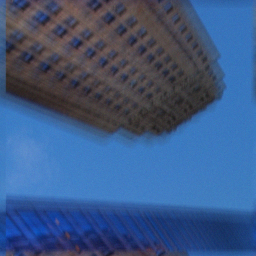}}\\
    
    \begin{sideways}{\parbox{1.2in}{\centering{Vanilla (Peak)}}}\end{sideways}
     \subfloat[DRUNet]{
    \includegraphics[width=0.23\columnwidth]{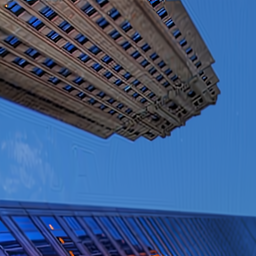}}\hfill
    \subfloat[DiffUNet]{
    \includegraphics[width=0.23\columnwidth]{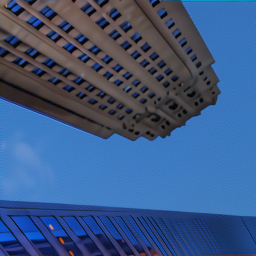}}\hfill
     \subfloat[Restormer]{
    \includegraphics[width=0.23\columnwidth]{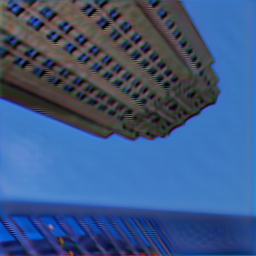}}\hfill
     \subfloat[SCUNet]{
    \includegraphics[width=0.23\columnwidth]{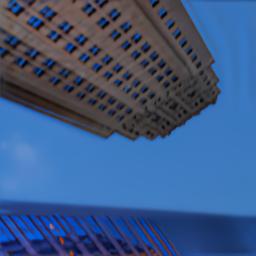}}\\
    
    \begin{sideways}{\parbox{1.2in}{\centering{Ours}}}\end{sideways}
     \subfloat[DRUNet]{
    \includegraphics[width=0.23\columnwidth]{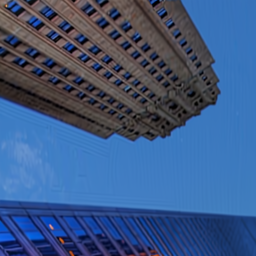}}\hfill
    \subfloat[DiffUNet]{
    \includegraphics[width=0.23\columnwidth]{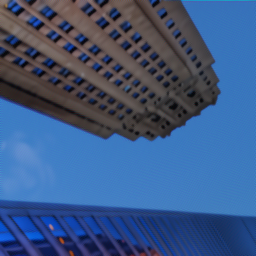}}\hfill
     \subfloat[Restormer]{
    \includegraphics[width=0.23\columnwidth]{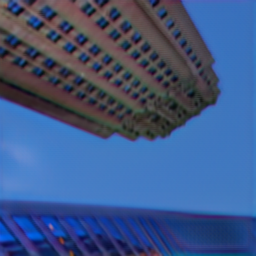}}\hfill
     \subfloat[SCUNet]{
    \includegraphics[width=0.23\columnwidth]{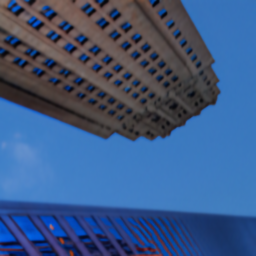}}\\

    \begin{sideways}{\parbox{1.2in}{\centering{Vanilla (Asym.)}}}\end{sideways}
     \subfloat[DRUNet (iter $600$)]{
    \includegraphics[width=0.23\columnwidth]{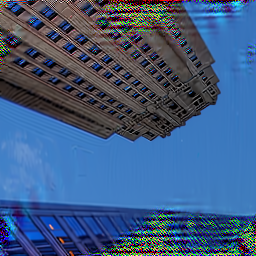}}\hfill
    \subfloat[DiffUNet (iter $=2000$)]{
    \includegraphics[width=0.23\columnwidth]{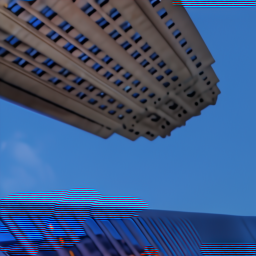}}\hfill
     \subfloat[Restormer (iter $=100$)]{
    \includegraphics[width=0.23\columnwidth]{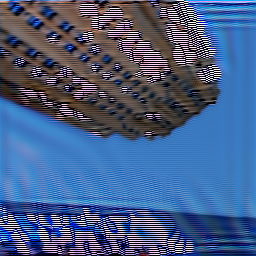}}\hfill
     \subfloat[SCUNet (iter $=100$)]{
    \includegraphics[width=0.23\columnwidth]{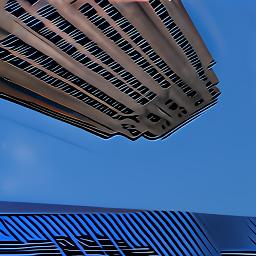}}
    
    \caption{Motion deblurring experiment on \textit{skyscraper} of Urban100~\cite{urban100} with kernel 7 from~\cite{levin_kernel_2009} and 0.01 additive noise. The reconstructions are using the DRUNet~\cite{zhang2021plug}, DiffUNet~\cite{choi2021conditioning}, Restormer~\cite{zamir2022restormer}, SCUNet~\cite{zhang2023practical}, denoisers in the PnP-HQS~\eqref{eq:pnphqs} framework. The PSNR(dB) values are: (b) $20.91$, (c) $28.82$, (d) $26.86$, (e) $24.44$, (f) $24.51$, (g) $28.31$, (h) $26.32$, (i) $24.92$, (j) $25.17$, (k) $5.52$ and (l) $19.75$, (m) $-\,8.44$ and (n) $16.29$.}
    \label{fig:vanilla-scyscrapper}
\end{figure}

\begin{figure}
    \centering
     \captionsetup[subfloat]{labelformat=empty,labelsep=none,justification=centering}
    \subfloat[original]{
    \includegraphics[width=0.30\linewidth]{figures/skyscraper/original_image.png}}\hspace{1em}
     \subfloat[blurry]{
    \includegraphics[width=0.30\linewidth]{figures/skyscraper/observed_image.png}}\\

    \begin{sideways}{\parbox{1.2in}{\centering{Equiv (Peak)}}}\end{sideways}
     \subfloat[DRUNet]{
    \includegraphics[width=0.23\columnwidth]{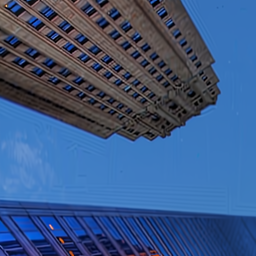}}\hfill
    \subfloat[DiffUNet]{
    \includegraphics[width=0.23\columnwidth]{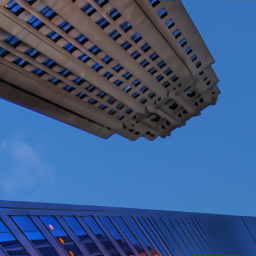}}\hfill
     \subfloat[Restormer]{
    \includegraphics[width=0.23\columnwidth]{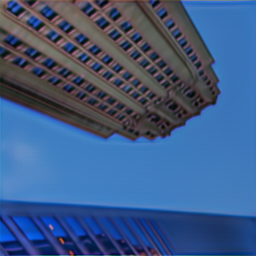}}\hfill
     \subfloat[SCUNet]{
    \includegraphics[width=0.23\columnwidth]{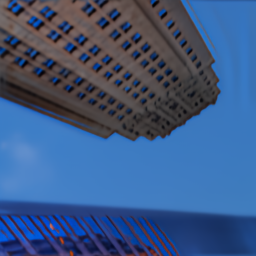}}\\
    
    \begin{sideways}{\parbox{1.2in}{\centering{Ours}}}\end{sideways}
     \subfloat[DRUNet]{
    \includegraphics[width=0.23\columnwidth]{figures/skyscraper/DRUNet/reconstructed_image.png}}\hfill
    \subfloat[DiffUNet]{
    \includegraphics[width=0.23\columnwidth]{figures/skyscraper/DiffUNet/reconstructed_image.png}}\hfill
     \subfloat[Restormer]{
    \includegraphics[width=0.23\columnwidth]{figures/skyscraper/Restormer/nlm_reconstructed_image.png}}\hfill
     \subfloat[SCUNet]{
    \includegraphics[width=0.23\columnwidth]{figures/skyscraper/SCUNet/reconstructed_image.png}}\\

    \begin{sideways}{\parbox{1.2in}{\centering{Equiv (Asym.)}}}\end{sideways}
     \subfloat[DRUNet (iter=1000)]{
    \includegraphics[width=0.23\columnwidth]{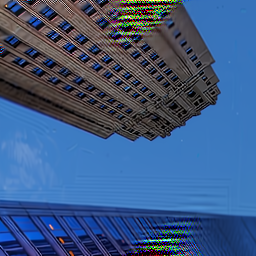}}\hfill
    \subfloat[DiffUNet (iter=2500)]{
    \includegraphics[width=0.23\columnwidth]{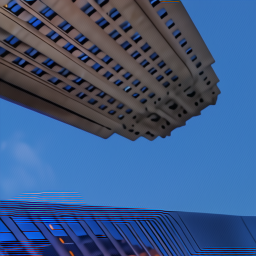}}\hfill
     \subfloat[Restormer (iter=200)]{
    \includegraphics[width=0.23\columnwidth]{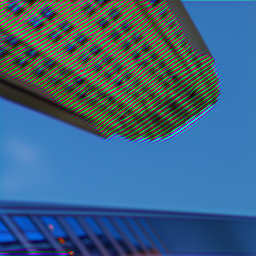}}\hfill
     \subfloat[SCUNet (iter=100)]{
    \includegraphics[width=0.23\columnwidth]{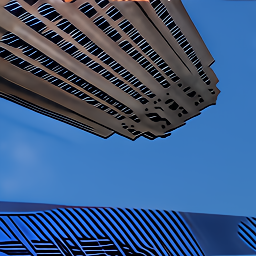}}
    
    \caption{Motion deblurring experiment on \textit{skyscraper} of Urban100~\cite{urban100} with kernel 7 from~\cite{levin_kernel_2009} and 0.01 additive noise. The reconstruction are using the DRUNet~\cite{zhang2021plug}, DiffUNet~\cite{choi2021conditioning}, Restormer~\cite{zamir2022restormer}, SCUNet~\cite{zhang2023practical}, denoisers in the PnP-HQS~\eqref{eq:pnphqs} framework. The PSNR(dB) values are: (b) $20.91$, (c) $28.95$, (d) $26.64$, (e) $25.46$, (f) $24.35$, (g) $28.31$, (h) $26.32$, (i) $24.92$, (j) $25.17$, (k) $7.95$, (l) $23.85$, (m) $17.46$ and (n) $16.86$.}
    \label{fig:equiv-skyscraper}
\end{figure}

%% file: gc.tex
\begin{figure}[ht]
    \centering
    \resizebox{0.9\columnwidth}{!}{
    \begin{tikzpicture}
        \begin{scope}[font=\small]
            \begin{axis}[%
                    table/col sep=comma,
                    width=12cm,
                    height=8cm,
                    ymax=40,
                    ymin=-5,
                    title={PSNR of $\x_k$},
                    xlabel={$k$},
                    ylabel={},
                    restrict y to domain=-5:40,
                    grid=both,
                    grid style={line width=.1pt, draw=gray!10},
                    major grid style={line width=.2pt,draw=gray!50},
                    minor tick num=5,
                    cycle list name=exotic,
                    legend entries={
                    $\x_0=\mathbf 0$\\
                    $\x_0=\mathbf 1$\\
                    $\x_0\sim \cU([0,1])$\\
                    $\x_0\sim \cN(\mathbf{0},\I)$\\$
                    \x_0= \A\bar{\x}+\n$ \\
                    $\x_0=\bar{\x}$\\
                    },
                    legend pos=south east,
                    legend cell align={left},
                ]
                \addplot[color=Fuchsia,line width=0.5pt,mark=o] table[x=k,y=zeros] {figures/gc/psnrs_ccd.csv};
                \coordinate (zeros) at (axis cs:0,3.35) {};
                \addplot[color=orange,line width=0.5pt,mark=square] table[x=k,y=ones] {figures/gc/psnrs_ccd.csv};
                \coordinate (ones) at (axis cs:0,5.75) {};
                \addplot[color=red,line width=0.5pt,mark=x] table[x=k,y=uniform] {figures/gc/psnrs_ccd.csv};
                \coordinate (uniform) at (axis cs:0,7.06) {};
                \addplot[color=green!75!black,line width=0.5pt,mark=star] table[x=k,y=normal] {figures/gc/psnrs_ccd.csv};
                \coordinate (normal) at (axis cs:0,-1.67) {};
                \addplot[color=violet,line width=0.5pt,mark=triangle] table[x=k,y=observed] {figures/gc/psnrs_ccd.csv};
                \coordinate (observed) at (axis cs:0,15.36) {};
                \addplot[color=Blue,line width=0.5pt,mark=diamond] table[x=k,y=clean] {figures/gc/psnrs_ccd.csv};
                \coordinate (gt) at (axis cs:1,38.80) {};
                \coordinate (converged) at (axis cs:70,27.49);
            \end{axis}
        \end{scope}
        \foreach \f/\c [count = \xi from 0] in {zeros/Fuchsia,ones/orange,normal/green!75!black,uniform/red,gt/blue,observed/violet}
        {
            \node[img,draw=\c,ultra thick] (i\f) at (-2,.4+\xi*1.15) {\includegraphics[width=1.1cm, height=1.1cm]{figures/gc/\f.png}
            };
            \draw[\c,ultra thick] (i\f.east) -- (\f);
        }
        \node[xshift=0cm,minimum height=1.75em,anchor=north] at (izeros.south) {$\x_0$};
        \node[img,draw=black,ultra thick] (iconverged) at (5.5,2) {\includegraphics[width=1.1cm, height=1.1cm]{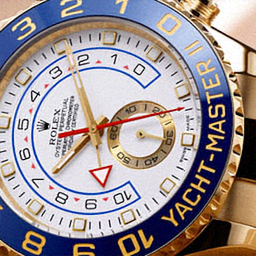}
        };
        \draw[black,ultra thick] (iconverged.north) -- (converged);
    \end{tikzpicture}
    }
    \caption{Independence of initialization of $\Rctr$ \eqref{eq:Rctr}.}
    \label{fig:gc}
\end{figure}